\newcommand{\shifttext}[2]{%
    \tikz[remember picture,overlay]\node[inner sep=0pt] (shifted) at (#1) {#2};%
}
\crefname{problem}{problem}{problems}
\crefname{claim}{claim}{claims}
\definecolor{burgundy}{RGB}{128, 0, 32}
\newcommand{\sing}{\mathrm{sing}}
\newcommand{\type}{\mathrm{type}}
\title{Testing classical properties from quantum data}
\author[1,2]{Matthias C. Caro}
\author[3]{Preksha Naik}
\author[3]{Joseph Slote}
\affil[1]{\small{
University of Warwick, Coventry, UK}}
\affil[2]{\small{Freie Universität Berlin, Berlin, Germany}}
\affil[3]{California Institute of Technology, Pasadena, CA, USA}
\date{}
\begin{document}

\maketitle
\vspace{-3em}
\begin{abstract}
    Many classes of Boolean functions can be tested much faster than they can be learned.
    However, this speedup tends to rely on \textit{query} access to the function $f$.
    When access is limited to random samples $(x,f(x))$---the \textit{passive} testing model and a natural setting for data science---testing can become much harder.
    Here we introduce \textit{quantum passive testing} as a quantum version of this ``data science scenario'': quantum algorithms that test properties of a function $f$ solely from quantum data in the form of copies of the function state $\ket{f}\propto \sum_x\ket{x,f(x)}$.
    Just like classical samples, function states are independent of the property of interest and can be collected well in advance.
    
    \smallskip
    \textbf{Quantum advantage in testing from data: an emerging theme.}
    For three well-established properties---monotonicity, symmetry, and triangle-freeness---we show passive quantum testers are unboundedly- or super-polynomially better than their classical passive testing counterparts, and in fact are competitive with classic \emph{query}-based testers in each case.
    Existing quantum testers for $k$-juntas and linearity can be interpreted as passive quantum testers too and exhibit the same phenomena.

    \smallskip
    \textbf{Inadequacy of Fourier sampling.}
    Our new testers use techniques beyond quantum Fourier sampling, and it turns out this is necessary: we show a certain class of bent functions can be tested from $\mc O(1)$ function states but has a sample complexity lower bound of $2^{\Omega(\sqrt{n})}$ for any tester relying exclusively on Fourier and classical samples.
    
    \smallskip
    \textbf{Classical queries vs. quantum data.}
    Our passive quantum testers are competitive with classical \textit{query}-based testers, but this isn't universal: we exhibit a testing problem that can be solved from $\mathcal{O}(1)$ classical queries but requires $\Omega(2^{n/2})$ function state copies.
    The \textsc{Forrelation} problem provides a separation of the same magnitude in the opposite direction, so we conclude that quantum data and classical queries are ``maximally incomparable'' resources for testing.
    
    \smallskip
    \textbf{Towards lower bounds.} We also begin the study of \emph{lower bounds} for testing from quantum data.
    For quantum monotonicity testing, we prove that the ensembles of \cite{Goldreich2000,black_nearly_2023}, which give exponential lower bounds for classical sample-based testing, do not yield any nontrivial lower bounds for testing from quantum data.
    New insights specific to quantum data will be required for proving copy complexity lower bounds for testing in this model.
\end{abstract}

\newpage

\tableofcontents

\newpage

\section{Introduction}\label{section:introduction}

In property testing we consider a subset $\mc P$ of the set of all Boolean functions $f:\{0,1\}^n\to\{0,1\}$ and aim to find fast algorithms for deciding (with high probability) whether an unknown function $f$ has property $\mc P$ or is $\epsilon$-far from having property $\mc P$; that is, we wish to decide between
\[\text{Case (\textit{i})}\quad f\in \mc P\quad\qquad \text{or}\quad\qquad \text{Case (\textit{ii})}\quad\min_{g\in \mc P}\|f-g\|_1\geq \epsilon\,,\]
promised one of these is the case.
Here $\|f-g\|_1=\Pr_{x\sim \{0,1\}^n}[f(x)\neq g(x)]$ is the $L^1$ distance.
Property testing began in the context of program checking \cite{blum1990self, rubinfeld1996robust}, where it was shown that only $\mc O(1)$ queries to
$f$ are needed to determine (with high probability) whether $f$ is linear or is $\Theta(1)$-far from linear---which compares very favorably to the $\Omega(n)$ query lower bound for \emph{learning} linear functions.
The extreme query efficiency of property testing algorithms soon after played a critical role in interactive proofs and PCP theorems \cite{arora1998probabilistic, arora1998proof, dinur2007pcp}.
Since then property testing has developed into a rich landscape of access models, complexity regimes, and separations \cite{fischer2004art,Rubinfeld2007,DBLP:journals/fttcs/Ron09,DBLP:conf/propertytesting/Sudan10,Goldreich_2017}.

One of the promises of this broad view of property testing, identified very early on \cite{goldreich1998property}, is its potential in data analysis and machine learning: one could run inexpensive property testing algorithms to guide the choice of which long-running 
learning algorithm to use.
But there is an unfortunate catch: the dramatic complexity advantage of testing over learning typically disappears in the natural access model for data analysis and machine learning, where fresh queries to $f$ cannot be made and only a limited dataset $\{(x_j,f(x_j)\}_j$ of random samples from $f$ is available.
This setting is known as \emph{passive} or \emph{sample-based testing} \cite{goldreich1998property}.

\begin{table}[t]
    \centering
    \makebox[\textwidth][c]{
    \renewcommand{\arraystretch}{2}
    \begin{tabular}{ |l|c|c|c|c|c|} 
     \cline{2-6}
     \multicolumn{1}{c|}{}&  \multicolumn{2}{c|}{Quantum}&  \multicolumn{3}{c|}{Classical}\\
      \hline
     Property & Queries & Examples & Samples & Queries & \makecell{Learning\\\footnotesize(from queries)}\\
     \hline
     
     $k$-Juntas &
     \makecell{$\widetilde{\mc O}(\sqrt{k})$\\\shifttext{1.5ex,1ex}{\tiny\cite{ambainis_efficient_2015}}} &
     \makecell{$\mathcal{O}(k)$\\\shifttext{3.5ex,1ex}{\tiny\cite{atici2007quantum}}} &
     \makecell{$\Omega(2^{k/2}+k\log n)$\\\shifttext{6ex,1ex}{\tiny\cite{alon2016active}}} &
     \makecell{$\widetilde{\Theta}(k)$ \\\shifttext{2.1ex,1ex}{\tiny\cite{blais_testing_2009,chockler_lower_2004}}} & \makecell{$\Omega(2^{k}+k\log n)$\\\shifttext{5.3ex,1ex}{\tiny\cite{alon2016active}}}\\
     
     Linearity &
     \makecell{$\mathcal{O}(1)$\\\phantom{\tiny[]}}&
     \makecell{$\Theta(1)$ \\\shifttext{3.6ex,1ex}{\tiny\cite{bernstein1997complexity}}}&
     \makecell{$n+\Theta(1)$ \\\shifttext{6ex,1ex}{\tiny\cite{alon2016active}}}&
     \makecell{$\Theta(1)$\\\shifttext{4ex,1ex}{\tiny\cite{blum1990self}}}& \makecell{$n+\Theta(1)$ \\\shifttext{5.3ex,1ex}{\tiny\cite{alon2016active}}}\\
    
     $\F_2$ degree-$d$ &
     \makecell{$\mathcal{O}(2^d)$\\\phantom{\tiny[]}} &
     \makecell{$\mathcal{O}(n^{d-1})$\\\shifttext{3ex,1ex}{\tiny \cite{arunachalam2023optimal}}}&
     \makecell{$\Theta(n^d)$\\\shifttext{6.1ex,1ex}{\tiny\cite{alon2016active}}} &
     \makecell{$\Theta(2^d)$\\\phantom{\tiny\cite{alon2003testing, bhattacharyya2010optimal}}\shifttext{-4.8ex,1ex}{\tiny\cite{alon2003testing, bhattacharyya2010optimal}}} &
     \makecell{$\Theta(n^d)$\\\shifttext{5.3ex,1ex}{\tiny\cite{alon2016active}}}\\
     
     Monotonicity &
     \makecell{$\widetilde{\mc O}(n^{1/4})$ \\\shifttext{2.5ex,1ex}{\tiny\cite{Belovs2015}}}&
     \makecell{$\tilde{\mc O}(n^2)$ \\ 
     \shifttext{2.0ex,1ex}{\tiny[\Cref{theorem:quantum-monotonicity-testing}]}}&
     \makecell{$2^{\Omega(\sqrt{n})}$\\\shifttext{6.8ex,1ex}{\tiny\cite{black_nearly_2023}}}&
     \makecell{$\widetilde{\mc O}(\sqrt{n})$\\\shifttext{3.9ex,1ex}{\tiny\cite{DBLP:journals/siamcomp/KhotMS18}}} &
     \makecell{$2^{\Omega(\sqrt{n})}$\\\shifttext{5.6ex,1ex}{\tiny\cite{DBLP:conf/focs/BlumBL98}}}\\

     Symmetry &
     \makecell{$\mathcal{O}(1)$\\\phantom{\tiny[]}} &
     \makecell{$\mathcal{O}(1)$\\\shifttext{2.0ex,1ex}{\tiny[\Cref{theorem:quantum-symmetry-testing}]}}
     &
     \makecell{$\Theta (n^{1/4})$\\\shifttext{6ex,1ex}{\tiny \cite{alon2016active}}}
     &
     \makecell{$\mc O(1)$\\\shifttext{3.6ex,1ex}{\tiny\cite{blais2015partially}}} &
     \makecell{$\Theta (n^{1/2})$\\\shifttext{5.3ex,1ex}{\tiny \cite{alon2016active}}}\\

     Triangle-freeness &
     \makecell{$\mathcal{O}(1)$\\\phantom{\tiny[]}} &
     \makecell{$\mathcal{O}(1)$\\\shifttext{1.6ex,1ex}{\tiny[\Cref{theorem:quantum-triangle-freeness-testing}]}}
     &
     \makecell{$\Omega(2^{n/3})$\\\shifttext{5ex,1ex}{\tiny \Cref{remark:triangle-freeness-lower-bound}}}
     &
     \makecell{$\mc O(1)$\\\shifttext{3.6ex,1ex}{\tiny\cite{blais2015partially}}} &
     \makecell{--\\ 
     \phantom{\tiny[]}}\\
    \cline{1-2} \cline{3-6}
    \end{tabular}
    }
    \renewcommand{\arraystretch}{1}
    \vspace{1em}
    \caption{\textbf{Upper and lower bounds for testing and learning in various access models.}
    All bounds are given for (a sufficiently small) constant $\varepsilon>0$.
    Bounds that are given without a reference follow trivially from other bounds in the table.
    }
    \label{tab:q-vs-c}
\end{table}

Indeed, many results in passive testing are lower bounds that grow with $n$, unlike the algorithms available in query-based testing: compare among the ``Classical'' columns in \Cref{tab:q-vs-c}.
In fact, Blais and Yoshida \cite{blaisyoshida2019} showed that if a Boolean property can be tested from $\mc O(1)$ random samples, then the property is of a rather restricted kind.\footnote{In particular, such a property is only a function of the conditional expected values $\E_x[f(x)|x\in S_j]$ of $f$ for sets $S_j$ forming a constant-cardinality partition of the hypercube, $S_1\sqcup\cdots\sqcup S_{\mc O(1)}=\{0,1\}^n$.}

\begin{remark}This is not to say that the classical passive testing model is \textit{uninteresting}; there are many exciting positive results for the model, falling under the umbrella of \emph{sublinear} algorithms.
For example, the line of work \cite{fischerlachishvasudev2015, goldreich_sample-based_2016, DBLP:journals/siamcomp/DallAgnolGL23} showed that the existence of certain constant-query testers implies sample-based algorithms with sublinear dependence on $n$.
But passive testers still cannot compete with query-based testing for many important problems, as the lower bounds in Table \ref{tab:q-vs-c} attest.
\end{remark}

How could we recover large testing speedups in the context of passive testing from data?
In the present work we advocate for quantum computing (and ``quantum datasets'') as an answer.
Viewed from the right perspective, early results in quantum complexity theory
actually demonstrate that quantum data---in the form \textit{quantum examples}, or copies of the \emph{function state} $\ket{f}:=2^{-n/2}\sum_x\ket{x,f(x)}$---can sometimes suffice for highly efficient property testing.
For example, the Bernstein-Vazirani algorithm, usually understood as an $\mc O(1)$ quantum \textit{query} algorithm, really only needs $\mc O(1)$ \textit{function states} to test for linearity \cite{bernstein1997complexity} (vs. $\Omega(n)$ classical samples), and the quantum $k$-junta tester of At{\i}c{\i} and Servedio \cite{atici2007quantum} also requires only $\mc O(k)$ quantum examples (\textit{c.f.} the lower bound of $\Omega(2^{k/2} +k\log n)$ classical samples).
The present work seeks to establish \textit{passive quantum testing} as a fundamental model of property testing by making progress on the question:

\begin{quote}
\centering
    \textit{What is the extent of quantum advantage in testing classical properties from data?}
\end{quote}

Before this work it was not fully clear whether quantum data in the form of quantum examples can lead to testing speedups beyond linearity and $k$-junta-like properties (such as low Fourier degree):
both the Bernstein--Vazirani algorithm and the At{\i}c{\i}--Servedio junta tester rely only on \emph{quantum Fourier sampling} \cite{bernstein1997complexity}, a quantum subroutine which, given copies of $\ket{f}$, returns the label $S\subseteq[n]$ of a Fourier character with probability $\widehat{f}(S)^2$.
Despite the success of quantum Fourier sampling, its utility is restricted to properties that are ``plainly legible'' from the Fourier spectrum.\footnote{As an example of a property \textit{not} detectable from the Fourier spectrum, consider the task of testing if $f$ is a quadratic $\F_2$ polynomial.
It is well-known (see, \textit{e.g.}, \cite[Claim 2.4]{hatami2019higher}) that degree-2 $\F_2$ polynomials can have Fourier coefficients with uniformly exponentially-small magnitudes, so Fourier sampling is not directly useful for this task. Our \Cref{inf-thm:FS-not-sufficient-for-mm-new} below serves as another example.} 

In this work we expand the list of properties with efficient passive quantum testers, including one which provably requires a non-Fourier sampling approach.
We also compare the power of quantum data to that of classical queries, finding that they are (essentially) maximally incomparable as resources for testing.
Finally, we begin a study of lower bounds for testing monotonicity from quantum data by showing that the ensembles leading to exponential lower bounds for classical sample-based testing yield no nontrivial lower bounds for quantum data-based testing. 
In the remainder of the introduction we explore each of these points in greater detail.

\begin{remark}[Where might quantum data appear?]
While from the perspective of complexity theory quantum data leads to a natural counterpart to classical passive testing, and demonstrates a ``data-based'' quantum advantage, the reader may still feel it is not entirely natural from a practical or ``physical'' standpoint.
To the contrary, we contend that quantum data may be a useful component of emerging quantum technologies.
We briefly list some scenarios where quantum data may be a natural object.
\begin{itemize}
    \item Suppose a researcher has time-limited query access to a data-generating process, but does not yet know what questions about the process she will eventually ask.
    She may prefer to store data in quantum memory rather than classical, to broaden the range of questions that can be answered \textit{post hoc}.
    \item In high-latency and bandwidth-limited scenarios, back-and-forth (adaptive, query-based) interaction is not feasible, for example in space exploration.
    If a space probe departing Earth shared some entanglement with a ground station, it could later in its journey encode observations into quantum data and teleport the resulting states back to Earth.
    In such a scenario, the advantages of quantum data could lead to significant speedups in research and analysis.
    \item Rather than sharing the source code for a program $f$, a company may prefer to share a quantum data encoding of it as a form of copy protection---provided the function state is sufficient for the intended application.
\end{itemize}
\end{remark}

\subsection{Quantum advantage in testing from data: an emerging theme} \label{subsection:algorithms-for-properties}
Our first contribution is to expand the list of properties exhibiting quantum advantage in testing from data.
Our algorithms work by finding new quantum ways to exploit insights from prior work in classical testing.
See \Cref{section:passive-quantum-testing-upper-bounds} for proofs.

\bigskip
\noindent\textbf{Symmetry testing.}
A Boolean function is \emph{symmetric} if $f(x)= f(y)$ when $x$ is a permutation of $y$.
We confirm that projecting $\ket{f}$ onto the symmetric subspace suffices for an $\mc O(1)$-copy quantum test.
For comparison, classical passive symmetry testing requires $\Omega(n^{1/4})$ samples \cite{alon2016active}.

\bigskip
\noindent\textbf{Monotonicity testing.}
A Boolean function $f$ is \emph{monotone} if $f(x)\leq f(y)$ when $x\prec y$ in the standard partial order $\prec$ on the hypercube.
Monotonicity has been of central importance in the classical property testing literature \cite{Goldreich2000, Belovs2015, DBLP:journals/siamcomp/KhotMS18}.
We give a quantum algorithm that tests monotonicity with $\tilde{\mc  O}(n^2)$ copies of the function state for $f$, in comparison to the lower bound of $2^{\Omega(\sqrt{n})}$ samples for classical passive testing \cite{black_nearly_2023}.

The algorithm appeals to a characterization of monotonicity in terms of the Fourier spectrum of $f$.
In particular, let $\epsilon$ be the $L^1$ distance between a Boolean function $f$ and the set of all monotone functions.
Then we may relate $\epsilon$ to the Fourier spectrum of $f$ via
\[2\epsilon\;\leq\; \mb I[f] - \textstyle\sum_i\widehat{f}(\{i\})\;\leq\; 4\epsilon n\,.\]
Here $\mb I[f]$ is the total influence of $f$ and is equal to the expected size of a subset $S\subseteq[n]$ sampled according to the Fourier distribution of $f$.
$\mb I[f]$ can thus be easily estimated with Fourier sampling, and the Fourier coefficients $\widehat{f}(\{i\})$ estimated with classical samples.
The bounds above follow from a reinterpretation of the ``pair tester'' characterization of monotonicity \cite{Goldreich2000}, which was not originally Fourier-based. 

\bigskip
\noindent\textbf{Triangle-freeness.} A Boolean function $f$ is \emph{triangle-free} if there are no $x,y$ such that $(x,y,x+y)$ form a \emph{triangle}: $f(x)=f(y)=f(x+y)=1$.
We give a passive quantum triangle-freeness tester that uses only $\mathcal{O}(1)$ copies of $\ket{f}$.
This is to be contrasted with the $\Omega(2^{n/3})$ samples required classically.\footnote{This lower bound, the proof of which we outline in \Cref{subsection;passive-quantum-trianglefreeness-testing}, arises from the requirement of seeing a linearly dependent triple $(x,y,x+y)$ among the sampled inputs.}

It is known that to test triangle-freeness, it suffices to estimate the probability that $(x,y,x+y)$ forms a triangle for uniformly random $x,y$ \cite{fox2011new, hatami2016arithmetic}.
Our test estimates this probability by repeating the following subroutine.
First, measuring copies of $\ket{f}$ in the computational basis allows us to find a uniformly random $y\in f^{-1}(1)$.
Then by measuring the output register of copies of $\ket{f}$, we obtain copies of the entire $1$-preimage state
\[\Ket{f^{-1}(1)}\propto\sum_{x\in\{0,1\}^n,\,f(x)=1}\ket{x}\,.\]
Applying the unitary transformation $U_y\ket{x}=\ket{x+y}$ then allows us to transform copies of $\Ket{f^{-1}(1)}$ into copies of 
\[\ket{f^{-1}(1)+y}\propto \sum_{x\in\{0,1\}^n,\,f(x+y)=1}\ket{x}\, .\]
The overlap $|\braket{f^{-1}(1)|f^{-1}(1)+y}|=\Pr_{x\sim\{0,1\}^n}[f(x)=f(x+y)=1]$ can now be estimated with a SWAP test \cite{buhrman2001quantum}.


\subsection{Fourier sampling does not suffice}
Given that Fourier sampling is sufficient to test linearity \cite{bernstein1997complexity}, $k$-juntas \cite{atici2007quantum, ambainis_efficient_2015}, and (as shown above) monotonicity, one might wonder whether Fourier sampling is ``all that quantum data is good for'' in the context of property testing Boolean functions. 
To the contrary, we exhibit a property for which a Fourier sampling-based approach requires super-polynomially more data than the optimal passive quantum tester.

\begin{theorem}\label{inf-thm:FS-not-sufficient-for-mm-new}
    There is a property $\mathcal{P}$ of Boolean functions on $2n$ bits such that:
    \begin{itemize}
        \item[(i)] There is no algorithm for testing $\mathcal{P}$ that uses $2^{o(\sqrt{n})}$ classical samples and any number of Fourier samples.
        \item[(ii)] There is an efficient quantum algorithm for testing $\mathcal{P}$ from $\mathcal{O}(1)$ copies of $\ket{f}$.
    \end{itemize}
\end{theorem}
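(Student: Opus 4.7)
The plan is to let $\mathcal{P}$ be a family of bent functions on $\{0,1\}^{2n}$, since bent functions have uniformly flat Fourier spectra ($|\widehat{f}(S)|^2 = 2^{-2n}$ for every $S$) and thus make Fourier samples independent of the specific function. The theorem label ``mm'' together with the $2n$-bit domain points to a Maiorana--McFarland construction: take $\mathcal{P}$ to consist of the bent functions $f_\pi(x,y) = \langle \pi(x), y\rangle \oplus g(x)$ with permutation $\pi$ drawn from a designated ``structured'' class $\Pi_{\mathrm{yes}}$, with no-instances using $\pi$ from a far ``unstructured'' class $\Pi_{\mathrm{no}}$. The separation will hinge on $\pi$ being a hidden parameter that is invisible to Fourier sampling yet recoverable from $\ket{f}$ by a non-Fourier unitary.

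For part (i), the first step is to note that a Fourier sample of $\ket{f}$ produces a uniformly random $S\subseteq [2n]$ whenever $f$ is bent, independently of $f$, and can therefore be simulated internally by the tester. Combining this with any classical sample transcript, the Fourier samples carry no information and the problem reduces to purely classical passive testing. I would then prove the $2^{\Omega(\sqrt n)}$ classical lower bound by the standard two-distribution indistinguishability method: define ensembles $\mathcal{D}_{\mathrm{yes}}, \mathcal{D}_{\mathrm{no}}$ supported on $\Pi_{\mathrm{yes}},\Pi_{\mathrm{no}}$ and show via a second-moment or coupling argument that the induced $m$-sample distributions have vanishing total variation for $m = 2^{o(\sqrt n)}$. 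The distinguishing signal should be the appearance of a structured $k$-tuple of sampled inputs (encoding a linear relation for $\pi$), whose expected count is at most $m^k/2^{\Omega(n)}$, forcing $m$ to be at least exponential in $\sqrt n$.

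For part (ii), the quantum tester applies a tailored non-Fourier operation to $\ket{f}$. For the Maiorana--McFarland form, applying $H^{\otimes n}$ only to the $y$-register (after a standard phase-kickback conversion of the output qubit) concentrates amplitude on inputs of the form $(x, \pi(x))$, so measuring yields a uniformly random pair $(x,\pi(x))$. A constant number of such pairs suffices to decide whether $\pi \in \Pi_{\mathrm{yes}}$: one can either check a small algebraic relation among the sampled pairs or perform a $\mathrm{SWAP}$ test between two suitably processed copies of $\ket{f}$.

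The main obstacle will be constructing $\mathcal{D}_{\mathrm{yes}}, \mathcal{D}_{\mathrm{no}}$ so that simultaneously (a) both ensembles live inside the bent-function class (so the Fourier-uselessness argument applies cleanly on both sides), (b) they are $\epsilon$-far in $L^1$ distance, and (c) they are statistically indistinguishable from $2^{o(\sqrt n)}$ classical samples. Any slack---for instance, if the no-side contains non-bent functions---would let Fourier samples leak information and break the lower bound, while requiring farness between the two sides constrains how correlated $\Pi_{\mathrm{yes}}$ and $\Pi_{\mathrm{no}}$ can be. Meeting all three constraints while still admitting the simple quantum decoding of part (ii) is the core technical challenge, and I expect it to be handled by choosing $\Pi_{\mathrm{yes}}$ and $\Pi_{\mathrm{no}}$ so that the leading-order distinguishing statistic is a pairwise event of probability $O(m^2 \cdot 2^{-n})$, yielding the stated $2^{\Omega(\sqrt n)}$ threshold with room to spare.
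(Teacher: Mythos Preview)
Your high-level plan---use Maiorana--McFarland bent functions so Fourier samples are information-free, prove a classical sample lower bound by two-ensemble indistinguishability, and test quantumly by a non-Fourier unitary on $\ket{f}$---matches the paper. But your concrete instantiation diverges and is left incomplete at exactly the point you flag as ``the core technical challenge.'' You embed the hidden structure in a \emph{permutation} $\pi$ and then need to manufacture classes $\Pi_{\mathrm{yes}},\Pi_{\mathrm{no}}$ satisfying bentness, farness, and classical indistinguishability simultaneously; you never construct them, and it is not obvious how to do so while keeping the quantum tester a constant-copy check on a few $(x,\pi(x))$ pairs.

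The paper sidesteps this entirely by taking $\mathcal{P}=\mathrm{MM}_n$ itself, i.e., all $f_h(x,y)=\langle x,y\rangle+h(x)$ (identity permutation, arbitrary $h$), and letting the no-ensemble be the ``dual'' functions $\langle x,y\rangle + h(y)$ with $h$ of small bias. Both sides are bent, so Fourier samples are useless; farness from $\mathrm{MM}_n$ follows because $\E_{x,y}(-1)^{h(x)+h'(y)}=\mathrm{bias}(h)\mathrm{bias}(h')$ is tiny. The classical lower bound is then a one-line collision argument: with fewer than $2^{\Omega(n)}$ samples, all sampled $x^{(r)}$'s and $y^{(r)}$'s are distinct with high probability, and conditioned on that event the labels $f(x^{(r)},y^{(r)})$ are i.i.d.\ uniform bits under either ensemble (since $h$ is uniformly random). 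The quantum tester is equally clean: apply the unitary $\ket{x,y,b}\mapsto\ket{x,y,b\oplus\langle x,y\rangle}$ to $\ket{f}$, turning it into the function state for $\tilde f(x,y)=f(x,y)\oplus\langle x,y\rangle$, then Fourier sample and check whether the support touches coordinates $n+1,\ldots,2n$. If $f\in\mathrm{MM}_n$ then $\tilde f$ depends only on $x$; if $f$ is far from $\mathrm{MM}_n$ then $\tilde f$ has $\Omega(1)$ Fourier mass on sets intersecting the $y$-block. So the hidden parameter is not a permutation but simply \emph{which half of the variables the residual $h$ lives on}, and this is what makes all three constraints (a)--(c) fall out immediately rather than requiring a bespoke construction.
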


\Cref{inf-thm:FS-not-sufficient-for-mm-new} is proved in \Cref{section:FS-not-enough-for-mm} as \Cref{thm:fourier-sampling-not-sufficient-for-mm,thm:quantum-mm-tester}.
The property $\mathcal{P}$ is the \emph{Maiorana-McFarland (MM)} class of bent functions, which take the form $f(x,y)=\langle x,y\rangle + h(x)$ for $h$ any $n$-bit Boolean function (see \textit{e.g.,} \cite{carlet2016four} for more).

To prove (\emph{i}), we show a special subset $F_\text{yes}$ of MM functions with far-from-constant $h$ are indistinguishable from the set $F_\text{no}$ of their ``duals,'' defined by replacing $h(x)$ with $h(y)$.
Every function in both these sets is \textit{bent}---\textit{i.e.}, all Fourier coefficients have equal magnitude---so Fourier samples cannot help.
It thus suffices to lower bound the number of classical samples needed to solve the distinguishing problem.
The set $F_\text{yes}$ is chosen so that for a uniformly-random $\langle x,y\rangle + h(x)$ from $F_\text{yes}$, the distribution of truth tables of $h$ is $2^{c\sqrt{n}}$-wise independent.
This means that for any number of samples less than $2^{c\sqrt{n}}$, except in the very unlikely event that there is a collision among the sampled points $\{(x^{(i)},y^{(i)})\}_i$, the distribution of values $f(x^{(i)},y^{(i)})$ will look uniformly random, regardless of whether $f$ is sampled uniformly from $F_\text{yes}$ or $F_\text{no}$---and so distinguishing is impossible.
The truth tables for $h$ are constructed from certain affine shifts of Reed--Muller codewords.

As for item (\textit{ii}), the passive quantum tester for this property first applies the unitary $U$ defined by $\ket{x,y,b}\mapsto\ket{x,y,b\oplus \langle x,y\rangle}$ to $\ket{f}$.
If $f$ is a MM function the result should be $h$, a function depending only on the first $n$ variables, while if $\ket{f}$ is far from MM functions, it will have noticeable dependence on coordinates $n+1,\ldots, 2n$.
This dependence can be measured by Fourier-sampling the transformed state.

\subsection{Comparing access models}

Quantum data is always at least as good as classical samples, and from \Cref{tab:q-vs-c} we see that for a growing list of properties, testing from quantum data is competitive with testing from classical \textit{queries}.
In fact, quantum data can be vastly more powerful than classical queries for testing.
An extremal example of this is the \textsc{Forrelation} problem, which can be tested from $\mc O(1)$ function state copies but requires $\Omega(2^{n/2})$ classical queries \cite{aaronson2015forrelation}.

Conversely, one may wonder to what extent classical queries may outperform quantum data for property testing.
An answer is not so obvious.
Although classical queries enable direct access to $f(x)$ at any point $x$ of the algorithm's choosing---a powerful advantage over quantum data---it is not so clear whether this can lead to a separation for property testing.
Recall that for a property testing problem, \textit{yes} and \textit{no} instances must be $\Omega(1)$-far in $L^1$ distance.
So to create a hard property for quantum data-based testers, one must find two sets of functions which pairwise differ on a \textit{constant fraction} of the locations in their truth tables, yet still remain hard to distinguish by a quantum algorithm operating on copies of their function states.

We succeed in ``hiding'' these large differences and identify a testing problem for which classical queries have a dramatic advantage over quantum data.

\begin{theorem}\label{inf-thm:separating-passive-quantum-from-query-classical-testing-new}
    There exists a testing task ($3$-fold intersection detection) that can be accomplished with $\mc O(1)$ classical queries but requires $\Omega(2^{n/2})$ copies for quantum testing from data.
\end{theorem}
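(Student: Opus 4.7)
The natural reading of ``$3$-fold intersection detection'' is that $f: \{0,1\}^{n+2} \to \{0,1\}$ encodes three indicator functions $f_i(x):=f(i,x)$ of subsets $A_i \subseteq \{0,1\}^n$, with the yes-case being $|A_1 \cap A_2 \cap A_3| \geq \delta \cdot 2^n$ for some constant $\delta>0$, and the no-case being $A_1 \cap A_2 \cap A_3 = \emptyset$ (with each $|A_i|$ of a fixed constant density). These cases are $\Omega(\delta)$-far in $L^1$. The classical query upper bound is immediate: at a uniformly random $x$, query $f(i, x)$ for each $i \in [3]$; the event ``all three outputs equal $1$'' has probability at least $\delta$ under a yes-instance and exactly $0$ under a no-instance, so $\mc O(1/\delta)=\mc O(1)$ queries distinguish the two.

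For the $\Omega(2^{n/2})$ quantum lower bound I would apply Yao's minimax with distributions $\mu_\text{yes}, \mu_\text{no}$ chosen so that all moments of measurements on a single copy of $\ket{f}$ (and ideally of any constant number of copies) match between the two distributions, with the yes/no distinction living entirely in joint statistics that become visible only at the birthday threshold. Concretely, the target inequality is $\bigl\|\mathbb{E}_{\mu_\text{yes}}[\ket{f}\!\bra{f}^{\otimes T}] - \mathbb{E}_{\mu_\text{no}}[\ket{f}\!\bra{f}^{\otimes T}]\bigr\|_1 = o(1)$ for $T = o(2^{n/2})$, which by Helstrom's theorem rules out any tester using $o(2^{n/2})$ copies of $\ket{f}$. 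The $2^{n/2}$ threshold itself suggests the hard instances should be ones whose distinguishing signature only appears upon observing a pairwise $x$-collision among the samples, an event of probability $\Theta(T^2/2^n)$.

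The main obstacle is constructing $\mu_\text{yes}, \mu_\text{no}$ that actually match all single-copy moments: a naive choice (i.i.d.\ Bernoulli-$1/3$ functions in yes, a uniform $3$-partition of $\{0,1\}^n$ in no) matches single-coordinate marginals but not the pairwise correlations at a common $x$, so single-copy measurements of $\ket{f}$ already distinguish them with constant advantage. A better construction would enforce moment-matching via algebraic structure---for instance, drawing $(A_1, A_2, A_3)$ from a $2$-wise-independent ensemble conditional on the triple-intersection constraint---so that all $1$- and $2$-copy measurements of $\ket{f}$ are identical in law between yes and no, and the distinction only appears in higher-order joint statistics whose birthday threshold is $T \sim 2^{n/2}$. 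Rigorously converting such moment-matching into a trace-distance bound is the crux of the argument; I would attempt this by expanding the average density matrices in the computational basis, showing matrix entries coincide off the ``collision support'' (configurations where some $x$-coordinate is repeated), and then exploiting the $\mathrm{Sym}(\{0,1\}^n)$-symmetry of both distributions to block-diagonalize the density matrices and control the residual contribution on the collision support by a union bound yielding $O(T^2/2^n) = o(1)$.
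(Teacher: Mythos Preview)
Your plan is essentially the paper's proof, with the right architecture identified but the concrete hard distribution left as a placeholder. The paper's no-ensemble is precisely the kind of $2$-wise-independent family you anticipate: take $(A_1,A_2,A_3)=(A,B,A\,\Delta\,B)$ with $A,B$ i.i.d.\ uniform subsets of $\{0,1\}^n$. Any two of $A,B,A\,\Delta\,B$ are jointly distributed exactly as two independent uniform subsets (since $(\mathbf{1}_{x\in A},\mathbf{1}_{x\in A\Delta B})$ is a uniform pair of bits for each $x$), while $A\cap B\cap(A\,\Delta\,B)=\emptyset$ always. The yes-ensemble is simply $(A,B,C)$ i.i.d.\ uniform, which by Chernoff has $|A\cap B\cap C|/2^n\geq 1/16$ with overwhelming probability.

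From there the argument runs exactly as you outline: let $\Pi$ project onto the ``distinct subspace'' spanned by basis vectors $\ket{x_1,r_1,\ldots,x_t,r_t}$ with all $x_j$ distinct. On that subspace each $x$-value contributes at most two set-membership indicators (one from the row index, one from the column), so only \emph{pairwise} joint laws of the three sets enter any matrix entry, and by the $2$-wise independence above one gets $\Pi\,\mc E_{\text{yes}}\Pi=\Pi\,\mc E_{\text{no}}\Pi$ exactly. The paper controls $\|\mc E_i-\Pi\,\mc E_i\,\Pi\|_1\leq O(t/2^{n/2})$ by a direct rank-$2$ spectrum computation on each pure state (no $\mathrm{Sym}(\{0,1\}^n)$ block-diagonalization is needed), which is tighter than the $O(t^2/2^n)$ you mention but either suffices for the stated $\Omega(2^{n/2})$ bound.
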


\noindent Combined with the \textsc{Forrelation} separation of \cite{aaronson2015forrelation}, \Cref{inf-thm:separating-passive-quantum-from-query-classical-testing-new} entails that quantum data and classical queries are (essentially) maximally incomparable.
See \Cref{fig:different-access-models} for a full picture of resource inequalities for testing.

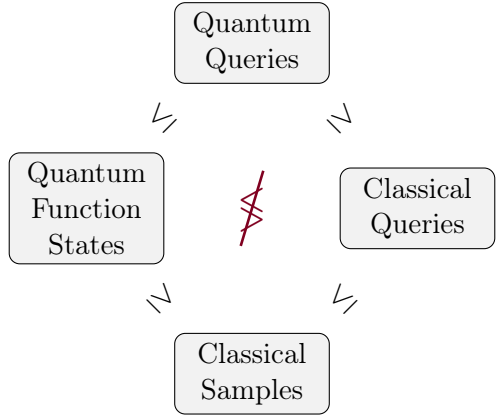
\begin{figure}
    \centering
    \begin{tikzpicture}[scale=1, transform shape] 
        \tikzstyle{box} = [rectangle, rounded corners, minimum width=2cm, minimum height=0.8cm, draw, fill=gray!10, text centered, font=\normalsize]
    
        \useasboundingbox (-3.2, -2.8) rectangle (3.2, 2.8);
        
        \node[box] (quantum_queries_top) at (0, 2.2) {\parbox{1.8cm}{\centering Quantum \\ Queries}};
        \node[box] (classical_queries_right) at (2.2, 0) {\parbox{1.8cm}{\centering Classical \\ Queries}};
        \node[box] (quantum_samples_left) at (-2.2, 0) {\parbox{1.8cm}{\centering Quantum \\ Function \\ States}};
        \node[box] (classical_samples_bottom) at (0, -2.2) {\parbox{1.8cm}{\centering Classical \\ Samples}};
    
        \node at (1.2, 1.2) [rotate=315] {\scalebox{1.0}[1.0]{\large $\geq$}};
        \node at (1.2, -1.2) [rotate=45] {\scalebox{1.0}[1.0]{\large $\leq$}};
        \node at (-1.2, -1.2) [rotate=-45] {\scalebox{1.0}[1.0]{\large $\geq$}};
        \node at (-1.2, 1.2) [rotate=45] {\scalebox{1.0}[1.0]{\large $\leq$}};
        
        \node at (0, 0) {\textcolor{burgundy}{\scalebox{1.0}[1.5]{\large $\lessgtr$}}}; 
        \draw[-, burgundy, line width=1pt] (-0.15, -0.5) -- (0.15, 0.5); 
    \end{tikzpicture}
    \caption{\textbf{Property testing resource inequalities.} The figure illustrates the connections between four different data access models in property testing, namely classical/quantum example/query access. 
    Here, ``resource A $\geq$ resource B'' means that access to resource B can be simulated from access to resource A without any overhead. (For example, a single classical query can be used to simulate a single classical sample.)
    As a consequence of \Cref{inf-thm:separating-passive-quantum-from-query-classical-testing-new} and \cite{bernstein1997complexity, simon1997power, aaronson2015forrelation}, the only two among these access models that are not trivially comparable are in fact very incomparable.
    }
    \label{fig:different-access-models}
\end{figure}

\Cref{inf-thm:separating-passive-quantum-from-query-classical-testing-new} is proved in \Cref{section:separating-classical-queries-from-function-states} as \Cref{thm:qsamp-cquery-sep}.
Given a function $f:\{0,1,2\}\times\{0,1\}^n\to\{0,1\}$ that indicates three subsets of the hypercube $A,B,C\subseteq\{0,1\}^n$, the 3-fold intersection detection task is to determine if the fractional 3-fold intersection $|A\cap B\cap C|/2^n$ is $0$ or $\Omega(1)$-far from $0$.

This property is readily tested from queries by computing the probability $x\in A\wedge x\in B\wedge x\in C$ for uniformly-random $x$.
To prove the quantum passive testing lower bound, we show the indistinguishability of two ensembles of function states encoding set triples
\[\big\{(A,B,C)\big\}_{A,B,C\;\overset{\mathsf{iid}}{\sim}\; \mc P\{0,1\}^n} \qquad \text{vs.}\qquad\big\{(A,B,A\Delta B)\}_{A,B\;\overset{\mathsf{iid}}{\sim}\; \mc P\{0,1\}^n}\, .\]
Here $\Delta$ denotes symmetric difference, $\mc P$ denotes the power set, and the samples are uniform.
Note the first ensemble has mutual intersection of $\Omega(1)$ density with high probability, while the ensemble always has zero intersection.
To obtain the lower bound, the main observation is that the $t$-copy versions of the two associated function state ensembles are \emph{equal} when projected onto the so-called \textit{distinct subspace} (\emph{i.e.,} the subspace spanned by basis states for which the $t$ input registers are distinct).
This projection moves the state ensembles at most $\mc O(t/2^{n/2})$ in trace distance, so we conclude that for any $t =o(2^{n/2})$, the two ensembles cannot be distinguished using $t$ function state copies.

While $3$-fold intersection detection is provably difficult from function state copies, its $2$-fold counterpart is quantumly easy. 
For example, we can perform SWAP tests between function state copies.
Rather surprisingly, we find that 2-fold intersection detection can even be achieved by a quantum algorithm using unentangled measurements applied to each copy of the quantum state independently. 
Namely, in \Cref{sec:two-fold-intersection}, we provide a quantum Fourier sampling-based approach to 2-fold intersection detection.

\subsection{A challenge: lower bounds for quantum monotonicity testing}

We also begin the project of finding lower bounds for the passive quantum testing model.
Our main contribution is to establish lower bounds for monotonicity as an important first open problem.
In particular, we show the ensembles that entail strong lower bounds for classical passive testing are wholly inadequate for quantum passive testing.

\begin{theorem}
    \label{thm:mon-lower-bound-difficulty}
    The ensembles in Goldreich et al. \cite{Goldreich2000} and Black \cite{black_nearly_2023} can be distinguished by a quantum algorithm with $\mc O(1/\epsilon)$ copies of the corresponding function states.
\end{theorem}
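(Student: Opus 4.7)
The plan is to exploit the Fourier characterization of monotonicity distance from Section~\ref{subsection:algorithms-for-properties}: every monotone $f$ satisfies $\Delta[f] := \mathbf{I}[f] - \sum_i \widehat{f}(\{i\}) = 0$, while every $\epsilon$-far-from-monotone $f$ satisfies $\Delta[f] \geq 2\epsilon$. Both the \cite{Goldreich2000} and \cite{black_nearly_2023} ensembles split across this dichotomy --- ``yes'' instances are monotone and ``no'' instances are $\epsilon$-far --- so the distinguishing problem reduces to estimating $\Delta[f]$ to additive error better than $\epsilon$.

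The quantum distinguisher divides $O(1/\epsilon)$ copies of $\ket{f}$ between two estimators. Half of the copies undergo Hadamard-then-measure on the input register, yielding Fourier samples $S$ with $\mathbb{E}[|S|] = \mathbf{I}[f]$; empirical averaging estimates $\mathbf{I}[f]$. The other half are measured in the computational basis to produce samples $(x_j, f(x_j))$, from which $\sum_i \widehat{f}(\{i\}) = \mathbb{E}_x[f(x)(n - 2|x|)]$ (in the $\pm 1$ convention) is estimated as $\frac{1}{m}\sum_j f(x_j)(n - 2|x_j|)$. The algorithm thresholds the estimated gap $\widehat{\Delta}[f]$ at $\epsilon$ to output its answer.

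Establishing the claimed $O(1/\epsilon)$ copy complexity requires that both estimators have effective variance $O(\epsilon)$ on the specific ensembles in question, since generic bounds give $\mathrm{Var}[|S|] \leq n^2$ and $|f(x)(n-2|x|)| \leq n$. To this end, I would work out the explicit Fourier histogram of each family: truncated linear thresholds for the Goldreich et al.\ construction, and randomized slice-based perturbations for the Black construction. In both cases the ``yes'' functions have Fourier mass concentrated on a single level, and the ``no'' functions differ from them only via sparse, structured perturbations designed to evade classical samples; these features should translate directly into $O(\epsilon)$ variance bounds after averaging over the ensemble's internal randomness.

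The main obstacle is this variance control on the Black ensemble, where the ``far-from-monotone'' property is engineered by a delicate layer-wise randomization. Establishing that this randomization induces only a low-variance shift in $\Delta[f]$ (rather than merely a low-mean shift) requires tracking how each layer flip contributes simultaneously to $|S|$ and to $\sum_i \widehat{f}(\{i\})$. This combinatorial bookkeeping is the technical heart of the proof and, once carried out, combines with a standard Chebyshev bound to yield the claimed $O(1/\epsilon)$ sample complexity.
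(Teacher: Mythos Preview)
Your approach attempts an explicit distinguisher based on the monotonicity gap $\Delta[f]=\mathbf{I}[f]-\sum_i\widehat f(\{i\})$, whereas the paper does something quite different: it lower-bounds the trace distance $\|A^{(0)}-A^{(1)}\|_1$ between the two $t$-copy ensemble averages directly, by a spectral and combinatorial analysis of the difference operator, and shows it is $\Omega(1)$ once $t=\Omega(1/\varepsilon)$. No explicit algorithm is exhibited; distinguishability then follows from Holevo--Helstrom.

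More importantly, your plan has a real gap at the variance step. In the Goldreich et al.\ construction the ``yes'' functions agree with $\mathrm{Maj}_n$ except on $2\varepsilon\cdot 2^n$ points in the middle two layers, and majority does \emph{not} have Fourier mass concentrated on a single level: it carries $2/\pi+o(1)$ mass at level $1$ while $\mathbb E[|S|]=\mathbf I[\mathrm{Maj}_n]=\Theta(\sqrt n)$, so $\mathrm{Var}_{S\sim\widehat f^2}[|S|]=\Theta(n)$ for every function in either ensemble. Since that construction forces $\varepsilon\le O(n^{-3/2})$, estimating $\mathbf I[f]$ to additive accuracy $\varepsilon$ by Fourier sampling needs $\Theta(n/\varepsilon^2)\ge\Theta(n^4)$ samples, not $O(1/\varepsilon)=O(n^{3/2})$. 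The classical-sample estimator for $\sum_i\widehat f(\{i\})=\mathbb E_x\big[(-1)^{f(x)}(n-2|x|)\big]$ has the same obstruction, because $n-2|x|$ already has variance $n$ under uniform $x$. There is no cancellation to exploit, since the two estimators are built from independent data of different kinds. The ``effective variance $O(\varepsilon)$'' you need is therefore off by a polynomial factor in $n$, and the argument does not close; this is exactly why the paper's general tester in \Cref{theorem:quantum-monotonicity-testing} carries an $n^2/\varepsilon^2$ cost rather than $1/\varepsilon$.

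The paper sidesteps all of this: it passes from function states to phase states, computes the entries of the difference matrix $A$ explicitly, shows (after a diagonal similarity) that $A$ is block-diagonal with blocks equal to adjacency matrices of complete bipartite graphs indexed by ``types'' of index tuples, reads off the two nonzero eigenvalues of each block, and then evaluates the resulting combinatorial sums via exponential generating functions and a binomial concentration estimate to conclude $\|A\|_1=\Omega(1)$ once $t=\Omega(1/\varepsilon)$. The Black ensemble is handled by the analogous argument with complete multipartite blocks.
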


\noindent This theorem says that the best lower bound such ensembles could imply for quantum passive testing is $\Omega(1/\epsilon)$.
But that is no better than the lower bound that exists generically for every (non-trivial) property.\footnote{A classical query complexity lower bound of $\Omega(1/\varepsilon)$ also holds for testing any non-trivial property \cite{fischer2024basiclowerboundproperty}.}
To see that $\Omega(1/\epsilon)$ holds generically, it suffices to consider only two functions, $f_{\mathrm{yes}}$ and $f_{\mathrm{no}}$, that are exactly $\varepsilon$-far apart.
This is equivalent to $\braket{f_{\mathrm{yes}}|f_{\mathrm{no}}}= 1 - \varepsilon$, so the trace distance between $\ket{f_{\mathrm{yes}}}^{\otimes t}$ and $\ket{f_{\mathrm{no}}}^{\otimes t}$ is $\sqrt{1-(1-\varepsilon)^{2t}}\leq \sqrt{2t\varepsilon}$.
Therefore, distinguishing between $f_{\mathrm{yes}}$ and $f_{\mathrm{no}}$ with success probability $\geq 2/3$ requires $t\geq\Omega(1/\varepsilon)$ copies of the respective function state.
Note that this straightforward lower bound separates quantum samples from quantum queries in the regime $\epsilon=\mathcal{O}(n^{-3/2})$ because of the $\widetilde{\mathcal{O}}(n^{1/4}/\varepsilon^{1/2})$ quantum query upper bound for monotonicity testing proved in \cite{Belovs2015}.

We prove \Cref{thm:mon-lower-bound-difficulty} via a combinatorial analysis of the spectrum of the matrix \[A:=\E_{\psi\sim E_0}\psi^{\otimes t}-\E_{\phi\sim E_1}\phi^{\otimes t}\,,\] where $E_0$ and $E_1$ are the ``yes'' and ``no'' ensembles from \cite{Goldreich2000} (or, later, from \cite{black_nearly_2023}). 
As neither of our ensembles is close to Haar-random, we cannot directly draw on the rich recent literature on quantum pseudorandomness \cite{ji2018pseudorandom, brakerski2019pseudorandom, giurgicatiron2023pseudorandomnesssubsetstates, jeronimo2024pseudorandompseudoentangledstatessubset, metger2024simpleconstructionslineardepthtdesigns, chen2024efficientunitarydesignspseudorandom, schuster2024randomunitariesextremelylow, ma2024constructrandomunitaries}.
Instead, we notice that our function state is unitarily equivalent to a phase state for a closely-related Boolean function. 
An intricate index rearrangement reveals $A$ to be block-diagonal, with each block interpretable as the adjacency matrix for a complete bipartite graph.
We then determine the spectrum of each block, with eigenvalues and their multiplicities given as functions of certain combinatorial quantities.
Exponential generating function techniques lead to explicit formulas for these quantities, and finally the asymptotics can be understood by taking a probabilistic perspective on the counting formulas.
Concentration arguments finish the proof and allow us to conclude that $\|A\|_1\geq \Omega(1)$ (and thus the two ensembles are distinguishable) as soon as $t=\Omega(\epsilon^{-1})$.
This argument is presented in detail in \Cref{section:passive-quantum-monotonicity-testing-lower-bound}.

A final remark for this section: for certain regimes of $\varepsilon$, the $\Omega(1/\varepsilon)$ lower bound on the number of function state copies already separates passive quantum testing from quantum query-based testing. For example, (adaptive) quantum query complexity upper bounds of $\tilde{\mathcal{O}}(n^{1/4}/\varepsilon^{1/2})$ for monotonicity testing \cite{Belovs2015} and of $\tilde{\mathcal{O}}((k/\varepsilon)^{1/2})$ for $k$-junta testing \cite{ambainis_efficient_2015} are known. However, to the best of our knowledge, the ``correct'' $\varepsilon$-scaling for quantum property testing of classical functions is far from understood; prior works such as \cite{atici2007quantum, buhrman2008quantum, chakraborty2010newresults, aaronson2015forrelation, ambainis_efficient_2015, montanaro2016survey} seem to establish quantum query complexity lower bounds only for constant $\varepsilon$. 

\subsection{Outlook and future directions}

Our results, (most of them) summarized in \Cref{table:results-overview}, highlight passive quantum property testing as a rich testing model deserving of concerted study.
We grow the list of properties with efficient passive quantum testers, introduce new techniques for testing, show that the abilities of passive quantum testing extend beyond the reach of Fourier sampling, and highlight subtleties in comparing classical and quantum resources for property testing.

\begin{table}[htbp]
\begin{center}
\resizebox{\textwidth}{!}{%
\begin{tabular}{||c | c c c||} 
 \hline
  & Quantum function states & $\begin{matrix}
  \text{Classical}\\
  \text{samples}
  \end{matrix}$ & $\begin{matrix}
  \text{Classical}\\
  \text{queries}
  \end{matrix}$ \\[5pt] 
 \hline\hline{} & {}& \cellcolor{lightgray!15}{}&\cellcolor{lightgray!15}{}\\[-1em]
  $\begin{matrix}
      \text{Monotonicity}\\ 
      \text{testing}
  \end{matrix}$
  &  $\begin{matrix}
  \tilde{\mathcal{O}}(n^2 / \varepsilon^2) \\
  \text{\Cref{theorem:quantum-monotonicity-testing}}
  \end{matrix}$ & \cellcolor{lightgray!15}$\begin{matrix}
  \exp\left(\Omega\left(\min\{\sqrt{n}/\varepsilon,n\}\right)\right) \\
  \text{\cite{Goldreich2000, black_nearly_2023}}
  \end{matrix}$ &  \cellcolor{lightgray!15}$\begin{matrix}
  \tilde{\mathcal{O}}\left(\min\left\{n/\varepsilon, \sqrt{n}/\varepsilon^2\right\}\right) \\
  \text{\cite{Goldreich2000, DBLP:journals/siamcomp/KhotMS18}}
  \end{matrix}$\\ 
  \hline{} & {}& \cellcolor{lightgray!20}{}&\cellcolor{lightgray!15}{}\\[-1em]
 $\begin{matrix}
      \text{Symmetry}\\ 
      \text{testing}
  \end{matrix}$
  &   $\begin{matrix}
  \tilde{\mathcal{O}}(1/\varepsilon^2) \\
  \text{\Cref{theorem:quantum-symmetry-testing}}
  \end{matrix}$ & \cellcolor{lightgray!15}$\begin{matrix}
  \Theta (n^{1/4}) \\
  \text{\cite{alon2016active}}
  \end{matrix}$ &  \cellcolor{lightgray!15}$\begin{matrix}
  \mathcal{O}(1/\varepsilon)\\
  \text{\cite{blais2015partially}}
  \end{matrix}$\\ 
  \hline{} & {}& {} &\cellcolor{lightgray!15}{}\\[-1em]
 $\begin{matrix}
      \text{Triangle-freeness}\\ 
      \text{testing}
  \end{matrix}$
  &  $\begin{matrix}
  \tilde{\mathcal{O}}\left(\left(\mathsf{Tower}\left(C\cdot \left\lceil \log\left(\frac{1}{\varepsilon}\right)\right\rceil\right)\right)^{6}\right) \\
  \text{\Cref{theorem:quantum-triangle-freeness-testing}}
  \end{matrix}$ & $\begin{matrix}
  \Omega(2^{n/3}) \\
  \text{\Cref{remark:triangle-freeness-lower-bound}}
  \end{matrix}$ & \cellcolor{lightgray!15}$\begin{matrix}
  \mathcal{O}\left(\mathsf{Tower}\left(C\cdot \left\lceil \log\left(\frac{1}{\varepsilon}\right)\right\rceil\right)\right) \\
  \text{via \cite{fox2011new, hatami2016arithmetic}}
  \end{matrix}$\\ 
  \hline{} & {}& {}&{}\\[-1em]
 $\begin{matrix}
      \text{3-fold intersection}\\ 
      \text{estimation}
  \end{matrix}$
  &  $\begin{matrix}
  \Omega\left( 2^{n/2} \right) \\
  \text{\Cref{inf-thm:separating-passive-quantum-from-query-classical-testing-new}}
  \end{matrix}$ & $\begin{matrix}
  \Omega\left( 2^{n/2} \right) \\
  \text{via \Cref{inf-thm:separating-passive-quantum-from-query-classical-testing-new}}
  \end{matrix}$ &  $\begin{matrix}
  \mathcal{O}(1) \\
  \text{\Cref{inf-thm:separating-passive-quantum-from-query-classical-testing-new}}
  \end{matrix}$\\
  \hline{} & \cellcolor{lightgray!15}{}& \cellcolor{lightgray!15}{}&\cellcolor{lightgray!15}{}\\[-1em]
  
 $\begin{matrix}
      \textsc{Forrelation}
  \end{matrix}$ & \cellcolor{lightgray!15}$\begin{matrix}
      \mathcal{O}(1) \\ \text{\cite{aaronson2015forrelation}}
  \end{matrix}$ & \cellcolor{lightgray!15}$\begin{matrix}
      \tilde{\Omega}\left(2^{n/2}\right) \\ \text{\cite{aaronson2015forrelation}}
  \end{matrix}$ & \cellcolor{lightgray!15}$\begin{matrix}
      \tilde{\Theta}\left(2^{n/2}\right) \\ \text{\cite{aaronson2015forrelation}}
  \end{matrix}$ \\
  \hline
\end{tabular}
} 
\caption{ 
\textbf{Our bounds in context.} 
The table contrasts our results on property testing from quantum function states with results from the literature (in gray).
Where the $\varepsilon$-dependence is not shown explicitly, we have set $\varepsilon$ to some suitably small positive constant value.
For monotonicity, symmetry, and triangle-freeness, passive quantum testing from function states is (at least) exponentially easier than passive classical testing from samples and at most polynomially harder than classical testing from queries.
The testing problem derived from 3-fold intersection estimation is complementary to the \textsc{Forrelation} problem in that quantum function/phase states and classical queries swap roles in the exponential separation.
}\label{table:results-overview}
\end{center}
\end{table}

In fact, it seems passive quantum testing can make good on the promise of testing from data where classical passive testing cannot.
With passive quantum testing, it is possible to generate a dataset about a Boolean function without foreknowledge of the property one would eventually like to test, and still be assured (for a growing list of properties) that testing will be very efficient.
In particular, these results suggest that \textit{quantum data}, rather than classical data, could enable the application to machine learning imagined in \cite{goldreich1998property}: as an inexpensive preprocessing procedure that informs the choice of suitable, more data-intensive learning algorithms.

Here we lay out some directions for future work.

\paragraph{More and improved bounds for passive quantum property testing.}
We have established upper bounds for passive quantum testing of monotonicity, symmetry, and triangle-freeness from function states.
These three properties together with linearity testing \cite{bernstein1997complexity} and junta testing \cite{atici2007quantum, ambainis_efficient_2015} already demonstrate the power of quantum data for testing a variety of quite different properties, and it seems important to explore quantum datasets in the context of other testing problems.
As highlighted in \Cref{tab:q-vs-c}, quantum low-degree testing of Boolean functions is a natural next challenge, with the more general class of locally characterized affine-invariant properties \cite{bhattacharyya2013every} as a longer-term goal.

One may aim to tighten our bounds to precisely pin down the power of quantum data for these testing tasks.
Here, having established that the constructions from classical passive monotonicity testing lower bounds are inadequate for the quantum case, we consider it especially interesting to obtain a $n$-dependent lower bound for passive quantum monotonicity testing in the constant $\epsilon$ regime. 
Settling the $n$-dependence of the quantum sample complexity for passive quantum monotonicity testing is a tantalizing question for future work.

\paragraph{Characterizing properties with constant-complexity passive quantum testers}

In the classical case, \cite{blaisyoshida2019} gave a complete characterization of those properties that can be tested with a constant number of samples.
Achieving an analogous characterization for properties that can be tested from constantly-many function state copies would help demarcate the boundary of quantum advantage for this model. 

Intriguingly, the quantum case raises a further question about the constant complexity regime: For properties that admit a constant-copy passive quantum testers, can this always be achieved by algorithms that do not use entangled multi-copy measurements?
The role of single- versus multi-copy quantum processing has recently been explored in the literature on learning and testing for quantum objects (see, e.g., \cite{chen2022exponential, huang2022quantum-advantage, caro2023learning,  hinsche2024singlecopystabilizertesting}) and in quantum computational learning theory \cite{arunachalam2023optimal}, but the picture is far from clear for properties of function states (and of pure states more generally).
Concretely, while our testers for monotonicity and symmetry are single-copy algorithms, our triangle-freeness tester uses two-copy SWAP tests and there does not seem to be an immediate way of replacing this by single-copy quantum processing.

One may also ask about the necessity of auxiliary quantum systems in quantum sample-based testers with constant sample complexity; for example, our symmetry tester relied on auxiliary systems to implement the symmetric subspace projector.
The number of available auxiliary systems is already known to play an important role in, for instance, Pauli channel learning \cite{chen2022quantum, chen2024tight, chen2024efficientpaulichannelestimation}, and exploring its relevance for constant-complexity passive quantum testing may shed new light on how these quantum testers achieve their better-than-classical performance.

\paragraph{Other quantum datasets for classical properties}
We have considered only one kind of quantum representation of classical functions: coherent superpositions of evaluations of $f$ (as function states).
Already these are enough to gain major advantages over testing from classical data, but one could ask for more.
Are there other, better quantum datasets that lead to even faster testers or extend quantum advantage to more properties?
To keep this question interesting, one would require that the dataset be not too tailored to any property.

In fact, this question may be best phrased as a sort of ``compression game'': we are first given a very long list of questions that we might be asked regarding some black-box function $f$.
We then have $T(n)$ time to interact with an oracle for $f$, during which period we generate whatever data we would like.
What is the best quantum dataset to generate, so that we are best prepared to answer a random (or perhaps worst-case) question from the list?

\paragraph{Passive quantum testing for quantum properties.}
Recently there has been a growing interest in property testing for quantum objects, such as states \cite{harrow2013testing, odonnell2015quantum, harrow2017sequential, carmeli2017probing, buadescu2020lower, gross2021testing, soleimanifar2022testing, chen2022exponential, grewal2023improved, arunachalam2024toleranttestingstabilizerstates, chen2024stabilizerbootstrappingrecipeefficient, arunachalam2024notepolynomialtimetoleranttesting, bao2024toleranttestingstabilizerstates}, unitaries \cite{dallagnol2022quantumproofsof, chen2023testing, she2022unitary}, channels \cite{chen2022exponential, aharonov2022quantum, bao2023testing}, and Hamiltonians \cite{laborde2022quantum, bluhm2024hamiltonian, arunachalam2024testinglearningstructuredquantum}.
It is an interesting challenge to design datasets to enable passive versions of these tasks.
Just as in the above, we would want quantum datasets that are mostly agnostic to the property to be tested.

In fact, some existing work can be viewed as advocating for quantum datasets.
When restricting ourselves to collecting classical data, classical shadows \cite{huang2020predicting, huang2022learning} serve as a useful representation, but place restrictions on the properties that can be tested after-the-fact. 
Shadow tomography procedures \cite{aaronson2018shadow, badescu2021improved, caro2023learning} can remove such restrictions but use multi-copy measurements that depend on the properties of interest, and thus in general seem to require quantum data storage to enable passivity.\footnote{(Non-adaptive) Pauli shadow tomography \cite{huang2021information,caro2023learning,king2024triplyefficientshadowtomography,chen2024optimaltradeoffsestimatingpauli} in some sense interpolates between the (dis-)advantages of classical shadows and shadow tomography for the current discussion: When promised in advance that the properties in question are characterized by expectation values of arbitrary Pauli observables, some of the relevant data can be collected and stored classically in advance, without knowing which specific Pauli observables matter. However, part of the quantum processing still requires knowing the specific Paulis of interest, so to achieve passivity, it seems that some data still has to be stored quantumly.}
The relevance of data storage in a quantum memory for certain quantum process learning tasks has also been explored in \cite{bisio2010optimal, bisio2011quantum, sedlak2019optimal, lewandowska2022storage, lewandowska2024storage}.
In this context, our work can be viewed as investigating the power of quantum data, stored in quantum memory, for testing properties of diagonal unitary processes arising from classical Boolean functions. 
We hope that this will inspire future attempts at using quantum data as a resource for passively quantumly testing properties of more general quantum processes.

\subsection{Related work}

\paragraph{Passive classical property testing.\protect\footnote{
Due to the vastness of the area of property testing, even when restricting the focus to passive testing, this paragraph is intended to provide context for our work rather than an exhaustive bibliography for the field.}}
Passive (or sample-based) property testing goes back to \cite{goldreich1998property} (see also \cite{kearns1998testing}), who introduced it as a testing counterpart to Valiant's model of probably approximately correct (PAC) learning 
\cite{valiant1984theory}. In particular, \cite[Proposition 3.1.1]{goldreich1998property} observes that PAC learners give rise to passive testers (see also \cite[Proposition 2.1]{ron2008property}).
Later, \cite{balcan_active_2012} proposed active testing as a model interpolating between sample- and query-based testing.
Both for passive and active testing, and for a variety of problems, several works have established lower bounds separating them from the more standard query-based testing model. 
Some notable examples of tasks with such separations include ($k$-)linearity \cite{balcan_active_2012, alon2016active}, $k$-juntaness \cite{alon2016active}, (partial) symmetry \cite{blais2015partially, alon2016active}, low-degreeness \cite{alon2016active}, and monotonicity \cite{Goldreich2000, black_nearly_2023}. 
We present these results and how they compare to quantum testing in \Cref{tab:q-vs-c}.

\cite{blaisyoshida2019} gave a full characterization of properties of Boolean-valued functions that admit passive testing with a constant (\textit{i.e.}, independent of domain size) number of uniformly random samples, demonstrating that this is indeed a relatively restricted type of properties.
While the works mentioned so far have focused on the case of uniformly random data points (or, in the case of active learning, uniformly random sets of admissible query points), more recently there has been renewed interest in passive distribution-free testing, see for instance \cite{halevy2007distribution-free, blais2021vc-dimension}.
Finally, the framework of passive testing has also been explored for objects other than Boolean functions, especially for testing geometric properties \cite{matulef2010testing, balcan_active_2012, kothari2014testing, neeman2014testing, chen2017sample-based, berman2019power, berman2019testing}.

\paragraph{Quantum property testing.}
In our work, the focus is on quantumly testing properties of classical functions. This topic, considered for example in \cite{atici2007quantum, chakraborty2010newresults, hillery2011quantum, aaronson2015forrelation, ambainis_efficient_2015}, is one of the main directions in quantum property testing, an area that goes back to \cite{buhrman2008quantum} and is surveyed in \cite{montanaro2016survey}.
However, quantum property testing also considers quantum algorithms that test properties of other classical objects from quantum data access. 
Notable examples of other objects to quantumly test include probability distributions \cite{bravyi2011quantum, chakraborty2010newresults, gilyen2020distributional}, graphs \cite{ambainis2011quantum, chakraborty2010newresults}, and groups \cite{friedl2009quantum, inui2011quantum}.
Finally, recently there have also been significant insights in quantum property testing for quantum objects, notably states \cite{harrow2013testing, odonnell2015quantum, harrow2017sequential, carmeli2017probing, buadescu2020lower, gross2021testing, soleimanifar2022testing, chen2022exponential, grewal2023improved, hinsche2024singlecopystabilizertesting, arunachalam2024toleranttestingstabilizerstates, chen2024stabilizerbootstrappingrecipeefficient, arunachalam2024notepolynomialtimetoleranttesting, bao2024toleranttestingstabilizerstates, mehraban2024improvedboundstestinglow}, unitaries \cite{dallagnol2022quantumproofsof, chen2023testing, she2022unitary}, channels \cite{chen2022exponential, aharonov2022quantum, bao2023testing}, and Hamiltonians \cite{laborde2022quantum, bluhm2024hamiltonian, arunachalam2024testinglearningstructuredquantum}.

\section{Passive quantum testing upper bounds}\label{section:passive-quantum-testing-upper-bounds}

\subsection{Defining passive quantum property testing}

As outlined in \Cref{section:introduction}, passive property testing considers testing from (non-adaptively chosen) data that does not depend on the property to be tested.
We propose a quantum version of this model by considering quantum testing algorithms that have access to copies of a quantum data state.
Here, we consider the following form of quantum data encoding for a Boolean function $f:\{0,1\}^n\to\{0,1\}$: We work with \emph{function states}
\begin{equation}\label{eq:function-state}
    \ket{f}
    =\ket{\Psi_f}=\frac{1}{\sqrt{2^n}}\sum_{x\in\{0,1\}^n} \ket{x,f(x)}\, .
\end{equation}
When the function $f$ is clear from context, we will also use the notation $\ket{\Psi}=\ket{\Psi_f}$. Natural variations of this notation, \textit{e.g.}, $\ket{\Psi'}=\ket{\Psi_{f'}}$, will also be used.

With this, we can now formally define the notion of passive quantum property testing for Boolean functions.

\begin{definition}[Passive quantum property testing]\label{definition:passive-quantum-testing}
    Let $\mathcal{P}_n\subseteq\{0,1\}^{\{0,1\}^n}$ be some property of Boolean functions on $n$ bits, let $\delta, \varepsilon\in (0,1)$.
    A quantum algorithm is a \emph{passive quantum tester} with accuracy/distance parameter $\varepsilon$ and confidence parameter $\delta$ for $\mathcal{P}_n$ from $m=m(\varepsilon,\delta)$ function state copies if the following holds:
    When given $m$ copies of $\ket{\Psi_f}$, the quantum algorithm correctly decides, with success probability $\geq 1-\delta$, whether
    \begin{itemize}
        \item[(i)] $f\in\mathcal{P}_n$, or
        \item[(ii)] $\Pr_{x\sim\{0,1\}^n}[f(x)\neq g(x)]\geq \varepsilon$ holds for all $g\in \mathcal{P}_n$,
    \end{itemize}
    promised that $f$ satisfies either (i) or (ii).
\end{definition}

This work explores \Cref{definition:passive-quantum-testing} in the context of several properties. 
We develop testers for monotonicity, symmetry, and triangle-freeness. 
The main ideas underlying our algorithms are introduced in \Cref{subsection;passive-quantum-monotonicity-testing}.
In the next section, we present the passive quantum monotonicity tester in detail, while the symmetry and triangle-freeness testers are deferred to \Cref{sec:passive-quantum-testers}. 

\subsection{Passive quantum monotonicity testing}\label{subsection;passive-quantum-monotonicity-testing}

We define the natural partial order $\preceq$ on the Boolean hypercube $\{0,1\}^n$ via $x \preceq y~:\Leftrightarrow ~ (x_i\leq y_i$ holds for all $1\leq i\leq n)$.
A function $f:\{0,1\}^n\to\{0,1\}$ is called monotone if $f(x)\leq f(y)$ holds for all $x,y\in\{0,1\}^n$ with $x \preceq y$. The associated classical testing problem can be formulated as follows:

\begin{problem}[Classical monotonicity testing]\label{problem:classical-monotonicty-testing}
  Given query access to an unknown function $f:\{0,1\}^n\to\{0,1\}$ and an accuracy parameter $\varepsilon\in (0,1)$, decide with success probability $\geq 2/3$ whether
  \begin{itemize}
      \item[(i)] $f$ is monotone, or
      \item[(ii)] $f$ is $\varepsilon$-far from all monotone functions, that is, we have $\Pr_{x\sim\{0,1\}^n}[f(x)\neq g(x)]\geq\varepsilon$ for all monotone functions $g:\{0,1\}^n\to\{0,1\}$,
  \end{itemize}  
  promised that $f$ satisfies either (i) or (ii).  
\end{problem}

Here, as well as in our other property testing tasks below, will think of (i) as the accept case and of (ii) as the reject case. This then allows us to speak of completeness (for getting case (i) right) and soundness (for getting case (ii) right).
Here, the chosen success probability of $2/3$ is an arbitrary constant $>1/2$, it can be boosted arbitrarily close to $1$ through repetition and majority voting.

As introduced above, monotonicity is a global property of a function. However, there is a straightforward equivalent local formulation:

\begin{proposition}[Local characterization of monotonicity]\label{proposition:local-characterization-monotonicity}
    A function $f:\{0,1\}^n\to\{0,1\}$ is monotone if and only if for all $x\in\{0,1\}^n$ and for all $i\in [n]$, the following holds:
    \begin{equation}\label{eq:local-monotonicity-condition}
        \neg\left(( x_i=0 \land f(x)=1 \land f(x+e_i)=0)\lor (x_i=1 \land f(x)=0 \land f(x+e_i)=1)\right) \, ,
    \end{equation}
    where $e_i$ denotes the $i^{\mathrm{th}}$ standard basis vector.
\end{proposition}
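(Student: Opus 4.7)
The plan is to prove both directions of the equivalence directly, with the reverse direction requiring a short inductive argument along a monotone path in the hypercube.

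For the forward direction, assume $f$ is monotone and fix any $x \in \{0,1\}^n$ and $i \in [n]$. If $x_i = 0$, then $x \preceq x + e_i$ (they agree on every coordinate except $i$, where $x$ has a $0$ and $x+e_i$ has a $1$), so monotonicity yields $f(x) \leq f(x+e_i)$, which rules out the pattern $f(x)=1 \land f(x+e_i)=0$. Symmetrically, if $x_i = 1$, then $x + e_i \preceq x$, so $f(x+e_i) \leq f(x)$, ruling out $f(x)=0 \land f(x+e_i)=1$. In either case, the local condition \eqref{eq:local-monotonicity-condition} holds.

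For the reverse direction, assume \eqref{eq:local-monotonicity-condition} holds for every $x$ and $i$. Let $x, y \in \{0,1\}^n$ with $x \preceq y$, and let $S = \{i \in [n] : x_i < y_i\}$. I would argue by induction on $|S|$ that $f(x) \leq f(y)$. The base case $|S| = 0$ gives $x = y$, so the claim is trivial. For the inductive step, pick any $i \in S$ and set $x' := x + e_i$; since $x_i = 0$, the local condition at $(x, i)$ gives $f(x) \leq f(x')$. Now $x' \preceq y$ and $\{j : x'_j < y_j\} = S \setminus \{i\}$ has size $|S|-1$, so by the induction hypothesis $f(x') \leq f(y)$. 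Chaining the two inequalities gives $f(x) \leq f(y)$, which establishes monotonicity.

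There is essentially no obstacle here: the forward direction is immediate from the definition, and the reverse direction is a standard path argument that expresses any comparable pair $x \preceq y$ as a sequence of single-bit flips, each of which preserves the inequality on $f$-values by the assumed local condition. The only detail worth being careful about is that along the path we only ever flip $0$s to $1$s, so we only ever invoke the first clause of \eqref{eq:local-monotonicity-condition}; the second clause is used symmetrically in the forward direction (or, equivalently, would be used if one chose to walk downward from $y$ to $x$).
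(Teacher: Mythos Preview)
Your proof is correct. The paper does not actually supply a proof of this proposition; it simply states the local characterization as a ``straightforward equivalent local formulation'' and moves on. Your argument---immediate from the definition in the forward direction, and a single-bit-flip path induction in the reverse direction---is exactly the standard justification one would give, and nothing more is needed.
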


It turns out that functions far from the set of all monotone functions necessarily violate \Cref{eq:local-monotonicity-condition} on a non-negligible fraction of all possible $x$ and $i$. This makes it possible to test for monotonicity by checking \Cref{eq:local-monotonicity-condition} on a small number of randomly chosen $x$ and $i$.

\begin{theorem}[Soundness of monotonicity testing (compare \cite{Goldreich2000})]\label{theorem:classical-monotonicity-testing-soundness}
    If $f:\{0,1\}^n\to\{0,1\}$ is exactly $\varepsilon$-far from all monotone functions, then 
    \small
    \begin{equation}\label{eq:classical-monotonicity-testing-soundness}
         \frac{\varepsilon}{n}
         \leq \Pr_{x\sim\{0,1\}^n, i\sim [n]}\left[( x_i=0 \land f(x)=1 \land f(x+e_i)=0)\lor (x_i=1 \land f(x)=0 \land f(x+e_i)=1)\right]
        \leq 2\varepsilon\, .
    \end{equation}
    \normalsize
\end{theorem}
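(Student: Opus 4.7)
My plan is to reformulate the probability in \Cref{eq:classical-monotonicity-testing-soundness} as a combinatorial quantity and then bound that quantity from both sides. Let $V$ denote the number of undirected \emph{violating edges} of $f$, that is, pairs $\{x, x+e_i\}$ with $x_i = 0$, $f(x) = 1$, and $f(x+e_i) = 0$. Each such edge contributes to the event from both of its endpoints (once via the $x_i = 0$ disjunct, once via $x_i = 1$), out of a sample space of size $n \cdot 2^n$, so the probability is exactly $2V/(n\cdot 2^n)$. It will thus suffice to show $\varepsilon\cdot 2^{n-1}\le V\le \varepsilon\cdot n\cdot 2^n$.

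For the upper bound I would fix a monotone $g$ with $\Pr_x[f(x)\neq g(x)] = \varepsilon$ and observe that monotonicity forbids $g(x) > g(x+e_i)$, so $f$ and $g$ cannot agree on both endpoints of any violating edge. Hence every violating edge has at least one endpoint in the disagreement set $B := \{z : f(z)\neq g(z)\}$, which has $|B| = \varepsilon\cdot 2^n$. Since each vertex lies on exactly $n$ edges, a double-counting argument gives $V\le n\cdot |B| = \varepsilon\cdot n\cdot 2^n$.

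For the lower bound I would explicitly construct a close monotone function via iterated coordinate-wise sorting. For each direction $i$, let $\Phi_i$ be the operator that swaps the values $f(x)$ and $f(x+e_i)$ on every $i$-violating edge; a single application changes $f$ at $2V_i$ points, where $V_i$ is the current number of $i$-direction violations (and originally $\sum_i V_i = V$), and eliminates all $i$-violations. Applying $\Phi_1, \Phi_2, \ldots, \Phi_n$ in sequence should yield a function $g$ that is monotone in every coordinate (and hence globally monotone) at Hamming distance $\le 2V$ from $f$ by the triangle inequality, so $\varepsilon \le 2V/2^n$, giving $V \ge \varepsilon\cdot 2^{n-1}$.

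The main obstacle is the structural lemma that $\Phi_i$ does not increase $V_j$ for any $j \neq i$; without it, later sorting rounds could undo earlier ones and the Hamming-distance bound would break. I would prove this by a finite case check on each two-dimensional face $\{a, a+e_i, a+e_j, a+e_i+e_j\}$, verifying that replacing the values on the two $i$-edges by their sorted versions cannot create a $j$-violation that wasn't already present. Concretely, writing the four values as a $2\times 2$ Boolean matrix with entries $\alpha,\beta,\gamma,\delta$, this reduces to the elementary inequality $[\min(\alpha,\beta) > \min(\gamma,\delta)] + [\max(\alpha,\beta) > \max(\gamma,\delta)] \le [\alpha > \gamma] + [\beta > \delta]$ for $\alpha,\beta,\gamma,\delta \in \{0,1\}$, checkable in sixteen cases. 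Summing this per-face inequality over all $(i,j)$-faces yields the required monotonicity of $V_j$ under $\Phi_i$ and completes the argument.
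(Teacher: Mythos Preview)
The paper does not actually prove this theorem; it is stated with a citation to \cite{Goldreich2000} and used as a black box in the proof of \Cref{theorem:quantum-monotonicity-testing}. Your argument is correct and is precisely the classical switching/shifting proof from \cite{Goldreich2000}: the reformulation as $2V/(n\cdot 2^n)$, the upper bound via covering violating edges by the disagreement set, and the lower bound via iterated coordinate-wise sorting together with the key ``sorting one direction does not increase violations in another'' lemma (your $2\times 2$ min/max inequality) are exactly the ingredients of that original proof.
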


Therefore, we can solve \Cref{problem:classical-monotonicty-testing} from only $\mathcal{O}(n/\varepsilon)$ many queries to the unknown function, which was exactly the celebrated conclusion of \cite{Goldreich2000}. While this query complexity does depend on $n$, the dependence is only logarithmic in the size of the function domain, and it in particular is exponentially better than the $n$-dependence in the query complexity of learning monotone functions \cite{DBLP:conf/focs/BlumBL98}.

Our passive quantum monotonicity tester also crucially relies on \Cref{theorem:classical-monotonicity-testing-soundness}.
Here, we first reinterpret the probability appearing in \Cref{eq:classical-monotonicity-testing-soundness} in terms of Fourier-analytic quantities, which we then estimate based on quantum Fourier sampling. 
Our procedure is summarized in \Cref{algorithm-monotonicity-tester-fourier}, and our next theorem establishes that it is both complete and sound.

\begin{algorithm}
  \caption{Monotonicity testing from quantum examples}
  \begin{algorithmic}[1]
        \item[\textbf{Input:}] accuracy parameter $\varepsilon\in (0,1)$; confidence parameter $\delta\in (0,1)$; $\tilde{\mathcal{O}}\left(\frac{n^{2} \log(1/\delta)}{\varepsilon^{2}}\right)$ many copies of a function state $\ket{f} = \frac{1}{\sqrt{2^n}}\sum_{x \in \{0, 1\}^n} \ket{x, f(x)}$. 
        \item[\textbf{Output:}] ``accept'' or ``reject''.
        \item[Initialization:] $\varepsilon_2=\frac{\varepsilon}{3}$, $\varepsilon_5=\frac{\varepsilon}{3n}$, $\delta_1=\delta_2=\frac{\delta}{3}$, $\delta_5=\frac{\delta}{3n}$, $m_1=\max\{3m_2, \left\lceil 18\ln(2/\delta_1)\right\rceil\}$, $m_2=\left\lceil \frac{n^2 \ln(2/\delta_2)}{2\varepsilon_2^2} \right\rceil$, $m_4=m_5=\left\lceil \frac{4\ln (2/\delta_5)}{\varepsilon_5^2}\right\rceil$ 
        \State Use $m_1$ many copies of $\ket{f}$ to produce $m_2$ many Fourier samples $S_1,\ldots, S_{m_2}\subseteq [n]$ from $g=(-1)^{f}$.
        \State Take $\hat{\mathbf{I}} = \frac{1}{m_2}\sum_{\ell=1}^{m_2} |S_\ell|$.
        \State Use $m_4$ many copies of $\ket{f}$ to generate $m_5$ many classical samples $(x_1,f(x_1)),\ldots,(x_{m_5}, f(x_{m_5})$ from $f$. 
        \For{$1\leq i\leq n$}
            \State Take $\tilde{g}_i = \frac{1}{m_5}\sum_{k=1}^{m_5} (-1)^{x_k\cdot e_i + f(x_k)}$.
        \EndFor
        \State Set $\hat{p} = \frac{1}{2n}\hat{\mathbf{I}} - \frac{1}{2n}\sum_{i=1}^n \tilde{g}_i$.
        \State If $\hat{p}\leq \varepsilon/3n$, conclude that $f$ is monotone and accept. If $\hat{p}\geq 2\varepsilon/3n$, conclude that $f$ is $\varepsilon$-far from all monotone functions and reject. 
    \end{algorithmic}\label{algorithm-monotonicity-tester-fourier}
\end{algorithm}

\begin{theorem}[Passive quantum monotonicity testing]\label{theorem:quantum-monotonicity-testing}
    \Cref{algorithm-monotonicity-tester-fourier} is an efficient quantum algorithm that uses $\tilde{\mathcal{O}}\left(\frac{n^{2} \log(1/\delta)}{\varepsilon^{2}}\right)$ copies of $\ket{f} = \frac{1}{\sqrt{2^n}}\sum_{x\in\{0,1\}^n} \ket{x, f(x)}$ to decide, with success probability $\geq 1-\delta$, whether $f$ is monotone or $\varepsilon$-far from monotone.
\end{theorem}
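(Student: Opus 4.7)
The plan is to view \Cref{algorithm-monotonicity-tester-fourier} as an unbiased-estimator scheme for the pair-violation probability $p$ appearing in \Cref{theorem:classical-monotonicity-testing-soundness}, combined with the classical soundness bound $\varepsilon/n \leq p \leq 2\varepsilon$. The algorithm forms $\hat p = \frac{1}{2n}\hat{\mathbf{I}} - \frac{1}{2n}\sum_i \tilde g_i$, so the proof reduces to (a) verifying an exact Fourier identity $p = \frac{1}{2n}\mathbf{I}[f] - \frac{1}{2n}\sum_i \widehat{g}(\{i\})$ for $g := (-1)^f$, and (b) showing that $\hat{\mathbf{I}}$ and the $\tilde g_i$ concentrate around their respective targets.

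For (a), I would carry out an edge-by-edge calculation. Let $G_i$ and $V_i^{\mathrm{edge}}$ denote the numbers of ``good'' and ``violating'' sensitive edges in direction $i$, so that $G_i + V_i^{\mathrm{edge}}$ is the total number of sensitive edges in direction $i$. Observing that $g(x) - g(x+e_i) \in \{-2, 0, +2\}$ with sign determined by whether the edge is good or violating yields $\widehat{g}(\{i\}) = 2(G_i - V_i^{\mathrm{edge}})/2^n$, while the standard influence formula gives $\mathrm{Inf}_i[f] = 2(G_i + V_i^{\mathrm{edge}})/2^n$. Summing over $i$ and subtracting isolates the violating edges, giving $\mathbf{I}[f] - \sum_i \widehat{g}(\{i\}) = 4\sum_i V_i^{\mathrm{edge}}/2^n$. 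Since each violating edge contributes two ordered pairs $(x,i)$ to the event in \Cref{eq:local-monotonicity-condition}, one has $p = 2\sum_i V_i^{\mathrm{edge}}/(n\,2^n)$, which yields the claimed identity.

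For (b), I would justify the two subroutines and then apply Hoeffding. Computational-basis measurement on $\ket{\Psi_f}$ produces a uniform classical sample $(x, f(x))$ at no cost, so $m_4 = m_5$. For Fourier sampling, a Hadamard on the answer register of $\ket{\Psi_f}$ followed by a measurement there collapses, with probability $1/2$, the input register onto the phase state $2^{-n/2}\sum_x (-1)^{f(x)}\ket{x}$; a subsequent round of Hadamards and a computational-basis measurement then produces $S \subseteq [n]$ with probability $\widehat{g}(S)^2$. A Chernoff bound ensures that $m_1 = \max\{3m_2, \Theta(\log(1/\delta_1))\}$ copies yield $m_2$ Fourier samples except with probability $\delta_1$. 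Since $|S_\ell| \in [0,n]$, Hoeffding with $m_2 = \Theta(n^2 \log(1/\delta_2)/\varepsilon^2)$ gives $|\hat{\mathbf{I}} - \mathbf{I}[f]| \leq \varepsilon/3$ except with probability $\delta_2$. Since each $(-1)^{x_k \cdot e_i + f(x_k)} \in \{\pm 1\}$, Hoeffding with $m_5 = \Theta(n^2 \log(n/\delta_5)/\varepsilon^2)$ gives $|\tilde g_i - \widehat{g}(\{i\})| \leq \varepsilon/(3n)$ for each fixed $i$, and a union bound over $i \in [n]$ controls all coordinates simultaneously.

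Conditioned on the above events (a final union bound costs probability $\delta_1 + \delta_2 + n\delta_5 = \delta$), the triangle inequality gives $|\hat p - p| \leq \tfrac{\varepsilon}{6n} + \tfrac{\varepsilon}{6n} = \tfrac{\varepsilon}{3n}$. \Cref{theorem:classical-monotonicity-testing-soundness} then finishes: if $f$ is monotone then $p = 0$ and so $\hat p \leq \varepsilon/(3n)$, yielding an accept; if $f$ is $\varepsilon$-far from monotone then $p \geq \varepsilon/n$ and so $\hat p \geq 2\varepsilon/(3n)$, yielding a reject. The total copy count is $m_1 + m_4 = \tilde{\mathcal{O}}(n^2 \log(1/\delta)/\varepsilon^2)$ and each subroutine uses $\mathcal{O}(n)$ gates. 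I expect the main conceptual step to be the Fourier identity in (a), since it is the bridge making function states relevant for a problem originally phrased as a pair test; once it is in hand, the rest of the argument is routine concentration bookkeeping.
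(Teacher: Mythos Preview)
Your proposal is correct and follows essentially the same approach as the paper: establish the identity $p=\frac{1}{2n}\mathbf{I}[g]-\frac{1}{2n}\sum_i\widehat{g}(\{i\})$, estimate the two terms via Fourier sampling and classical sampling respectively, and finish with Hoeffding plus \Cref{theorem:classical-monotonicity-testing-soundness}. The only cosmetic difference is that the paper derives the identity via the Boolean-derivative formalism ($\frac{(D_i g)^2-D_i g}{2}$ as the violation indicator, then $\mathbb{E}[(D_i g)^2]=\mathbf{Inf}_i[g]$ and $\mathbb{E}[D_i g]=\widehat{g}(\{i\})$), whereas you do the equivalent edge-counting computation directly.
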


In particular, \Cref{theorem:quantum-monotonicity-testing} shows that passive quantum testers can exponentially outperform the classical passive monotonicity 
testing lower bound of $\exp\left(\Omega\left(\min\{\sqrt{n}/\varepsilon,n\}\right)\right)$ \cite{Goldreich2000, black_nearly_2023}.

\begin{proof}

    We begin with a useful rewriting of the probability from \Cref{eq:classical-monotonicity-testing-soundness}.
    To this end, as is commonly done in the analysis of Boolean functions, consider the induced function $g:\{-1,1\}^n\to\{-1,1\}$ obtained from $f$ via the relabeling $0\leftrightarrow 1$ and $1\leftrightarrow -1$.
    Next, we recall the definition of the $i$th derivative in Boolean analysis (compare, \textit{e.g.}, \cite[Definition 2.16]{odonnell2021analysisbooleanfunctions}): For $i\in [n]$, 
    \begin{align*}
        D_i g(x)
        \coloneqq \frac{g(x^{(i\mapsto 1)}) - g(x^{(i\mapsto -1)})}{2}\, ,
    \end{align*}
    where we used the notation $x^{(i\mapsto b)}$ to denote the $n$-bit string obtained from $x$ by replacing the $i$th bit with $b$.
    Consequently, we can compute
    \begin{align*}
        \frac{(D_i g(x))^2- D_i g(x)}{2}
        &=\begin{cases}
            1\quad &\textrm{if } g(x^{(i\mapsto 1)}) = -1 \land g(x^{(i\mapsto -1)}) = 1\\
            0 &\textrm{otherwise}
        \end{cases}\\
        &=\begin{cases}
            1\quad &\textrm{if } f(x^{(i\mapsto 0)}) = 1 \land f(x^{(i\mapsto 1)}) = 0\\
            0 &\textrm{otherwise}
        \end{cases}\, .
    \end{align*}
    Therefore, we can now rewrite our probability of interest as
    \begin{align*}
        &\Pr_{x\sim\{0,1\}^n, i\sim [n]}\left[( x_i=0 \land f(x)=1 \land f(x+e_i)=0)\lor (x_i=1 \land f(x)=0 \land f(x+e_i)=1)\right]\\
        &= \mathbb{E}_{x\sim\{0,1\}^n, i\sim [n]}\left[\frac{(D_i g(x))^2- D_i g(x)}{2}\right]\\
        &= \frac{1}{2}\mathbb{E}_{i\sim [n]}\mathbb{E}_{x\sim\{0,1\}^n}\left[(D_i g(x))^2\right] - \frac{1}{2}\mathbb{E}_{i\sim [n]}\mathbb{E}_{x\sim\{0,1\}^n}\left[D_i g(x)\right]\\
        &= \frac{1}{2}\mathbb{E}_{i\sim [n]}\left[\mathbf{Inf}_i [g]\right] - \frac{1}{2n}\sum_{i=1}^n \widehat{g}(\{i\})\\
        &= \frac{1}{2n}\mathbf{I}[g] - \frac{1}{2n}\sum_{i=1}^n \widehat{g}(\{i\})\, ,
    \end{align*}
    where the second-to-last step used the definition of the $i$th influence (compare \cite[Definition 2.17]{odonnell2021analysisbooleanfunctions}) as well as \cite[Proposition 2.19]{odonnell2021analysisbooleanfunctions}, and where the last step used the definition of the total influence (compare \cite[Definition 2.27]{odonnell2021analysisbooleanfunctions}).

    With this rewriting established, let us first analyze the probabilities that the different steps of \Cref{algorithm-monotonicity-tester-fourier} succeed and discuss what this implies for the estimator $\hat{p}$.
    Then, we will see how this gives rise to completeness and soundness.
    We have the following:
    \begin{itemize}
        \item Using the procedure of \cite{bernstein1997complexity}, one copy of $\ket{f}$ suffices to produce one Fourier sample from $g=(-1)^f$ -- that is, an $n$-bit string sampled from the probability distribution $\{|\hat{g}(S)|^2\}_{S\subseteq [n]}$ -- with success probability $1/2$. Additionally, one knows whether the sampling attempt was successful.\footnote{To see this, note that the procedure works as follows: Apply $H^{\otimes (n+1)}$; measure the last qubit; abort if that produces a $0$, continue if produces a $1$; measure the first $n$ qubits to produce an $n$-bit string. \label{footnote:quantum_fourier_sampling}} So, by simply repeating the above $m_1$ many times, we see that Step 1 succeeds in producing $m_2$ Fourier samples with success probability $\geq 1-\delta_1$.
        \item By a standard Chernoff-Hoeffding bound, we have $\lvert\hat{\mathbf{I}} - \mathbb{E}_{S\sim \mathcal{S}_g}[|S|]\rvert\leq\varepsilon_2$ with success probability $\geq 1-\delta_2$. Here, $\mathcal{S}_g$ denotes the Fourier distribution of $g$, defined via $\mathcal{S}_g(S)=|\widehat{g}(S)|^2$. 
        \item For any $1\leq i\leq n$, by a standard Chernoff-Hoeffding bound, Step 5 produces a $\varepsilon_5$-accurate estimate $\tilde{g}_i$ of $\widehat{g}(\{i\})$ with probability $\geq 1 - \delta_5$.
    \end{itemize}
    Therefore, by a union bound, we have that, with overall  success probability $\geq 1-(\delta_1+\delta_2+n\delta_5) = 1-\delta$, the estimates $\hat{\mathbf{I}}$ and $\tilde{g}_i$ simultaneously satisfy $\lvert\hat{\mathbf{I}} - \mathbb{E}_{S\sim \mathcal{S}_g}[|S|]\rvert\leq \varepsilon_2$ and $|\tilde{g}_i - \widehat{g}(\{i\})|\leq \varepsilon_5$ for all $1\leq i\leq n$.
    We condition on this high probability success event for the rest of the proof.
    In this event, our rewriting of the probability of interest from the beginning of the proof implies:   
    \small
    \begin{align*}
        &\left\lvert \Pr_{x\sim\{0,1\}^n, i\sim [n]}\left[( x_i=0 \land f(x)=1 \land f(x+e_i)=0)\lor (x_i=1 \land f(x)=0 \land f(x+e_i)=1)\right] - \hat{p}\right\rvert\\
        &\leq \frac{1}{2n}\left\lvert \mathbf{I}[g] -\hat{\mathbf{I}} \right\rvert + \frac{1}{2n}\sum_{i=1}^n \left\lvert \widehat{g}(\{i\}) - \tilde{g}_i\right\rvert\\
        &= \frac{1}{2n}\left\lvert \mathbb{E}_{S\sim \mathcal{S}_g}[|S|] - \hat{\mathbf{I}}\right\rvert + \frac{1}{2n}\sum_{i=1}^n \left\lvert \widehat{g}(\{i\}) - \tilde{g}_i\right\rvert\\
        &\leq \frac{\varepsilon_2}{2n} + \frac{\varepsilon_5}{2}\\
        &\leq \frac{\varepsilon}{3n}\, ,
    \end{align*}
    \normalsize
    where we used the identity $\mathbf{I}[g]=\mathbb{E}_{S\sim \mathcal{S}_g}[|S|]$ (compare \cite[Theorem 2.38]{odonnell2021analysisbooleanfunctions}). 
    So, we see that $\hat{p}$ is a $(\varepsilon/3n)$-accurate estimate for our probability of interest.

    To prove completeness of \Cref{algorithm-monotonicity-tester-fourier}, assume $f$ to be monotone.
    Then, \Cref{proposition:local-characterization-monotonicity} and \Cref{theorem:classical-monotonicity-testing-soundness} together with the above inequality imply that $\hat{p}\leq \varepsilon/3n\leq \varepsilon/2n$, and thus the final step in \Cref{algorithm-monotonicity-tester-fourier} correctly concludes that $f$ is monotone and accepts.

    To prove soundness, assume $f$ to be $\varepsilon$-far from monotone. Then, the lower bound in \Cref{theorem:classical-monotonicity-testing-soundness} together with the above inequality implies that $\hat{p} \geq 2\varepsilon/3n$, and thus the final step in \Cref{algorithm-monotonicity-tester-fourier} correctly concludes that $f$ is  $\varepsilon$-far from monotone and rejects.

    The overall number of copies of $\ket{f}$ used by the algorithm is $m_1+m_4$. 
    Plugging in the choices of the different $m_i$, we see that
    \begin{align*}
        m_1+m_4
        &\leq \max\{3m_2, \left\lceil 18\ln(2/\delta_1)\right\rceil\} + \left\lceil \frac{4\ln (2/\delta_5)}{\varepsilon_5^2}\right\rceil\\
        &\leq \max\left\{3\left\lceil \frac{n^2 \ln(2/\delta_2)}{2\varepsilon_2^2} \right\rceil, \left\lceil 18\ln(2/\delta_1)\right\rceil\right\} + \left\lceil \frac{36 n^2 \ln (6n/\delta)}{\varepsilon^2}\right\rceil\\
        &\leq \max\left\{3\left\lceil \frac{9 n^2 \ln(6/\delta)}{2\varepsilon^2} \right\rceil, \left\lceil 18\ln(6/\delta)\right\rceil\right\} + \left\lceil \frac{36 n^2 \ln (6n/\delta)}{\varepsilon^2}\right\rceil \\
        &\leq \tilde{\mathcal{O}}\left(\frac{n^2 \log (1/\delta)}{\varepsilon^2}\right)\, ,
    \end{align*}
    where the $\tilde{\mathcal{O}}$ hides a logarithmic dependence on $n$. 

    The quantum computational efficiency of \Cref{algorithm-monotonicity-tester-fourier} follows immediately from the efficiency of quantum Fourier sampling. 
    The classical computational efficiency is immediately apparent from our sample complexity bounds and the fact that the classical computation is dominated by the complexity of computing the empirical averages in Steps 2 and 4.
\end{proof}

We further note that because of the second inequality in \Cref{theorem:classical-monotonicity-testing-soundness}, the above procedure and proofs can be modified to quantumly efficiently solve the \emph{tolerant} version (as defined in \cite{parnas2006tolerant}) of the monotonicity testing problem---\textit{i.e.}, distinguishing between $f$ being $\varepsilon_1$-close or $\varepsilon_2$-far from monotone---using $\tilde{\mathcal{O}}\left(\frac{n^{2} \log(1/\delta)}{(\varepsilon_2 - \varepsilon_1)^{2}}\right)$ copies of a quantum function state, assuming that $\varepsilon_2 > C n\varepsilon_1$ holds with $C>1$ some constant.\footnote{In more generality, one can see: If an inequality like \Cref{eq:classical-monotonicity-testing-soundness} holds with lower bound $\ell_n(\varepsilon)$ and upper bound $u_n(\varepsilon)$, satisfying $\ell_n(0)=0=u_n(0)$, then estimating the relevant probability to accuracy $\sim \ell_n(\varepsilon_2-\varepsilon_1)$ suffices for tolerant property testing in the parameter range where there is a constant $c\in (0,1/2)$ such that $\ell_n(\varepsilon_2) - c\cdot \ell_n(\varepsilon_2-\varepsilon_1) > u_n(\varepsilon_1)  + c\cdot \ell_n(\varepsilon_2-\varepsilon_1)$.}
Because of this restrictive assumption on how $\varepsilon_1$ and $\varepsilon_2$ relate, this still falls short of a general tolerant passive quantum monotonicity tester.

Let us also note room for a qualitative improvement in our passive quantum monotonicity tester. Classical query-based testers typically enjoy perfect completeness, \textit{i.e.}, they accept monotone functions with unit probability. In contrast, our tester can be made to accept monotone functions with probability arbitrarily close but not equal to $1$. We leave as an open question whether our passive quantum monotonicity tester can be modified to achieve perfect completeness, while enjoying similar guarantees on the quantum sample and time complexity of the procedure.

\section{Fourier sampling does not suffice}
\label{section:FS-not-enough-for-mm}

The passive quantum testers for symmetry and triangle-freeness given in \Cref{subsection;passive-quantum-symmetry-testing,subsection;passive-quantum-trianglefreeness-testing} notably do not use quantum Fourier sampling.
One might ask if this is really necessary, given that Fourier sampling (sometimes augmented by classical samples) suffices for so many other learning and testing tasks.
This section presents a class of functions, Maiorana--McFarland bent functions, which can be tested with $\mc O(1)$ function state copies, but any algorithm relying solely on Fourier samples and classical samples requires super-polynomial classical samples to succeed.

The \textit{Maiorana--McFarland} functions \cite[Section 6.1]{carlet2016four} on $2n$ bits, denoted $\mathrm{MM}_n$, are given by
\begin{align*}
f_h:\F_2^n\times \F_2^n &\to\F_2\\
(x,y) &\mapsto \langle x,y\rangle + h(x)
\end{align*}
where $h$ ranges over all functions $\F_2^n\to\F_2$.
Maiorana--McFarland functions are a subset of the class of \textit{bent} Boolean functions $g:\{0,1\}^m\to\{0,1\}$, which are those with $\widehat{g}(S)^2=1/2^m$ for all $S\subset[m]$ (so they are maximally-far from any $\F_2^n$-linear function).
We note that Maiorana--McFarland functions were also recently used to prove an exponential separation between classical query complexity and a single quantum query in the strict regime of zero-error quantum computation \cite{GirishServedio2025ForrelationExtremallyHard}.

We begin by proving hardness of testing $\mathrm{MM}_n$ using only classical samples $\big( x, f(x)\big)$ and Fourier samples.

\begin{theorem}\label{thm:fourier-sampling-not-sufficient-for-mm}
    Suppose a tester for $\mathrm{MM}_n$ using exclusively classical samples and Fourier samples succeeds with probability more than $1/2+2^{-0.7n}$ for the accuracy parameter $\epsilon=1/2-2^{-0.3n}$.
    Then the tester uses at least $2^{0.1n}$ classical samples.
\end{theorem}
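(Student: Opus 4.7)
The plan is to exhibit two function ensembles that are indistinguishable from few classical and Fourier samples yet witness the yes/no sides of the promise problem. Take $F_{\mathrm{yes}}$ to be the uniform distribution over Maiorana--McFarland functions $f_h(x,y) = \langle x,y\rangle + h(x)$ with $h:\F_2^n\to\F_2$ drawn uniformly at random, and take $F_{\mathrm{no}}$ to be the uniform distribution over the ``duals'' $g_h(x,y) = \langle x,y\rangle + h(y)$. By the standard Le~Cam / total-variation reduction for promise testing with equal priors, the tester's advantage over random guessing is at most $\tfrac12 d_{\mathrm{TV}}(\mathcal{D}_{\mathrm{yes}}, \mathcal{D}_{\mathrm{no}}^{\mathrm{far}})$ where $\mathcal{D}_\bullet$ is the observation distribution and $\mathcal{D}_{\mathrm{no}}^{\mathrm{far}}$ is $\mathcal{D}_{\mathrm{no}}$ conditioned on the drawn $g_h$ being $\varepsilon$-far from $\mathrm{MM}_n$; so it suffices to show this TV distance is strictly less than $2\cdot 2^{-0.7n}$ whenever $t < 2^{0.1n}$.

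First I would dispense with Fourier samples. Every function in $F_{\mathrm{yes}}\cup F_{\mathrm{no}}$ is bent, so each Fourier sample returns a uniformly random element of $\{0,1\}^{2n}$ regardless of the underlying $h$. Hence the distribution of any sequence of Fourier samples is identical under $F_{\mathrm{yes}}$ and $F_{\mathrm{no}}$, and a tester using both types of access can be simulated, with the same classical-sample count and the same success probability, by one that uses only classical samples and generates its ``Fourier samples'' internally as fresh uniform strings.

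Next I would bound the total variation distance between $t$ classical samples under the two ensembles. Let $\mathcal{E}$ denote the event that the sampled $x^{(i)}$ are pairwise distinct and that the sampled $y^{(i)}$ are pairwise distinct; a union bound over collision events gives $\Pr[\mathcal{E}^c] \leq t^2/2^n$. Conditional on $\mathcal{E}$, under both ensembles the observations are of the form $(x^{(i)}, y^{(i)}, \langle x^{(i)}, y^{(i)}\rangle + b_i)$, where the $b_i$ are $t$ i.i.d.\ uniform random bits over the remaining randomness in $h$; so the conditional joint distributions coincide exactly. Hence $d_{\mathrm{TV}}(\mathcal{D}_{\mathrm{yes}}, \mathcal{D}_{\mathrm{no}}) \leq t^2/2^n \leq 2^{-0.8n}$ for $t \leq 2^{0.1n}$.

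Finally I would verify that $F_{\mathrm{no}}$-instances are $\varepsilon$-far from $\mathrm{MM}_n$ with overwhelming probability, which is what lets me pass from $\mathcal{D}_{\mathrm{no}}$ to $\mathcal{D}_{\mathrm{no}}^{\mathrm{far}}$ via the triangle inequality. For fixed $h$, the minimum $L^1$ distance from $g_h$ to any $f_{h'}\in\mathrm{MM}_n$ is $\min_{h'}\Pr_{x,y}[h(y)\neq h'(x)] = p_h(1-p_{h'}) + (1-p_h)p_{h'}$, where $p_h := |h^{-1}(1)|/2^n$; this linear function of $p_{h'}$ is minimized at $\min(p_h,1-p_h) = \tfrac12 - |p_h - \tfrac12|$, attained by taking $h'$ constant. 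Hoeffding's inequality gives $\Pr_h[|p_h - \tfrac12| > 2^{-0.3n}] \leq 2\exp(-2\cdot 2^{0.4n})$, so $g_h$ is $\varepsilon$-far from $\mathrm{MM}_n$ with probability essentially $1$, and $d_{\mathrm{TV}}(\mathcal{D}_{\mathrm{no}}, \mathcal{D}_{\mathrm{no}}^{\mathrm{far}}) \lesssim \exp(-2\cdot 2^{0.4n})$. Chaining bounds, $d_{\mathrm{TV}}(\mathcal{D}_{\mathrm{yes}}, \mathcal{D}_{\mathrm{no}}^{\mathrm{far}}) \leq 2^{-0.8n} + \exp(-\Omega(2^{0.4n})) < 2\cdot 2^{-0.7n}$ for large $n$, contradicting the assumed advantage and forcing $t \geq 2^{0.1n}$. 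The main subtlety I anticipate is keeping the ``no collisions implies identically distributed observations'' step airtight --- independence across distinct queries hinges on the genuine mutual independence of $h$'s truth-table entries at distinct inputs --- but once this is isolated, everything else reduces to a birthday-and-Hoeffding calculation.
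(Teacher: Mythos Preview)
Your proposal is correct and follows essentially the same approach as the paper: both use the yes/no ensembles $\langle x,y\rangle + h(x)$ versus $\langle x,y\rangle + h(y)$, observe that bentness renders Fourier samples uninformative, and argue that conditional on no collisions among the $x^{(i)}$'s and the $y^{(i)}$'s the labels are i.i.d.\ uniform under either ensemble. The only cosmetic differences are that the paper first restricts $h$ to a low-bias set $\mathcal{H}$ to make the far-from-$\mathrm{MM}_n$ condition hold deterministically and then passes to uniform $h$, whereas you work with uniform $h$ throughout and condition on the far event at the end; and you compute the exact distance $\min(p_h,1-p_h)$ from $g_h$ to $\mathrm{MM}_n$, whereas the paper bounds it via the correlation $|\mathrm{bias}(h)\,\mathrm{bias}(m)|$.
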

\begin{proof}
    Define $\mc H$ be the set of Boolean functions on $n$ bits with bias at most $2^{-n/3}$.
    From Chernoff we have that for uniformly random $f:\F_2^n\to\F_2$, the bias almost always satisfies this bound:
    \begin{equation}
        \label{eq:correlation-bound-f}
        \Pr_f\Big[\;\Big|\E_x\left[ (-1)^{f(x)}\right]\Big| \geq 2^{-n/3} \;\Big] \leq \exp(-\mathsf{const}\cdot 2^{n/3})\,.
    \end{equation}
    Now consider the two ensembles of Boolean functions $\F_2^n\times \F_2^n\to\F_2$
    \[\mc F_1 = \big\{(x,y)\mapsto \langle x,y\rangle + h(x)\big\}_{h\sim \mc H}\qquad\text{and}\qquad \mc F_2 = \big\{(x,y)\mapsto \langle x,y\rangle + h(y)\big\}_{h\sim \mc H}\,,\]
    Note that for any $f=\langle x,y\rangle + h(x)\in\mathrm{MM}_n$ and any $g=\langle x,y\rangle + m(y)\in\supp\mc F_2$,
    \begin{align*}
        \Big|\E_{x,y}(-1)^{f(x,y)}(-1)^{g(x,y)}\Big|=\Big|\E_{x,y}(-1)^{h(x)}(-1)^{m(y)}\Big|=|\mathrm{bias}(h)\mathrm{bias}(m)|\leq 2^{-n/3}\,.
    \end{align*}
    Thus all $g\in \supp \mc F_2$ are at least $(1/2-2^{-n/3})$-far from $\mathrm{MM}_n\supseteq \supp \mc F_1$.

    Suppose a testing algorithm $A$ using $R\leq 2^{0.1n}$ classical samples succeeds with probability $\delta$.
    This means that, given access to a function drawn uniformly at random from $\mc F_1$ or from $\mc F_2$ with equal probability, the algorithm $A$ can guess which of the two ensembles the function was drawn from with success probability at least $\delta$.

    Define $\widetilde{\mc F}_1$ (resp. $\widetilde{\mc F}_2$) to be $\mc F_1$ (resp. $\mc F_1$) but where $h$ is drawn uniformly at random from all Boolean functions.
    Because of the bound \eqref{eq:correlation-bound-f}, we know
    \[\left|\Pr_{f\sim\widetilde{\mc F}_1}[A \text{ accepts } f] - \Pr_{f\sim \mc F_1}[A \text{ accepts } f]\right|\leq \exp(-\mathsf{const}\cdot 2^{-n/3}),\]
    and similarly for $\mc F_2, \widetilde{\mc F}_2$.
    This means that in the $\widetilde{\mc F}$ version of the distinguishing game, $A$ succeeds with probability at least $\delta - \exp(-\mathsf{const}\cdot 2^{-n/3})$.

    Let $b\sim\{1,2\}$ and $f\sim\mc F_b$.
    With probability at least $1-{R^2}/2^n$, all $x^{(r)}$'s and all $y^{(r)}$'s in the $R$-many samples are distinct (collision bound and union bound).
    Call this event $D$.
    Conditioned on $D$, the distribution of observed values $\big(f(x^{(1)},y^{(1)}),\ldots, f(x^{(R)},y^{(R)})\big)$ is uniformly random because in $\widetilde{\mc F}_b$, $h$ is a uniformly random Boolean function.
    Thus conditioned on $D$, the data observed is independent of $b$.
    Moreover, all functions under consideration are bent, so Fourier sampling provides no information whatsoever.
    The distinguishing probability is thus bounded by:
    \[\delta \leq \Pr[D]\cdot \frac12 + \Pr[D^c]\cdot 1 + e^{-\mathsf{const}\cdot 2^{-n/3}} \leq \frac12 + \frac{R^2}{2^n}+ e^{-\mathsf{const}\cdot 2^{-n/3}} \leq \frac{1}{2} + 2^{-0.7n}\]
    for $n$ large enough.
\end{proof}

Having established that the Maiorana--McFarland class is hard to test from classical samples and Fourier samples alone, we now give a very efficient passive quantum tester for $\mathrm{MM}_n$.
While this tester still uses quantum Fourier sampling at the end of the algorithm, it crucially preprocesses the function state in superposition before applying performing Fourier sampling.

\begin{theorem}\label{thm:quantum-mm-tester}
    There is an efficient quantum algorithm that uses $\mathcal{O}(1)$ copies of the function state $\ket{f}=\frac{1}{2^n}\sum_{x,y\in\{0,1\}^n}\ket{x,y,f(x,y)}$ to decide, with success probability $\geq 2/3$, whether $f$ is in $\mathrm{MM}_n$ or $(1/3)$-far from $\mathrm{MM}_n$.
\end{theorem}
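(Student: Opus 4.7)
The plan is to execute the sketch given in the introduction. Define the diagonal unitary $U:\ket{x,y,b}\mapsto\ket{x,y,b\oplus\langle x,y\rangle}$, which can be implemented efficiently. Applying $U$ to one copy of $\ket{f}$ yields the function state $\ket{g}$ of the new Boolean function $g(x,y):=f(x,y)\oplus\langle x,y\rangle$. By definition, $f\in\mathrm{MM}_n$ iff $g$ depends only on the first $n$ coordinates (i.e., $g$ is an $[n]$-junta), and $\|f-\mathrm{MM}_n\|_1$ equals the Hamming distance from $g$ to the class of $[n]$-juntas. The tester then runs quantum Fourier sampling on $\ket{g}$ (the standard procedure from \cite{bernstein1997complexity}, which returns a subset $S\subseteq[2n]$ drawn from $\widehat{(-1)^g}(S)^2$ conditioned on its $\tfrac12$-probability success flag) and \emph{rejects} iff the outcome $S$ intersects the ``$y$-coordinates'' $\{n{+}1,\ldots,2n\}$.

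Completeness is immediate: if $f\in\mathrm{MM}_n$ then $(-1)^g$ is a function of $x$ alone, so $\widehat{(-1)^g}(S)=0$ whenever $S\not\subseteq[n]$, and the tester accepts with probability $1$ on every successful Fourier sample.

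For soundness, assume $f$ is $(1/3)$-far from $\mathrm{MM}_n$, so $g$ is $(1/3)$-far from every $[n]$-junta. The $L^2$-closest $[n]$-junta to the $\pm1$-valued $(-1)^g$ is its partial expectation $G(x):=\E_{y}[(-1)^{g(x,y)}]$, and Parseval gives the identity
\begin{equation*}
\sum_{S\not\subseteq[n]}\widehat{(-1)^g}(S)^2 \;=\; \E_x\!\bigl[1-G(x)^2\bigr] \;=\; 4\,\E_x[p_x(1-p_x)],
\end{equation*}
where $p_x=\Pr_y[g(x,y)=1]$. The best \emph{Boolean} $[n]$-junta approximation is the majority $\tilde g(x):=\mathrm{sgn}(G(x))$, whose Hamming distance to $g$ equals $\E_x[\min(p_x,1-p_x)]$. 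Using the elementary inequality $\min(p,1-p)\le 2p(1-p)$, one obtains
\begin{equation*}
\mathrm{dist}(g,\,[n]\text{-juntas})\;\le\;\E_x[\min(p_x,1-p_x)]\;\le\;2\,\E_x[p_x(1-p_x)]\;=\;\tfrac12\!\!\sum_{S\not\subseteq[n]}\!\widehat{(-1)^g}(S)^2.
\end{equation*}
Since the left side is $\ge1/3$, the Fourier tail mass outside $[n]$ is at least $2/3$, so a single successful Fourier sample rejects with probability $\ge 2/3$.

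Putting it together: drawing a constant number of copies of $\ket{f}$, applying $U$, and Fourier-sampling each copy yields an $\Omega(1)$-fraction of successful Fourier samples (in expectation), and a single such sample already distinguishes the two cases with probability $\ge 2/3$. The total copy cost is $\mathcal{O}(1)$ and every step of the procedure is efficiently implementable. The only non-routine part is the Fourier-tail-vs-junta-distance inequality above, but this is the standard $L^1/L^2$ comparison for Boolean functions and poses no real obstacle; the rest is a mechanical composition of $U$ with the Bernstein--Vazirani-style sampling routine.
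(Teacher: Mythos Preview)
Your proposal is correct and follows essentially the same approach as the paper: apply the unitary $U$ to reduce to testing whether $g=f\oplus\langle x,y\rangle$ depends only on the first $n$ coordinates, then Fourier-sample and check whether the outcome intersects $\{n{+}1,\ldots,2n\}$. Your soundness inequality $\min(p,1-p)\le 2p(1-p)$ actually gives a sharper constant (tail mass $\ge 2/3$, versus the paper's $\ge 1/3$ via the standard rounding bound $\Pr[\tilde f\neq\tilde g]\le\|\tilde f-g\|_2^2$), which lets you skip the empirical-average-and-threshold step the paper uses; just make sure in the final write-up that you account for the $1/2$ Fourier-sampling success probability---e.g., reject if \emph{any} of a constant number of successful samples lands outside $[n]$---so the overall success probability is genuinely $\ge 2/3$ rather than $\tfrac12\cdot\tfrac23$.
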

\begin{proof}
    Let $U$ denote the $(2n+1)$-qubit unitary acting as $\ket{x,y,b}\mapsto \ket{x,y,b\oplus \langle x,y\rangle}$.
    Note that $U$ can be implemented by a quantum circuit with $\mathcal{O}(n)$ many $2$-qubit gates and depth $\mc O(\log n)$.
    Moreover, notice that $U\ket{f}=\ket{\tilde{f}}$ for $\tilde{f}(x,y):=f(x,y)\oplus \langle x,y\rangle$.

    The quantum algorithm works as follows. 
    Recall that one function state copy suffices to obtain one Fourier sample with success probability $1/2$, using only $2n$ many single-qubit gates.  
    Applying this Fourier sampling subroutine to $\mathcal{O}(1)$ many copies of $U\ket{f}$ thus suffices to obtain, with success probability $\geq 5/6$, $m \geq  C$  many Fourier samples $S_1,\ldots, S_m\subseteq[2n]$ of the function $\tilde{f}$, where $C>0$ is a universal constant to be chosen later.
    Let $J=\{n+1,n+2,\ldots, n\}$ and compute
    \[\hat{p}= \frac1m\left|\big\{1\leq k\leq m: J\cap S_k\neq\emptyset\big\}\right|\,.\]
    If $\hat{p}\leq 1/9$, output ``$f\in \mathrm{MM}_n$''. Otherwise, output ``$f$ is $(1/3)$-far from $\mathrm{MM}_n$.''

    First, let us show completeness of the protocol. So, suppose $f\in\mathrm{MM}_n$. Then there is a function $h:\mathbb{F}_2^n\to\mathbb{F}_2$ such that 
    \begin{equation*}
        U\ket{f}
        = \frac{1}{2^n}\sum_{x,y}\ket{x,y,h(x)} = \ket{h} \, ,
    \end{equation*}
    where we abused notation by using $h$ to denote the function $h:\mathbb{F}_2^n\times \mathbb{F}_2^n\to\mathbb{F}_2$ defined as $h(x,y)=h(x)$. 
    As $h(x,y)$ depends only on the first $n$ variables, we have
    \[p\coloneqq \sum_{S\subseteq [2n],\, J\cap S\neq \emptyset} |\hat{h}(S)|^2 = 0\,.\]
    The constant $C$ can be chosen such that, conditioned on the high probability event that we obtained at least $C$ many Fourier samples, we have $|p-\hat{p}|\leq 1/9$ with probability $\geq 5/6$ (by Chernoff-Hoeffding). 
    So $\hat{p}\leq 1/9$, and our tester correctly outputs ``$f\in\mathrm{MM}_n$'' with probability $\geq 2/3$.

    Next, we analyze soundness. So, suppose $f$ is $(1/3)$-far from $\mathrm{MM}_n$. Equivalently, $\tilde{f}(x,y)=f(x,y)\oplus\langle x,y\rangle$ is $(1/3)$-far from any Boolean function $h$ that depends only on the first $n$ variables, $h(x,y)=h(x)$, where we again abused notation.
    Consider the function $g$ defined as \[g(x,y)=\sum_{S\subseteq [2n]:J\cap S=\emptyset}\hat{\tilde{f}}(S)\chi_S(x,y)\, .\]
    Notice that $g(x,y)$ depends only on $x$, but $g$ is in general not Boolean.
    Define $\tilde{g}(x,y)=\mathbf{1}_{g(x,y)\geq 1/2}$. Notice that $\tilde{g}$ is a Boolean function and that $\tilde{g}(x,y)$ depends only on $x$. Then (compare \cite[Fact II.2]{atici2007quantum}) we have
    \begin{align*}
        \frac{1}{3}
        \leq \mathbb{P}_{x_1,x_2}[\tilde{f}(x_1,x_2)\neq \tilde{g}_2(x_2)]
        \leq \mathbb{E}_{x_1,x_2}[(\tilde{f}(x_1,x_2) - g(x_1,x_2))^2]
        = \sum_{S\subset [2n]: J\cap S\neq \emptyset}\left\lvert \hat{\tilde{f}}(S)\right\rvert^2
        = p\, .
    \end{align*}
    Again, conditioned on having produced at least $C$ many Fourier samples, with probability $\geq 5/6$, we have $|p-\hat{p}|\leq 1/9$ and thus $\hat{p}\geq 2/9$. So, our tester correctly outputs ``$f$ is $(1/3)$-far from $\mathrm{MM}_n$'' with probability $\geq 2/3$.
\end{proof}

\begin{remark}
    We note that the above construction can be modified to give a limitation on non-adaptive quantum learning algorithms that use only Fourier samples and classical samples.
    For $\mathcal{H}$ an arbitrary class of Boolean functions on $n$ bits, define the concept class $\mathcal{F}=\mathcal{F}(\mathcal{H})=\{f_h\}_{h\in\mathcal{H}}$, again with $f_h(x,y)=\langle x,y \rangle + h(x)$ as above.
    If we can learn $\mathcal{H}$ efficiently from function states, then we can also learn the class $\mathcal{F}$ with only a constant overhead in the number of copies and a linear overhead in quantum time complexity. 
    To achieve this, apply the quantum learning algorithm for $\mathcal{H}$ on copies of the suitably transformed function state $U\ket{f}$ with $U$ as in the proof of \Cref{thm:quantum-mm-tester}. 
    In contrast, consider a learner that only uses Fourier sampling and classical samples. 
    Because the class 
    $\mathcal{F}$ consists of bent functions, Fourier sampling provides no information. 
    Furthermore, learning an unknown $f\in \mathcal{F}$ is at least as hard as learning an unknown $h \in \mathcal{H}$ from classical samples as any labeled sample for $f$ can be converted to a labeled sample for $h$. 
    Therefore, when limiting the learner to Fourier samples and classical samples, it inherits classical sample complexity lower bounds for learning $\mathcal{H}$. This implies that whenever function state copies offer a quantum advantage over classical samples in learning $\mathcal{H}$ (e.g., for $k$-juntas~\cite{chen2023testing} and for degree-$d$ polynomials over $\mathbb{F}_2$~\cite{arunachalam2023optimal}), we obtain a separation between quantum learning and
    quantum learning using only Fourier sampling and classical samples.
\end{remark}

\section{Separating passive quantum from query-based classical property testing}\label{section:separating-classical-queries-from-function-states}

In this section we give a property for which classical queries have exponential advantage over quantum testing from function states.
This property is closely related to the inability of quantum computers to measure the intersection of three subset states, where for a subset $S\subseteq \mathbb{F}_2^n$, the corresponding subset state is defined as $\ket{S}=\frac{1}{\sqrt{|S|}}\sum_{x\in S}\ket{x}$. We explain this connection at the end of the section.

The main result of this section is the following theorem.

\begin{theorem}
    \label{thm:qsamp-cquery-sep}
    There exist two sets of Boolean functions $F_0,F_1$ such that
    \[\min_{f_0\in F_0,f_1\in F_1} \|f_0-f_1\|_1\geq \frac{1}{64}\]
    and such that:
    \begin{itemize}
        \item Any passive quantum tester requires $\Omega(2^{n/2})$ copies of a function state to distinguish $F_0$ and $F_1$ with constant probability $2/3$.
        \item $F_0$ and $F_1$ may be distinguished with probability $2/3$ from $\mc O(1)$ classical queries.
    \end{itemize}
\end{theorem}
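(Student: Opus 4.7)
The plan is to work with the two ensembles hinted at in the introduction and to exploit a ``distinct-$x$'' projection to show that their $t$-copy states are indistinguishable until $t=\Omega(2^{n/2})$.

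First, encode set triples by functions: given $(A,B,C)\subseteq\{0,1\}^n$, define $f_{A,B,C}:\{0,1,2\}\times\{0,1\}^n\to\{0,1\}$ via $f(0,\cdot)=\mathbf{1}_A$, $f(1,\cdot)=\mathbf{1}_B$, $f(2,\cdot)=\mathbf{1}_C$. Let $F_0$ be the set of such $f$ with $A\cap B\cap C=\emptyset$ and $F_1$ the set of such $f$ with $|A\cap B\cap C|\geq 2^n/16$. For any $f_0\in F_0$ and $f_1\in F_1$, every $x\in A_1\cap B_1\cap C_1$ is missing from at least one of $A_0,B_0,C_0$ and so contributes at least $1$ to $|A_0\Delta A_1|+|B_0\Delta B_1|+|C_0\Delta C_1|$. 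Hence
\[\|f_0-f_1\|_1=\frac{|A_0\Delta A_1|+|B_0\Delta B_1|+|C_0\Delta C_1|}{3\cdot 2^n}\geq \frac{1}{48}\geq\frac{1}{64}.\]
The classical query tester is immediate: query $f(0,x),f(1,x),f(2,x)$ for $O(1)$ uniformly random $x$ and accept iff no sampled triple is all-ones; each triple is all-ones with probability $0$ in $F_0$ and at least $1/16$ in $F_1$, so a constant number of trials separates the two cases with probability $\geq 2/3$.

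For the quantum lower bound, consider the ensembles $\mathcal{D}_0$ (pick $A,B\subseteq\{0,1\}^n$ uniformly and set $C=A\Delta B$) and $\mathcal{D}_1$ (pick $A,B,C$ iid uniform). Every $f\sim\mathcal{D}_0$ lies in $F_0$ since $(A\Delta B)\cap A\cap B=\emptyset$, and by a Chernoff bound $f\sim\mathcal{D}_1$ lies in $F_1$ with probability $1-e^{-\Omega(2^n)}$. Any passive tester for $F_0$ vs.\ $F_1$ with success probability $\geq 2/3$ must therefore distinguish the associated $t$-copy mixed states
\[\rho_b^{(t)}\;:=\;\mathbb{E}_{f\sim\mathcal{D}_b}\bigl[\ket{\Psi_f}\!\bra{\Psi_f}^{\otimes t}\bigr],\qquad b\in\{0,1\},\]
with trace distance $\geq 1/3-o(1)$, so it suffices to prove $\|\rho_0^{(t)}-\rho_1^{(t)}\|_1=O(t/2^{n/2})$.

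The key move is to introduce the projector $\Pi$ onto the span of basis states $\ket{(i_1,x_1),b_1,\ldots,(i_t,x_t),b_t}$ in which the $x_j$ (the hypercube coordinates, not the full pairs $(i_j,x_j)$) are all distinct. Under either ensemble the values $\mathbf{1}_A(x),\mathbf{1}_B(x),\mathbf{1}_{A\Delta B}(x)$ at a single $x$ are each marginally Bernoulli$(1/2)$; it is only when a single $x$ is queried via \emph{all three} $i$-values that the constraint $f(0,x)\oplus f(1,x)\oplus f(2,x)=0$ distinguishing $\mathcal{D}_0$ from $\mathcal{D}_1$ becomes visible. Conditional on distinct $x_j$'s each such $x$ is touched by at most one $i_j$, so the $t$-tuple of function values is iid Ber$(1/2)$ under both ensembles, giving $\Pi\rho_0^{(t)}\Pi=\Pi\rho_1^{(t)}\Pi$. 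Now $\mathrm{Tr}(\Pi\rho_b^{(t)})$ equals the probability that $t$ iid uniform draws from $\{0,1\}^n$ are distinct, which is at least $1-t^2/2^n$ by a union bound. A gentle-measurement bound then yields $\|\rho_b^{(t)}-\Pi\rho_b^{(t)}\Pi\|_1\leq 2t/2^{n/2}$, and the triangle inequality closes out $\|\rho_0^{(t)}-\rho_1^{(t)}\|_1\leq 4t/2^{n/2}$, so distinguishing with probability $\geq 2/3$ forces $t=\Omega(2^{n/2})$. The main subtlety is identifying the right projection: the naive ``all pairs $(i_j,x_j)$ distinct'' projector would not suffice, because two copies might still share the same $x$ via different $i$-values and three such could trigger the constraint; it is the distinct-$x$ projector that precisely kills the terms where $\mathcal{D}_0$ and $\mathcal{D}_1$ disagree.
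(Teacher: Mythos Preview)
Your architecture matches the paper's: encode triples as functions, use the ensembles $(A,B,A\Delta B)$ versus i.i.d.\ $(A,B,C)$, project onto the ``distinct-$x$'' subspace $\Pi$, argue the projected ensembles coincide, and bound the projection error by $O(t/2^{n/2})$. The distance bound, the classical tester, and the gentle-measurement step are all fine.

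There is one genuine gap, in the sentence justifying $\Pi\rho_0^{(t)}\Pi=\Pi\rho_1^{(t)}\Pi$. You write that after the distinct-$x$ projection ``each such $x$ is touched by at most one $i_j$, so the $t$-tuple of function values is iid $\mathrm{Ber}(1/2)$''. That is a statement about a single basis index and at best controls the \emph{diagonal} of the matrix. An off-diagonal entry is indexed by a row tuple $((i_j,x_j,b_j))_j$ and a column tuple $((i'_j,x'_j,b'_j))_j$, each with distinct $x$-coordinates, but the \emph{same} $x$ can appear once in the row and once in the column. The entry then factors over distinct $x$-values into terms of the form
\[
\mathbb{E}\bigl[\mathbf{1}\{f(i,x)=b\}\cdot\mathbf{1}\{f(i',x)=b'\}\bigr]
\]
with possibly $i\neq i'$, i.e.\ a genuinely $2$-wise expectation at a single $x$. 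What actually makes $\mathcal D_0$ and $\mathcal D_1$ agree here is that any \emph{pair} among $(\mathbf 1_A(x),\mathbf 1_B(x),\mathbf 1_{A\Delta B}(x))$ is jointly distributed as two independent fair coins (XOR-ing an independent fair bit preserves independence), so all $2$-marginals match the i.i.d.\ case. Your high-level claim that ``only when all three $i$-values touch the same $x$ does the constraint become visible'' is exactly the right insight, but the step you take from it overstates what $\Pi$ buys: it guarantees at most one touch per $x$ \emph{from each side}, hence at most two total, and it is the pairwise independence of $(A,B,A\Delta B)$ that closes the argument. This is precisely the case analysis the paper carries out.
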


The families $F_0$ and $F_1$ arise from certain encodings of triples of subsets $A,B,C\subseteq\{0,1\}^n$.
Consider the class of Boolean functions $f_{(A,B,C)}:\{0,1\}^{n+2}\to\{0,1\}$ on $n+2$ bits parameterized by subsets $A,B,C\subseteq\{0,1\}^n$ and defined as follows:
\[f_{(A,B,C)}(x,a)=\begin{cases}
    \mathbf{1}_{x\in A} &  a = 00\\
    \mathbf{1}_{x\in B} &  a = 01\\
    \mathbf{1}_{x\in C} &  a = 10\\
    0 &  a = 11
\end{cases}\, .\]
With $A,B,C$ drawn uniformly from subsets of $\{0,1\}^n$, we define two function state ensembles $\{\ket{f_{(A,B,C)}}\}_{A,B,C}$ and $\{\ket{f_{(A,B,A\Delta B)}}\}_{A,B}$, with their mixed state average over $t$-copy states given by
\begin{align*}
    \mc E_0 = \E_{A,B,C} \left[\Ket{f_{(A,B,C)}}\Bra{f_{(A,B,C)}}^{\otimes t}\right],\qquad 
    \mc E_1 = \E_{A,B} \left[\Ket{f_{(A,B,A\Delta B)}}\Bra{f_{(A,B,A\Delta B)}}^{\otimes t}\right]\, .
\end{align*}

We now show that these two mixed state averages over $t$-copy states are close in trace distance unless $t$ scales exponentially in $n$. This means that exponentially-in-$n$ many copies are needed to distinguish between the two function state ensembles.

\begin{theorem}
    \label{thm:subset-indist}
    $\|\mc E_0 - \mc E_1\|_1 \leq \mc O(t/2^{n/2}).$
\end{theorem}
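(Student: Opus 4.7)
The plan is to follow the strategy sketched in the introduction: define a projector $\Pi$ onto the \emph{distinct subspace} --- the subspace spanned by computational-basis states $\bigotimes_{i=1}^t \ket{x_i, a_i, b_i}$ for which the input strings $x_1,\dots,x_t \in \{0,1\}^n$ are pairwise distinct --- and reduce the theorem to two claims: (A) the projected ensembles coincide, $\Pi \mc E_0 \Pi = \Pi \mc E_1 \Pi$, and (B) each ensemble is close to its projection, $\|\mc E_b - \Pi \mc E_b \Pi\|_1 = \mc O(t/2^{n/2})$ for $b \in \{0,1\}$. The theorem will then follow by the triangle inequality.

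For (B) I expect the argument to be routine. A computational-basis measurement of one copy of $\ket{f_{(A,B,C)}}$ returns a uniformly random $(x,a) \in \{0,1\}^{n+2}$ together with the deterministic value $f(x,a)$, so across $t$ independent copies a standard birthday/union bound gives $\Pr[\exists i \neq j : x_i = x_j] \leq \binom{t}{2}/2^n$. Consequently $\mathrm{tr}(\Pi \mc E_b) \geq 1 - t^2/2^{n+1}$, and the gentle measurement lemma (applied to each pure state inside the expectation defining $\mc E_b$ and then combined with convexity of the trace norm) converts this into the desired trace-distance bound $\|\mc E_b - \Pi \mc E_b \Pi\|_1 = \mc O(t/2^{n/2})$.

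Claim (A) is the heart of the proof, and I plan to establish it by a direct matrix-element comparison. For basis vectors $\ket{z} = \bigotimes_i \ket{x_i, a_i, b_i}$ and $\ket{z'} = \bigotimes_i \ket{x'_i, a'_i, b'_i}$ in the image of $\Pi$, a short computation from $\ket{f} = 2^{-(n+2)/2}\sum_{x,a}\ket{x,a,f(x,a)}$ gives
\[\bra{z} \mc E_b \ket{z'} = \frac{1}{2^{t(n+2)}}\; \Pr_{\mathrm{ens.~} b}\!\left[\bigwedge_{i=1}^t \bigl(f(x_i,a_i) = b_i\bigr) \;\wedge\; \bigwedge_{i=1}^t \bigl(f(x'_i,a'_i) = b'_i\bigr)\right].\]
Since $A, B, C$ (respectively $A, B$) are independent uniform subsets of $\{0,1\}^n$, this probability factors across the coordinates $x \in \{0,1\}^n$. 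The distinctness of the $x_i$'s and of the $x'_i$'s within each sequence forces each coordinate $x$ to receive at most two evaluation constraints --- one from the unprimed sequence and at most one from the primed. It then suffices to verify that for every pair of $a$-values the joint distribution of the two corresponding function values at a single $x$ agrees between the ensembles. In $\mc E_0$ the three candidate bits $\mathbf{1}_{x\in A}, \mathbf{1}_{x\in B}, \mathbf{1}_{x\in C}$ are mutually independent and uniform; in $\mc E_1$ the three relevant bits $\mathbf{1}_{x\in A}, \mathbf{1}_{x\in B}, \mathbf{1}_{x\in A\Delta B} = \mathbf{1}_{x\in A} \oplus \mathbf{1}_{x\in B}$ are merely \emph{pairwise} independent and uniform --- but pairwise independence is exactly what is needed to match the two-bit marginals, while the $a = 11$ column contributes a deterministic $0$ in both ensembles. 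A small finite case check in $(a_i, a'_j) \in \{00,01,10,11\}^2$ then completes (A).

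The main obstacle I anticipate is the bookkeeping in (A): because distinctness is required only within each sequence (not across), a coordinate $x$ may receive constraints from both sequences simultaneously via some pair $x_i = x'_j$, and one must confirm that the mutual-vs-pairwise-independence gap between the ensembles does not surface at the level of these two-bit marginals. Once the per-coordinate factorization and the finite case check are in place, the rest of the argument reduces to the gentle-measurement step from (B), yielding the claimed $\mc O(t/2^{n/2})$ bound.
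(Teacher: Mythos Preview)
Your proposal is correct and follows essentially the same route as the paper: define the distinct-subspace projector $\Pi$, show $\Pi\mc E_0\Pi=\Pi\mc E_1\Pi$ via a matrix-element computation that factors over coordinates and needs only pairwise independence of the indicator bits (since each coordinate carries at most two constraints), and bound $\|\mc E_b-\Pi\mc E_b\Pi\|_1$ by $\mc O(t/2^{n/2})$. The only cosmetic difference is that the paper handles (B) by an explicit rank-2 spectrum calculation on each pure state rather than invoking gentle measurement, but the resulting bound is identical.
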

\begin{proof}
    It will help to reinterpret $\Ket{f_{(A,B,C)}}$ as a subset state via the rewriting
    \begin{equation}
    \label{eq:fnc-to-subset-state}
    \ket{x,a}\ket{f_{(A,B,C)}(x,a)}=\sum_{b\in\{0,1\}}\mathbf 1_{x\in S_{a,b}}\ket{x,(a,b)},
    \end{equation}
    where $S_{a,b}$ denotes $A,B,C,$ or $\emptyset$ according to $a$ when $b=1$, or the respective complements if $b=0$.
    Using $r$ to represent the concatenation of $a$ and $b$ we may then write
    \[\ket{f_{(A,B,C)}}=\frac{1}{\sqrt{4N}}\sum_{r\in\{0,1\}^3}\sum_{x\in\{0,1\}^n}\mathbf{1}_{x\in S_r}\ket{x,r},\]
    where $N:=2^n$ and, like before, $S_r$ denotes one of $A,B,C,$ or $\emptyset$ or the complements thereof.
    
    With this notation let us consider the basis for the space of $t$ copies of function states given by
    \[\big\{\ket{x_1,r_1,\ldots, x_t, r_t}:x_j\in\{0,1\}^n, r_j\in\{0,1\}^3, j=1,\ldots, t\big\}\,.\]
    Let $\Pi$ denote the projector onto the subspace spanned by those $\ket{x_1,r_1,\ldots, x_t, r_t}$ for which all $x_j$ are distinct.

    First, we claim
    \begin{equation}
    \label{eq:distclose}
        \|\mc E_0 - \Pi \mc E_0 \Pi\|_1,\,\|\mc E_1-\Pi\mc E_1\Pi\|_1\; \leq \; \mc O\left(\frac{t}{\sqrt{N}}\right)\,.
    \end{equation}
    These bounds follow from applying the triangle inequality to the following estimate:
    for any fixed $A,B,C$, we have
    \begin{align}
    \label{eq:ens-pw-est}
        \big\|\Ket{f_{(A,B,C)}}\Bra{f_{(A,B,C)}}^{\otimes t} - \Pi\Ket{f_{(A,B,C)}}\Bra{f_{(A,B,C)}}^{\otimes t}\Pi\big\|_1 &=\sqrt{\left(1+3\frac{4^tN^{\underline t}}{(4N)^t}\right)\left(1-\frac{4^tN^{\underline t}}{(4N)^t}\right)}\\
        &\leq \frac{2t}{\sqrt{N}}\, ,
    \end{align}
    where $(x)^{\underline t}=x(x-1)\ldots (x-t+1)$ denotes falling factorial, and where in the second step we applied the bound
    \[\frac{4^tN^{\underline t}}{(4N)^t}\geq \left(1-\frac{t}{N}\right)^{t}\geq 1-\frac{t^2}{N}.\]
    To see \Cref{eq:ens-pw-est}, note that
    \[M := \Ket{f_{(A,B,C)}}\Bra{f_{(A,B,C)}}^{\otimes t} - \Pi\Ket{f_{(A,B,C)}}\Bra{f_{(A,B,C)}}^{\otimes t}\Pi\]
    has the following block form after reordering columns:
    \vspace{-1em}
\[
\NiceMatrixOptions{cell-space-limits = 10pt}
        M=\frac{1}{(4N)^t}\begin{pNiceArray}{c|c|c}[first-row]
         & &  \\
        \;\hspace{1em}0\hspace{1em}\; & \;\hspace{1em}1\hspace{1em}\; & \hspace{1em}0\hspace{1em}\; \\
        \hline
        \;\hspace{1em}1\hspace{1em}\; & \;\hspace{1em}1\hspace{1em}\; & \hspace{1em}0\hspace{1em}\; \\
        \hline
        \;\hspace{1em}0\hspace{1em}\; & \;\hspace{1em}0\hspace{1em}\; & \hspace{1em}0\hspace{1em}\; \\
        \CodeAfter
        \UnderBrace[ yshift=6pt]{2-1}{3-1}{\scriptstyle 4^tN^{\underline t}}
        \UnderBrace[ yshift=6pt]{2-2}{3-2}{\scriptstyle 4^t(N^t-N^{\underline t})}
        \UnderBrace[ yshift=6pt]{2-3}{3-3}{\scriptstyle \hspace{2em}(8N)^t-(4N)^t}
        \CodeAfter
   \tikz{
   \draw [decorate,decoration = {calligraphic brace,amplitude=4pt},line width=1.25pt]
        ([xshift=4mm]1-3.north east) to node [auto = left] {\;$\scriptstyle 4^tN^{\underline t}$} 
        ([xshift=4mm]1-3.south east);
        \draw [decorate,decoration = {calligraphic brace,amplitude=4pt},line width=1.25pt]
        ([xshift=4mm]2-3.north east) to node [auto = left] {\;$\scriptstyle 4^t(N^t-N^{\underline t})$} 
        ([xshift=4mm]2-3.south east);
        \draw [decorate,decoration = {calligraphic brace,amplitude=4pt},line width=1.25pt]
        ([xshift=4mm]3-3.north east) to node [auto = left] {\;$\scriptstyle(8N)^t-(4N)^t$} 
        ([xshift=4mm]3-3.south east);
        }
        \end{pNiceArray}\hspace{5.5em}.
    \]
    \vspace{2em}

    \noindent This is because $\Ket{f_{(A,B,C)}}\Bra{f_{(A,B,C)}}^{\otimes t}$
    is an all-zeros matrix except for the principal submatrix associated to indices $\big((x_1,r_1),\ldots, (x_t,r_t)\big)$ where $x_j\in S_{r_j}$ for all $j$, and here it is equal to $(4N)^{-t}$.
    There are $(4N)^t$ such entries.
    Moreover, $\Pi\Ket{f_{(A,B,C)}}\Bra{f_{(A,B,C)}}^{\otimes t}\Pi$ is an all-zeros matrix except for the principal submatrix associated to indices $\big((x_1,r_1),\ldots, (x_t,r_t)\big)$ where $x_j\in S_{r_j}$ for all $j$ and $x_j\neq x_k$ for $j\neq k$ and here it is also equal to $(4N)^{-t}$---and there are $4^tN^{\underline t}$ of these entries.
    $M$ thus has rank 2 and its spectrum is easily determined, leading to the estimate in \Cref{eq:ens-pw-est}.

    Now we claim that in fact
    \begin{equation}
    \label{eq:disteq}
    \Pi \mc E_0 \Pi = \Pi \mc E_1 \Pi\,.
    \end{equation}
    Let us consider a specific entry in $\Pi \mc E_0 \Pi$ with row and column indices
    \[\mb r = (\ldots, (x_j, r_j),\ldots), \qquad \mb s= (\ldots, (y_j,s_j),\ldots)\,.\]
    It will be useful to write $\mc S_z=\mc S_z(\mathbf{r},\mathbf{s})$ for the set types that appear with a particular string $z\in\{0,1\}$ in $\mb r$ and $\mb s$.
    That is, for any $z\in\{0,1\}^n$ define
    \[\mc S_z = \mc S_z(\mathbf{r},\mathbf{s})=\big\{q\in\{0,1\}^3 : (z,q)\in \mb r \text{ or } (z,q)\in \mb s\big\}.\]
    Then \[\bra{\mb r}\Pi \mc E_0 \Pi\ket{\mb s} = (4N)^{-t}\E_{A,B,C}\left(\tprod_j\mb 1_{x_j\in S_{r_j}}\right)\left(\tprod_j\mb 1_{y_j\in S_{s_j}}\right)=(4N)^{-t}\prod_{z\in \{x_j\}_j\cup\{y_j\}_j}\E_{A,B,C}\tprod_{q\in \mc S_z}\mathbf{1}_{z\in S_q}.\]
    It follows from the definition of $\Pi$ that $|\mc S_z|\leq 2$ for any $z\in \{x_j\}_j\cup \{y_j\}_j$: there is at most one contribution to $\mc S_z$ from each of $\mb r$ and $\mb s$.
    As a result we have for any $z$ that
    \[\E_{A,B,C\overset{\mathsf{iid}}{\sim} \mc P\{0,1\}^n}\tprod_{q\in \mc S_z}\mathbf{1}_{z\in S_q}=\E_{\substack{A,B\overset{\mathsf{iid}}{\sim} \mc P\{0,1\}^n\\C=A\Delta B}}\tprod_{q\in \mc S_z}\mathbf{1}_{z\in S_q}.\]
    This follows from mild case analysis, the most important part of which is to note that for any $S\neq T\in \{A,B,C,A\Delta B\}$,
    \[(\mb 1_{x\in S},\mb 1_{x\in T}) 
    \sim (b_1,b_2),\]
    where $b_1$ and $b_2$ are i.i.d. Bernoulli $1/2$ random variables.
    So we see $\bra{\mb r}\Pi \mc E_0 \Pi\ket{\mb s}=\bra{\mb r}\Pi \mc E_1 \Pi\ket{\mb s}$ and \Cref{eq:disteq} is satisfied.

    Combining the triangle inequality with \Cref{eq:distclose} and \Cref{eq:disteq} gives the result.
\end{proof}

\begin{proof}[Proof of Theorem \ref{thm:qsamp-cquery-sep}]
    Consider
    \begin{align*}
        &F_0 = \big\{f_{(A,B,C)}:A,B,C\subseteq\{0,1\}^n, 2^{-n}|A\cap B\cap C|\geq 1/16\big\}\\[0.5em]
        \text{and}\quad &F_1 = \big\{f_{(A,B,A\Delta B)}: A,B\subseteq\{0,1\}^n\big\},
    \end{align*}

    First we prove the minimum distance between $F_0$ and $F_1$.
    For any $f_0\in F_0$, there are $2^n/16$ strings $x\in\{0,1\}^n$ such that $f_0(x00)=f_0(x01)=f_0(x10)=1$.
    On the other hand, for all $f_1\in F_1$, by definition there are no strings $x$ with this property.
    Thus the minimum $L_1$ distance between $F_0$ and $F_1$ is at least
    \[\frac{2^n/16}{4\cdot2^{n}}=\frac{1}{64}\,.\]
    
    Now define the state ensembles
    \begin{align*}
        \mc E_0' = \E_{f\sim F_0}\ketbra{f}{f}^{\otimes t}\quad\text{and}\quad\mc E_1 = \E_{f\sim F_1}\ketbra{f}{f}^{\otimes t}.
    \end{align*}
    $\mc E_1$ here is exactly $\mc E_1$ from Theorem \ref{thm:subset-indist}.
    To compare $\mc E_0'$ and $\mc E_0$ from Theorem \ref{thm:subset-indist}, note that for $A,B,C\overset{\mathsf{iid}}{\sim}\mc P\{0,1\}^n$, any string $x$ is in $A\cap B\cap C$ with probability $1/8$ and so from Chernoff we have
    \[\Pr\left[|A\cap B \cap C| < \frac12\cdot\frac{2^n}{8}\right]\leq \exp\left(-\frac{2^n}{64}\right).\]
    This dramatic concentration, together with Theorem \ref{thm:subset-indist} implies
    \[\|\mc E_0'-\mc E_1\|_1\leq \|\mc E_0'-\mc E_0\|_1+\|\mc E_1-\mc E_0\|_1\leq \mc O(t/2^{n/2})\, .\]
    
    To test this property with classical queries, given an unknown $f=f_{(A,B,C)}$ one may simply choose a random $x\in\{0,1\}^n$ and check if $f(x00)=f(x01)=f(x10)=1$.
    This test accepts with probability $1/8$ when $f\in F_0$ and accepts with probability $0$ when $f\in F_1$.
\end{proof}

\subsection{\boldmath $k$-fold intersection is ``unfeelable'' for $k \geq 3$}

In this subsection, we reinterpret \Cref{thm:qsamp-cquery-sep} in the context of subset states. 
Given access to copies of $k$ different subset states $\ket{S_1},\ldots, \ket{S_k}$, it is natural to ask how many copies of each are required to estimate the fractional size of the mutual intersection,
\[\frac{|S_1\cap\cdots\cap S_k|}{2^n}.\]
When $k=2$, this can be readily accomplished using ideas similar to our algorithms presented above.
In the case of intersection estimation with $k=2$, we have the identity
\[\frac{|S_1\cap S_2|}{2^n}=\braket{S_\text{all}|S_1}\braket{S_1|S_2}\braket{S_2|S_\text{all}},\]
where $S_\text{all}:=\{0,1\}^n$ denotes the full hypercube.
The quantities on the right-hand side are easily estimated via swap tests, so it takes $\mc O(1)$ copies of $\ket{S_1},\ket{S_2}$ to estimate the quantity of interest to any constant additive error.

In contrast, it is a consequence of Theorem \ref{thm:qsamp-cquery-sep} that the same question for $k=3$ has a very different answer: it requires $\Omega(2^{n/2})$ copies to achieve constant additive error.
To see this, note that from any $\ket{f_{(A,B,C)}}$ one may obtain each of $\ket{A}$, $\ket{B}$, and $\ket{C}$ with constant probability by measuring the $a$ and $f(x,a)$ registers, provided that the minimum among $|A|$, $|B|$, and $|C|$ is at least a constant fraction of $2^n$---and this condition is satisfied by the overwhelming majority of functions in the families $F_0$ and $F_1$ of Theorem \ref{thm:qsamp-cquery-sep}.
From $F_0$ and $F_1$ we obtain:
\begin{corollary}
    There are two families $\mc S_0$ and $\mc S_1$ of triples of subsets of $\{0,1\}^n$ such that
    \begin{align*}
    &\forall(A_0,B_0,C_0)\in \mc S_0, \quad |A_0\cap B_0\cap C_0|/2^n \geq 1/16\\[0.5em]
    \text{ and }\qquad&\forall(A_1,B_1,C_1)\in \mc S_1,\quad |A_1\cap B_1\cap C_1|/2^n =0\,,
    \end{align*}
    and yet any quantum algorithm distinguishing the two families via their subset states requires $\Omega(2^{n/2})$ copies of $\ket{A},\ket{B},$ or $\ket{C}$.
\end{corollary}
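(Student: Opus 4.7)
The plan is to lift \Cref{thm:qsamp-cquery-sep} by setting
\[\mc S_0 := \{(A,B,C) : f_{(A,B,C)} \in F_0\}, \qquad \mc S_1 := \{(A,B,A\Delta B) : f_{(A,B,A\Delta B)} \in F_1\},\]
and reducing the subset-state distinguishing task to the function-state task already shown to be hard. The intersection conditions are immediate: every $(A,B,C) \in \mc S_0$ satisfies $|A\cap B\cap C|/2^n \geq 1/16$ by construction of $F_0$, while every $(A,B,A\Delta B) \in \mc S_1$ satisfies $A \cap B \cap (A \Delta B) = \emptyset$, since an element of $A \cap B$ lies in neither $A \setminus B$ nor $B \setminus A$.

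For the lower bound I argue by contrapositive: suppose there is a quantum algorithm $\mc A$ that distinguishes $\mc S_0$ from $\mc S_1$ with probability $\geq 2/3$ using $T'$ copies in total across $\ket{A}, \ket{B}, \ket{C}$. I will simulate $\mc A$ using only $O(T')$ copies of $\ket{f_{(A,B,C)}}$, which by \Cref{thm:qsamp-cquery-sep} forces $T' = \Omega(2^{n/2})$. The simulation uses the extraction procedure described in the paragraph preceding the corollary: measuring the two-bit type register $a$ together with the output qubit of a single function state collapses the $x$-register to $\ket{A}$, $\ket{B}$, or $\ket{C}$ when the outcomes are $(00,1), (01,1),$ or $(10,1)$, with respective probabilities $|A|/(4\cdot 2^n), |B|/(4\cdot 2^n), |C|/(4\cdot 2^n)$.

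These probabilities are bounded below by a positive constant on both families: on $\mc S_0$ each of $A, B, C$ contains $A \cap B \cap C$ and so has density at least $1/16$; on $\mc S_1$, I restrict to triples with $\min(|A|, |B|, |A \Delta B|) \geq 2^n/3$, which uniformly random $A, B$ satisfy with overwhelming probability by Chernoff. This restriction perturbs the mixed state $\mc E_1$ in trace distance by an exponentially small amount, so the indistinguishability conclusion of \Cref{thm:qsamp-cquery-sep} is preserved. A standard multinomial concentration argument then shows $T = O(T')$ function state copies suffice to realize any requested split $t_A + t_B + t_C = T'$ of subset-state copies with high probability, yielding a function-state distinguisher for $F_0$ vs.\ $F_1$ from $O(T')$ copies and completing the reduction.

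The only genuinely delicate step is verifying that restricting $\mc S_1$ to typical triples does not destroy the Theorem~\ref{thm:qsamp-cquery-sep} lower bound; this follows from the doubly-exponentially small failure probability of the Chernoff event and the triangle inequality applied to the mixed-state ensembles of \Cref{thm:subset-indist}. Everything else---the post-measurement identification of subset states and the concentration of the multinomial extraction---is routine.
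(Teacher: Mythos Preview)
Your proposal is correct and follows essentially the same route as the paper: the paper's proof consists of the paragraph immediately preceding the corollary, which observes that subset states $\ket{A},\ket{B},\ket{C}$ can be extracted from $\ket{f_{(A,B,C)}}$ with constant probability whenever the sets have constant density, and that this density condition holds for all of $F_0$ and the overwhelming majority of $F_1$. You spell out the same reduction in more detail (in particular the Chernoff restriction of $\mc S_1$ and the contrapositive framing), but the argument is the same.
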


\subsection{2-fold intersection detection using quantum Fourier sampling}\label{sec:two-fold-intersection}

As noted above, in contrast to 3-fold intersection detection, 2-fold intersection detection is quantumly easy. In this subsection, we show that 2-fold intersection detection from function states can even be achieved with quantum Fourier sampling by exactly estimating the intersection $|A \cap B|/2^n$.

Consider the class of Boolean functions $f_{(A,B)}:\{0,1\}^{n+1}\to\{0,1\}$ on $n+1$ bits parameterized by subsets $A,B\subseteq\{0,1\}^n$ and defined as 
\[f_{(A,B)}(x, a)=\begin{cases}
    \mathbf{1}_{x\in A} &  a = 0\\
    \mathbf{1}_{x\in B} &  a = 1\\
\end{cases} \, .\]

\begin{theorem}[2-fold intersection estimation] 
    There is an efficient 
    quantum algorithm that uses unentangled measurements on $\mathcal{O}\left(\frac{\log(1/\delta)}{\varepsilon^2}\right)$ many independent copies of the function state $\ket{\Psi}=\frac{1}{\sqrt{2^{(n+1)}}}\sum_{x\in\{0,1\}^{n+1}}\ket{x,f_{(A,B)}(x)}$ and estimates $|A \cap B|/2^n$ to additive error $\varepsilon$ with success probability $\geq 1-\delta$. 
\end{theorem}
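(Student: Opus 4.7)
The plan is to combine quantum Fourier sampling with computational-basis measurements, both of which are single-copy (hence unentangled) operations, and to extract $|A\cap B|/2^n$ via a Fourier-analytic identity. Let $g=(-1)^{f_{(A,B)}}$ and introduce its two ``slices'' $g_A(x)=(-1)^{\mathbf{1}_{x\in A}}$ and $g_B(x)=(-1)^{\mathbf{1}_{x\in B}}$, each a $\pm 1$-valued function on $\{0,1\}^n$. Treating the $a$ register as the last coordinate of the $(n+1)$-bit input of $g$, a direct computation gives
\[
\widehat{g}(S,0)=\tfrac{1}{2}\bigl(\widehat{g_A}(S)+\widehat{g_B}(S)\bigr),\qquad
\widehat{g}(S,1)=\tfrac{1}{2}\bigl(\widehat{g_A}(S)-\widehat{g_B}(S)\bigr).
\]
Summing squares over $S$ and applying Parseval, one gets that the marginal of a Fourier sample from $g$ on the last bit satisfies $\Pr[S_a=0]=\tfrac{1}{2}+\tfrac{1}{2}\langle g_A,g_B\rangle$.

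Second, I would relate $\langle g_A,g_B\rangle$ to $|A\cap B|$. Since $g_A(x)g_B(x)=(-1)^{\mathbf{1}_{x\in A\Delta B}}$, we have $\langle g_A,g_B\rangle=1-2|A\Delta B|/2^n=1-2(|A|+|B|)/2^n+4|A\cap B|/2^n$, and rearranging gives the target identity
\[
\frac{|A\cap B|}{2^n}\;=\;\frac{1}{2}\Pr[S_a=0]\;+\;\frac{|A|+|B|}{2\cdot 2^n}\;-\;\frac{1}{2}.
\]
The algorithm then splits the copies of $\ket{\Psi}$ into three roughly equal batches. On the first two batches it performs computational-basis measurements, obtaining samples $(x,a,f_{(A,B)}(x,a))$; the estimators $\widehat{\alpha}$ and $\widehat{\beta}$ of $|A|/2^n$ and $|B|/2^n$ are twice the empirical frequencies of the outcomes $(a=0,\text{output}=1)$ and $(a=1,\text{output}=1)$, respectively. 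On the third batch it applies the standard Fourier-sampling subroutine on $\ket{\Psi}$ (Hadamard the output qubit and measure; conditioned on outcome $1$, apply $H^{\otimes(n+1)}$ to the input register and measure), which per copy succeeds with probability $1/2$, and estimates $\Pr[S_a=0]$ by $\widehat{\pi}$, the empirical frequency of the last-bit outcome being $0$ among the successful Fourier samples. Plugging $\widehat{\alpha},\widehat{\beta},\widehat{\pi}$ into the identity yields the estimator for $|A\cap B|/2^n$.

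For the sample complexity and correctness, I apply Hoeffding to each of the three empirical estimates separately with target accuracy $\varepsilon/2$ and failure probability $\delta/4$; this requires $\mathcal{O}(\log(1/\delta)/\varepsilon^2)$ samples per estimator. For the Fourier-sampling batch, I additionally use Chernoff to ensure that with probability $1-\delta/4$ the number of successful Fourier samples is at least half the batch size, which only inflates the copy count by a constant factor. A union bound then gives overall success probability $\geq 1-\delta$, and propagating the $\varepsilon/2$-accuracy of each of the three estimates through the linear combination above yields additive error at most $\varepsilon$ on $|A\cap B|/2^n$. I do not anticipate a real obstacle: the only non-routine step is the Fourier-analytic identity linking $\Pr[S_a=0]$ to $\langle g_A,g_B\rangle$ and hence to $|A\cap B|$; everything else is standard concentration and error propagation.
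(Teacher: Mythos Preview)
Your proposal is correct and follows essentially the same approach as the paper. The paper phrases the key quantity as the influence $\mathbf{Inf}_{n+1}[f]=\sum_{S\ni n+1}\widehat f(S)^2$, which equals $\Pr[S_a=1]=1-\Pr[S_a=0]$ in your notation, and then uses $\mathbb{E}_x[f(x,0)f(x,1)]=1-2\,\mathbf{Inf}_{n+1}[f]$ together with classical estimates of $|A|/2^n$ and $|B|/2^n$; your direct Parseval/Plancherel derivation of $\Pr[S_a=0]=\tfrac12+\tfrac12\langle g_A,g_B\rangle$ is exactly the same identity in different clothing, and the remainder (computational-basis sampling for the set sizes, Hoeffding, union bound) is identical.
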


\begin{proof}
    For ease of notation, we write $f=f_{(A,B)}$ throughout the proof.
    First, we will show how to estimate a particular linear combination of set sizes $|A|, |B|$ and $|A \cap B|$ using quantum Fourier sampling. 
    As is common in Boolean Fourier analysis, relabel the output of our function  $0 \mapsto 1$ and $1 \mapsto -1$. 
    Consider the expression $\mathbb{E}_x \left[ f(x, 0) \cdot f(x, 1) \right]$. 
    Simple case analysis reveals that this expression equals $\frac{1}{2^n} \left( 4|A \cap B| - 2|A| - 2|B| + 2^n \right)$.
    
    Furthermore, we can rewrite $\mathbb{E}_x \left[ f(x, 0) \cdot f(x, 1) \right]=1 - 2 \cdot \Pr[f(x,0) \ne f(x,1)]=1 - 2 \cdot \mathbf{Inf}_{n+1} [f]$ (see \cite{odonnell2021analysisbooleanfunctions}, Definition 2.13). 
    Using that $\mathbf{Inf}_{n+1} [f] = \sum_{S\ni n+1}\widehat{f}(S)^2$ (see \cite{odonnell2021analysisbooleanfunctions}, Theorem 2.20), we can estimate $\mathbf{Inf}_{n+1} [f]$ as the probability that quantum Fourier sampling produces a set that contains $n+1$.
    By Chernoff-Hoeffding, we can estimate this quantity to additive error $\frac{2\varepsilon}{3}$ and success probability $\frac{\delta}{3}$ using $\mathcal{O}(\log(1/\delta)/\varepsilon^2)$ copies of our state (compare the discussion around \Cref{footnote:quantum_fourier_sampling}).
    
    By standard Chernoff-Hoeffding bounds, the fractional sizes of $A$ and $B$ can easily be estimated to additive error $\frac{2\varepsilon}{3}$. 
    A union bound over all components implies that the resulting overall estimate obtained for $|A \cap B|/2^n$ has additive error $\varepsilon$ with success probability $\geq 1 - \delta$, and the result follows. 
\end{proof}

\section{A challenge: lower bounds for monotonicity testing}\label{section:passive-quantum-monotonicity-testing-lower-bound}


Here we show that the ensembles used in \cite{Goldreich2000} to establish strong lower bounds on monotonicity testing from samples do not improve upon the basic $\Omega(1/\varepsilon)$ sample complexity lower bound in the quantum case. 
%
%
%
To prove this, we consider the pair of distributions over functions from \cite{Goldreich2000}, constructed such that one is supported entirely on monotone functions, and the other with high probability on functions that are $\epsilon$-far from monotone; we show that the associated $t$-copy quantum function state ensembles become distinguishable with constant success probability as soon as $t=\Omega(1/\varepsilon)$.
At the end of the section, we discuss how to extend our reasoning to the ensembles used in \cite{black_nearly_2023}.

\subsection{Distinguishability of twin ensembles}

For the proof, it will be useful to also consider \emph{phase states}, which are given by
\begin{equation*}
    \ket{\Psi_{f}^{\mathrm{ph}}}
    = \frac{1}{\sqrt{2^n}}\sum_{x\in\{0,1\}^n} (-1)^{f(x)}\ket{x}\, .
\end{equation*}
The proof reduces to the distinguishability of phase state 
ensembles encoding the following classical sets of functions taken from \cite{Goldreich2000}. 

\begin{definition}[Twin ensembles]
    Let $M = \{(u_i,v_i)\}_{i=1}^m$ be a set of pairs of elements in $\{0,1\}^n$ such that all $u_1,v_1,\ldots, u_m,v_m$ are distinct.
Let $\cup M := \cup_{(u,v)\in M}\{u,v\}$ be the complete set of elements in the matching.
Fix a function ${g:\{0,1\}^n\backslash(\cup M)\to\{0,1\}}$.
We now define the \emph{twin ensembles} associated to $M$ and $g$, which are two sets $F_0, F_1$ of functions on $\{0,1\}^n$.

For any bipartition of $M$, $M=A\sqcup B$, define the following two functions.
\begin{enumerate}
    \item $f_{A,B}^{(0)}$ is defined as follows:
    \begin{itemize}
        \item For $(u,v)\in A$, we set $f_{A,B}^{(0)}(u) = f_{A,B}^{(0)}(v)=1$.
        \item For $(u,v)\in B$, we set $f_{A,B}^{(0)}(u) = f_{A,B}^{(0)}(v)=0$.
        \item If $x\not\in \cup M$, then define $f_{A,B}^{(0)}(x)=g(x).$
    \end{itemize}
    \item $f_{A,B}^{(1)}$ is defined as follows:
    \begin{itemize}
        \item For $(u,v)\in A$, we set $f_{A,B}^{(1)}(u) = 1$ and $f_{A,B}^{(1)}(v)=0$.
        \item For $(u,v)\in B$, we set $f_{A,B}^{(1)}(u) = 0$ and $f_{A,B}^{(1)}(v)=1$.
        \item If $x\not\in \cup M$, then define $f_{A,B}^{(1)}(x)=g(x).$
    \end{itemize}    
\end{enumerate}
Then the twin ensembles associated to $M$ and $g$ are $F_0 = \{f_{A,B}^{(0)}\}_{A\sqcup B = M}$ and $F_1 = \{f_{A,B}^{(1)}\}_{A\sqcup B = M}$. 
\end{definition}

Let us recall the reasoning from \cite{Goldreich2000} that connects these ensembles to monotonicity testing.
    Take $k=\lceil n/2\rceil$ and consider the $k^{\mathrm{th}}$ and $(k-1)^{\mathrm{th}}$ layer of the Boolean hypercube with respect to the standard partial ordering on strings $\preceq$.
    These layers we denote by $L_k$ and $L_{k-1}$ respectively; \textit{i.e.}, $L_i=\{x\in\{0,1\}^{n-1}: |x|=i\}$, where $|x|$ denotes the Hamming weight of $x$.
    Stirling's formula gives that $|L_k|, |L_{k-1}|=\Omega(2^{n}/\sqrt{n})$.
     As argued in \cite{Goldreich2000}, we can find a matching $M=\{(u_i\prec v_i)\}_{i=1}^m \subset L_{k-1}\times L_k$ such that (\textit{i}) there is no $i\neq j$ such that $u_i$ and $v_j$ are comparable, 
     (\textit{ii}) $|M|$ is even, 
     and (\textit{iii}) $m\coloneqq |M|=\varepsilon \cdot 2^{n}$, for any $\epsilon=\epsilon(n)$ with $0<\epsilon\leq \mc O(n^{-3/2})$.
     Now define
     \begin{align*}
         g:\{0,1\}^n\backslash{\cup M} &\to \{0,1\}\\
         x &\mapsto \mathbf{1}_{|x|\geq n/2}\,.
     \end{align*}
     The choices of $g$ and $M$ define twin ensembles $F_0$ and $F_1$.

    Clearly, every $f_{A,B}^{(0)}$ is a monotone function.
    Let $F_1^\text{far}\subset F_1$ be the set of $f_{A,B}^{(1)}$ functions for which $|B|\geq m/4$.
    Then all functions in $F_1^\text{far}$ are at least $\Omega(\epsilon)$-far from monotone \cite{Goldreich2000}.
    We thus wish to bound the distinguishability of $F_0$ from $F_1^{\text{far}}$, which in the quantum case is determined by the $1$-norm    
    \[\left\|\E_{f_{A,B}^{(0)}\sim F_0}\left(\ketbra{f_{A,B}^{(0)}}{f_{A,B}^{(0)}}\right)^{\otimes t} - \E_{f_{A,B}^{(1)}\sim F^\text{far}_1}\left(\ketbra{f_{A,B}^{(1)}}{f_{A,B}^{(1)}}\right)^{\otimes t}\right\|_1\]
    From a standard concentration argument, it suffices to instead bound the distinguishability between $F_0$ and $F_1$, which in the quantum case is determined by 
    \[\left\|\E_{f_{A,B}^{(0)}\sim F_0}\left(\ketbra{f_{A,B}^{(0)}}{f_{A,B}^{(0)}}\right)^{\otimes t} - \E_{f_{A,B}^{(1)}\sim F_1}\left(\ketbra{f_{A,B}^{(1)}}{f_{A,B}^{(1)}}\right)^{\otimes t}\right\|_1\, .\]

We will show that, in contrast to the classical case analyzed in \cite{Goldreich2000}, the twin ensembles actually become distinguishable already for $t\sim 1/\varepsilon$. Namely, much of \Cref{section:passive-quantum-monotonicity-testing-lower-bound} will be dedicated to proving the following theorem: 

\begin{theorem}
    \label{thm:distinguishability-pair-ens}
    Define $\epsilon>0$ so that $\epsilon 2^n = m = |M|$.
    Then 
    \begin{equation}
    \label{eq:tr-dist-f0-f1}
    \left\|\mathbb{E}_{B}\left[\left(\ketbra{\Psi_{f_{A,B}^{(0)}}^{\textnormal{ph}}}{\Psi_{f_{A,B}^{(0)}}^{\textnormal{ph}}}\right)^{\otimes t}\right]-\mathbb{E}_{B}\left[\left(\ketbra{\Psi_{f_{A,B}^{(1)}}^{\textnormal{ph}}}{\Psi_{f_{A,B}^{(1)}}^{\textnormal{ph}}}\right)^{\otimes t}\right]\right\|_1 \geq \Omega(1)
    \end{equation}
    for $t = \Omega(1/\varepsilon)$.
\end{theorem}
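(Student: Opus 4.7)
The plan is to exhibit a simple single-copy measurement that already has an $\Omega(\varepsilon)$ advantage on \emph{every} pure state in the two ensembles, then amplify over $t$ copies by the standard ``event probability $\Rightarrow$ trace distance'' bound.

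\textbf{Step 1 (basis change).} For each pair $(u_i,v_i)\in M$, introduce the orthonormal pair
\[\ket{+_i}:=\tfrac{1}{\sqrt 2}(\ket{u_i}+\ket{v_i})\,,\qquad \ket{-_i}:=\tfrac{1}{\sqrt 2}(\ket{v_i}-\ket{u_i})\,.\]
Parameterize the random bipartition by $s_i\in\{\pm1\}$ (indicating whether $(u_i,v_i)\in A$ or $B$); under the uniform-$B$ average the $s_i$ are i.i.d. uniform. Reading off the definitions of $f_{A,B}^{(0)}$ and $f_{A,B}^{(1)}$, one gets
\begin{align*}
\ket{\Psi_{f_{A,B}^{(0)}}^{\mathrm{ph}}}&=\tfrac{1}{\sqrt{2^n}}\Bigl(\sum_{x\notin\cup M}(-1)^{g(x)}\ket{x}+\sqrt{2}\sum_{i=1}^m s_i\ket{+_i}\Bigr),\\
\ket{\Psi_{f_{A,B}^{(1)}}^{\mathrm{ph}}}&=\tfrac{1}{\sqrt{2^n}}\Bigl(\sum_{x\notin\cup M}(-1)^{g(x)}\ket{x}+\sqrt{2}\sum_{i=1}^m s_i\ket{-_i}\Bigr).
\end{align*}
So all of $F_0$ lives in $\mathrm{span}\{\ket{x}:x\notin\cup M\}\oplus\mathrm{span}\{\ket{+_i}\}_i$, while all of $F_1$ lives in the \emph{orthogonal-inside-$\cup M$} subspace $\mathrm{span}\{\ket{x}:x\notin\cup M\}\oplus\mathrm{span}\{\ket{-_i}\}_i$. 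This is the key structural observation the twin construction hands us for free.

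\textbf{Step 2 (per-copy measurement).} Measure a single copy in the basis $\{\ket{+_i},\ket{-_i}\}_{i=1}^m\cup\{\ket{x}:x\notin\cup M\}$. Because $|s_i|^2=1$, the probability of seeing \emph{some} outcome in $\{\ket{+_i}\}_i$ is exactly $\sum_{i=1}^m 2/2^n=2m/2^n=2\varepsilon$ under any pure state in $F_0$ and exactly $0$ under any pure state in $F_1$. Crucially these probabilities do not depend on $s$, so they are unchanged after averaging over $B$.

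\textbf{Step 3 (amplification and trace-distance bound).} Apply this measurement independently to each of the $t$ copies and let $E$ be the event ``at least one copy yields a $\ket{+_i}$ outcome''. Then $\Pr[E\mid \rho_1]=0$ exactly, while $\Pr[E\mid \rho_0]\geq 1-(1-2\varepsilon)^t\geq 1-e^{-2\varepsilon t}$. Since any POVM test lower-bounds the trace distance,
\[\Bigl\|\mathbb{E}_B\bigl[\ketbra{\Psi_{f_{A,B}^{(0)}}^{\mathrm{ph}}}{\Psi_{f_{A,B}^{(0)}}^{\mathrm{ph}}}\bigr]^{\otimes t}-\mathbb{E}_B\bigl[\ketbra{\Psi_{f_{A,B}^{(1)}}^{\mathrm{ph}}}{\Psi_{f_{A,B}^{(1)}}^{\mathrm{ph}}}\bigr]^{\otimes t}\Bigr\|_1\;\geq\;2\bigl(1-e^{-2\varepsilon t}\bigr)\,,\]
which is $\Omega(1)$ whenever $t=\Omega(1/\varepsilon)$.

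\textbf{Main obstacle.} For the phase-state statement as written there is really no obstacle: the twin construction is ``quantumly brittle'' because it literally splits the matched part of the Hilbert space into two \emph{orthogonal} halves. The delicate work lies elsewhere. First, one must justify the claimed reduction from function states to phase states at the top of the section, presumably by a unitary relating $\ket{\Psi_f}$ to $\ket{\Psi_{f'}^{\mathrm{ph}}}$ for a related $f'$. Second, the concluding passage from $F_1$ to the truly ``far'' sub-ensemble $F_1^{\mathrm{far}}$ still needs the promised Chernoff-type concentration showing $|B|\geq m/4$ with high probability, which only affects constants. Third, extending the argument to the Black \cite{black_nearly_2023} ensemble requires identifying an analogous orthogonal-subspace decomposition for the more elaborate pair structure used there; I expect this to be straightforward in spirit, but the combinatorial bookkeeping will be heavier.
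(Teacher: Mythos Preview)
Your proof is correct and considerably more elementary than the paper's. The paper proceeds by computing the \emph{entire spectrum} of the difference matrix $A=A^{(0)}-A^{(1)}$: it shows $A$ is block-diagonal with blocks that are adjacency matrices of complete bipartite graphs, derives exponential-generating-function formulas for the block sizes $x_1(t),x_2(t),N(t,k)$, and then uses a concentration argument to evaluate $\|A\|_1$ asymptotically. The end result of all that machinery is $\|A\|_1\geq 2\bigl(1-(1-2\varepsilon)^t\bigr)+o(1)$, which is exactly your bound.

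Your argument bypasses the spectral analysis by exhibiting the optimal distinguishing POVM directly: the projector $\Pi_+=\sum_i\ketbra{+_i}{+_i}$ has expectation $2\varepsilon$ on every pure state in $F_0$ and $0$ on every pure state in $F_1$, independently of $B$, so the tensor-product ``at least one hit'' test already saturates the trace-distance lower bound. One minor notational slip: in your displayed inequality in Step~3 the exponent $\otimes t$ sits outside $\mathbb{E}_B[\,\cdot\,]$, whereas the theorem concerns $\mathbb{E}_B[(\,\cdot\,)^{\otimes t}]$. Your reasoning is nonetheless sound precisely because the single-copy acceptance probability is $B$-independent, so both orderings of expectation and tensor power give the same value under your test.

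What each approach buys: yours is shorter, yields an explicit algorithmic distinguisher, and makes transparent \emph{why} the twin construction is quantumly brittle (the matched part of the Hilbert space literally splits into orthogonal halves). The paper's route gives an exact expression for $\|A\|_1$, not just a lower bound, and develops block-structure machinery that they reuse when sketching the extension to the \cite{black_nearly_2023} ensemble, where the blocks become complete \emph{multipartite} graphs. For that extension your subspace-projection idea should still work, but identifying the right orthogonal decomposition inside each $U_j$ will, as you anticipate, require redoing the case analysis for the more elaborate value assignments there.
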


Let us note that \Cref{thm:distinguishability-pair-ens} implies the same $1$-norm lower bound and thus the same distinguishability of the two ensembles also from function state copies.
     To see this, we notice that the function state for any Boolean function $f$ defined on $n$ bits is unitarily equivalent to the phase state for a related Boolean function on $n+1$ bits: 

     \[ ( I^{\otimes n}\otimes H) \ket{f}
     = \frac{1}{\sqrt{2^{n+1}}}\sum_{x\in\{0,1\}^n}\ket{x}\ket{0} + \frac{1}{\sqrt{2^{n+1}}}\sum_{x\in\{0,1\}^n}(-1)^{f(x)}\ket{x}\ket{1}=:\ket{\Psi_{\widetilde{f}}^\text{ph}},\]
     where $\widetilde{f}(x_1,\ldots, x_n,x_{n+1})=(\mathbf{1}_{x_{n+1}=1}) f(x_1,\ldots, x_n)$.

     So, in fact the function states for functions in $F_0$ and $F_1$ are unitarily equivalent to phase states for another set of twin ensembles $\widetilde{F}^0$ and $\widetilde{F}^1$ with corresponding $\widetilde{M}$ obtained by appending $1$ to every string in $M$ and with $\widetilde{g}$ given by $\widetilde g(x) = (\mathbf{1}_{x_{n+1}=1})\cdot g(x_1,\ldots, x_n)$.
     \Cref{thm:distinguishability-pair-ens} implies these phase state ensembles are distinguishable with constant success probability for $t\geq\Omega(1)$, thus the same holds for the function state ensembles for $F_0$ and $F_1$.

\subsection{Difference matrix: the entries}

Here, to prove \Cref{thm:distinguishability-pair-ens}, we pursue a bound on the trace norm distance in \Cref{eq:tr-dist-f0-f1}.

Define the density matrices
\[A^{(0)} = \mathbb{E}_{B}\left[\left(\ketbra{\Psi_{f_{A,B}^{(0)}}^{\text{ph}}}{\Psi_{f_{A,B}^{(0)}}^{\text{ph}}}\right)^{\otimes t}\right] \qquad \text{and}\qquad A^{(1)} = \mathbb{E}_{B}\left[\left(\ketbra{\Psi_{f_{A,B}^{(1)}}^{\text{ph}}}{\Psi_{f_{A,B}^{(1)}}^{\text{ph}}}\right)^{\otimes t}\right]\,,\]
Call $A:= A^{(0)}-A^{(1)}$ the \emph{difference matrix}.
We now characterize the entries of the difference matrix $A$ by evaluating $A^{(0)}$ and $A^{(1)}$. 

Rows (resp. columns) of $A^{(0)}$ and $A^{(1)}$ are indexed by $t$-tuples $\mb x=(x_1,\ldots, x_t)$ (resp. $\mb y=(y_1,\ldots, y_t)$) of strings $x_j\in \{0,1\}^n$ (resp. $y_j\in \{0,1\}^n$), $1\leq j\leq t$.
It turns out that the entries of $A^{(0)}$ and $A^{(1)}$ depend only on the multiset $\{x_1,\ldots, x_t,y_1,\ldots, y_t\}$, and for this we use the notation $\mb x\cup \mb y$.
In the following it will sometimes be convenient to use $\uparrow$ for exponentiation: $a\uparrow b := a^b$.

\paragraph{Entries of the $A^{(0)}$, the $f^{(0)}$ matrix.}

The $(\mb x, \mb y)$ entry of $A^{(0)}$ takes the form
\[A^{(0)}_{\mb x, \mb y}=\E_{A\sqcup B=M}\left[\frac{1}{2^{nt}}\prod_{z\in \mb x\cup \mb y}(-1)^{f_{A,B}^{(0)}({z})}\right]=\frac{1}{2^{nt}}\E_{A\sqcup B=M}\left[(-1)\uparrow{\sum_{z\in \mb x\cup \mb y}f_{A,B}^{(0)}(z)}\right]\,.\]
Evaluating this expectation gives
\begin{align*}
    A^{(0)}_{\mb x, \mb y}&= \frac{1}{2^{nt}} \left ((-1)\uparrow\sum_{z\in \mb x\cup \mb y, {z}\not\in \cup M} g(z)\right)\prod_{p\in M}\E_{A,B}(-1)\uparrow{\sum_{z\in \mb x\cup \mb y: {z}\in p}f_{A,B}^{(0)}(z)}\\
    &=\frac{1}{2^{nt}} \left((-1)\uparrow\sum_{z\in \mb x\cup \mb y, {z}\not\in \cup M} g(z)\right)\\
    &\hspace{5em}\cdot\prod_{p\in M}\E_{A,B}(-1)\uparrow\begin{cases}
        |\{z\in \mb x\cup \mb y: z\in p\}|\mod 2 &\text{if } p\in A\\
        0 &\text{otherwise}
    \end{cases}\, .
\end{align*}
\normalsize
So, if $|\{z\in \mb x\cup \mb y: z\in p\}|=0$ mod 2, the expectation is always $(-1)^0$.
On the other hand, if $|\{z\in \mb x\cup \mb y:z\in p\}|=1$ mod 2, the expression inside the expectation is $(-1)^1$ w.p. 1/2 and $(-1)^0$ w.p. 1/2.
Define
\begin{equation}
    \label{eq:s-def}
    s(\mb x\cup \mb y)=(-1)\uparrow\sum_{z\in \mb x\cup \mb y, {z}\not\in \cup M} g(z)\,.
\end{equation}
Then
\[A^{(0)}_{\mb x, \mb y}=\begin{cases}
\displaystyle\frac{1}{2^{nt}} s(\mb x\cup \mb y)&\text{if } \forall p\in M, |\{z\in \mb x\cup \mb y: z\in p\}|=0\!\!\mod 2\\[0.8em]
0&\text{otherwise}\,.
\end{cases}\]

\paragraph{Entries of $A^{(1)}$, the $f^{(1)}$ matrix.}

Similarly to above we have 
\begin{align*}
    A^{(1)}_{\mb x, \mb y}&=\frac{1}{2^{nt}} s(\mb x\cup \mb y)\prod_{p\in M}\E_{A,B}(-1)\uparrow{\sum_{z\in \mb x\cup \bm y:x\in p}f_{A,B}^{(1)}(z)}\, .
\end{align*}
To evaluate the expectation, note there are four cases for each $p=(u,v)\in M$, depending on how many times $u$ occurs in $\mb x\cup \mb y$, and how many times $v$ occurs in $\mb x\cup \mb y$.
Denote these quantities mod 2 as $L_p=L_p(\mb x\cup \mb y)$ and $U_p=U_p(\mb x\cup \mb y)$
respectively.
\begin{itemize}
    \item If $L_p=0$ and $U_p=0$, the sum in the exponent is always 0, yielding $(-1)^0=1$ w.p. 1.
    \item If $L_p=0$ and $U_p=1$, the sum is either 0 or 1, each w.p. 1/2, yielding in expectation $(-1)^1/2 + (-1)^0/2=0$.
    \item If $L_p = 1$ and $U_p=0$, we similarly get $0$ for the expectation.
    \item If $L_p=1,U_p=1$, then the sum is always 1, yielding $(-1)^1=-1$ with probability 1.
\end{itemize}
In summary,
\[A^{(1)}_{\mb x, \mb y}=\begin{cases}
\displaystyle\frac{1}{2^{nt}}(-1)^{|\{p\in M:L_p=U_p=1\}|} s(\mb x\cup \mb y)&\text{if } \forall p\in M, |\{z\in \mb x\cup \mb y:z\in p\}|=0\!\!\mod 2\\[0.8em]
0&\text{otherwise}
 \end{cases}\, ,\]
which is equivalent to
\[A^{(1)}_{\mb x, \mb y}=\begin{cases}
 \displaystyle\frac{1}{2^{nt}}(-1)^{|\{p\in M:L_p=U_p=1\}|} s(\mb x\cup \mb y)&\text{if } \forall p\in M, L_p=U_p\\[0.8em]
0&\text{otherwise}
\end{cases}\, .\]
Putting these together, we find
\begin{equation} \label{eq: different_matrix_entries} 
    A_{\mb x, \mb y}=A^{(0)}_{\mb x, \mb y}-A^{(1)}_{\mb x, \mb y}
    =\begin{cases}
        \displaystyle\frac{2}{2^{nt}} s(\mb x\cup \mb y) &\text{if } \forall p\in M, L_p=U_p\\
        & \hspace{2em}\text{ and }|\{p\in M:L_p=U_p=1\}|=1\mod 2\\[0.7em]
         0 &\text{otherwise}
    \end{cases}\, .
\end{equation}

\subsection{Difference matrix: the spectrum}

Here we conduct a fine-grained analysis of the spectrum of the difference matrix $A$ in order to obtain a combinatorial bound on $\|A\|_1$.
In \Cref{sec:analytic-bound-diff-mat} we then understand the asymptotics on this bound in terms of $\epsilon,t,$ and $n$.

Let $D$ be a diagonal matrix with entries $s(\mb x)$ for all $t$-tuples $\mb x$.
Then $\tilde{A}:=2^{nt-1}D^{-1}AD$ is similar to $2^{nt-1}A$ and moreover

\begin{align} \label{equation:entries-for-difference-matrix}
    \tilde{A}_{\mathbf{x}, \mathbf{y}}
    &=\begin{cases}
        s(\mb x)s(\mathbf{x}\cup\mathbf{y})s(\mb y) &\text{if } \forall p\in M, L_p=U_p\text{ and }|\{p\in M:L_p=U_p=1\}|=1\mod 2\\
         0 &\text{otherwise}
    \end{cases}\nonumber\\[0.4em]
    &=\begin{cases}
        1 &\text{if } \forall p\in M, L_p=U_p\text{ and }|\{p\in M:L_p=U_p=1\}|=1\mod 2\\
         0 &\text{otherwise}
    \end{cases}\, ,
\end{align}
Here we used that $s(\mb x \cup \mb y) = s(\mb x)s(\mb y)$; the other notation (for $L_p$ and $U_p$) is as above.

Because of the matrix similarity, we know that $A$ and $\tilde{A}$ have the same spectrum up to a scaling factor of $1/2^{nt-1}$.
It will turn out that after a certain permutation of indices, $\tilde{A}$ is block-diagonal, with each block corresponding to the adjacency matrix of a complete bipartite graph.
Towards defining this block structure, we write $\mathbf{x} C \mathbf{y}$ (read ``$\mathbf{x}$ is compatible with $\mathbf{y}$'') if for all $p\in M, L_p=U_p\text{ and }|\{p\in M:L_p=U_p=1\}|=1\mod 2$.

Given an index tuple $\mb x$, we define some technical quantities of $\mb x$ that are important for combinatorics to follow.
These do not depend on the order of elements in $\mb x$ so we will treat $\mb x$ as a multiset for this discussion.
The multiset $\mb x$ may be partitioned as:
\begin{equation}
\label{eq:partition}
    \mb x = \{\text{ pairs }\} \sqcup \{ \text{ singletons }\} \sqcup \{\text{ elements outside of $\cup M$ }\}
\end{equation}
To form the ``pairs'' multiset, we greedily take as many copies of each pair $p\in M$ as we can from $\mb x$.
The ``singletons'' multiset are the remaining elements in $\mb x$ from $\cup M$ that cannot be paired up, and the final part corresponds to those elements in $\mb x$ outside of $\cup M$.
This partitioning is unique.

For example, suppose $M=\{(1,2),(3,4),(5,6)\}$, where we identify natural numbers with their $n$-bit binary expansions.
Then, the following multiset has the partition
\[ \{0,1,1,1,1,2,2,3,3,3,3,4\} \;\longmapsto\; \{ (1,2),(1,2),(3,4) \}\sqcup\{ 1,1,3,3,3\}\sqcup\{ 0 \}\,.\]

Now define $E(\mb x)$ to be the number of pairs (with multiplicity) in $\mb x$ mod 2; \emph{i.e.,} the cardinality mod 2 of the first part of the partition in \Cref{eq:partition}.
Also, define the \emph{singleton set} of $\mb x$ as
\[\sing(\mb x):=\{e\in \text{ ``singletons'' } : e\text{ occurs an odd number of times in ``singletons'' }\}\,.\]
So, continuing our example, $E(\mb x)=1$ and $\sing(\mb x)=\{3\}$.
We also define the \emph{type} of $\mb x$, $\type(\mb x)$, to be the pairs with nonzero overlap with its singleton set:
\[\mathrm{type}(\mb x)=\{p\in M: p\cap \sing(\mb x)\neq \emptyset\}.\]
Continuing our example, we have that $\type\big((3,5,5,5)\big) = \{(3, 4), (5, 6)\}$. 
We will also need a quantity on pairs of tuples $\mb x,\mb y$ counting the number of elements in their singleton sets that are paired up, mod 2:
\[P(\mb x, \mb y)= |\{p\in M:p\subseteq \sing(\mb x)\cup \sing(\mb y)\}|\mod 2\,.\]
So for example, keeping $M$ as before, $P\big((1,3),(1,4)\big) = 1$.

\begin{lemma}\label{lemma:compatibility-condition-base}
    $\mb x C \mb y$ if and only if
    \[\type(\mb x)=\type(\mb y)\qquad \text{and}\qquad E(\mb x) + E(\mb y) + P(\mb x,\mb y) = 1 \mod 2\,.\]
\end{lemma}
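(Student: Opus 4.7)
The plan is to reduce both conditions defining $\mb x C \mb y$ to single-pair conditions that can be checked by a parity case analysis.

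First, for $p = \{u,v\} \in M$ and a tuple $\mb z$, let $a_u(\mb z)$ denote the number of times $u$ appears in $\mb z$. Then $L_p = a_u(\mb x) + a_u(\mb y) \bmod 2$ and $U_p = a_v(\mb x) + a_v(\mb y) \bmod 2$, so $L_p = U_p$ is equivalent to $a_u(\mb x) + a_v(\mb x) \equiv a_u(\mb y) + a_v(\mb y) \pmod 2$. A short calculation from the greedy-pairing definition of singletons shows that $p \in \type(\mb z)$ iff $a_u(\mb z) \not\equiv a_v(\mb z) \pmod 2$ iff $a_u(\mb z) + a_v(\mb z)$ is odd. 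Combining these observations, the condition ``$L_p = U_p$ for every $p \in M$'' is equivalent to $\type(\mb x) = \type(\mb y)$.

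It remains to show that, under $\type(\mb x) = \type(\mb y)$, the count $N(\mb x, \mb y) := |\{p \in M : L_p = U_p = 1\}|$ satisfies $N(\mb x, \mb y) \equiv E(\mb x) + E(\mb y) + P(\mb x, \mb y) \pmod 2$. I would establish this by proving the stronger pairwise identity
\[c_p \equiv e_p(\mb x) + e_p(\mb y) + b_p \pmod{2},\]
where $c_p := \mathbf{1}[L_p = U_p = 1]$, $e_p(\mb z) := \min(a_u(\mb z), a_v(\mb z)) \bmod 2$ is the contribution of $p$ to $E(\mb z)$, and $b_p := \mathbf{1}[p \subseteq \sing(\mb x) \cup \sing(\mb y)]$ is the contribution of $p$ to $P(\mb x, \mb y)$. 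Summing over all $p \in M$ and using $E(\mb z) \equiv \sum_p e_p(\mb z)$ and $P(\mb x, \mb y) \equiv \sum_p b_p \pmod 2$ then yields the result.

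The core of the argument is a case analysis on the parities of $(a_u(\mb x), a_v(\mb x), a_u(\mb y), a_v(\mb y))$, restricted to the two configurations consistent with $\type$-agreement at $p$: either both of $\mb x$'s two counts share parity and both of $\mb y$'s do (Case A), or both of $\mb x$'s counts differ in parity and both of $\mb y$'s do (Case B). In Case A, $b_p = 0$ since the singleton intersections at $p$ are empty, and $e_p, c_p$ are determined by parities alone, so the identity is a routine check of four sub-cases. Case B is the technical heart: since $\min(\text{odd}, \text{even})$ inherits the parity of whichever count is smaller, $e_p(\mb x)$ depends on whether $a_u(\mb x) < a_v(\mb x)$ or not, and $\sing(\mb x) \cap p$ is $\{u\}$ or $\{v\}$ depending on the same ordering (and similarly for $\mb y$). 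I would split Case B by the two ordering choices on each side and tabulate $e_p(\mb x), e_p(\mb y), b_p, c_p$ in each of the resulting four sub-cases. The main obstacle is exactly this non-purely-parity behavior of $e_p$ and of the singleton sets, but the cancellations always produce an even sum, via a specific interplay between which of $u, v$ dominates in $\mb x$ versus in $\mb y$.
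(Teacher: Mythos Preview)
Your proposal is correct and follows essentially the same two-step structure as the paper: first equate ``$L_p=U_p$ for all $p$'' with $\type(\mb x)=\type(\mb y)$, then match $|\{p:L_p=U_p=1\}|$ with $E(\mb x)+E(\mb y)+P(\mb x,\mb y)$ modulo $2$.

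The difference lies in how the second step is justified. The paper argues globally, asserting in one line that both sides count (mod $2$) the total number of pairs appearing in the greedy decomposition of $\mb x\cup\mb y$; this is correct but leaves the reader to verify that $E(\mb x)+E(\mb y)+P(\mb x,\mb y)$ really does recount those pairs after merging singletons across $\mb x$ and $\mb y$. You instead localize, proving the stronger per-pair congruence $c_p\equiv e_p(\mb x)+e_p(\mb y)+b_p$ and summing. Your observation that in Case~B the identity can be checked by splitting on the four ordering choices (rather than the sixteen ordering-times-parity configurations) is exactly right: once the ordering is fixed, $e_p$ and the singleton element are determined as functions of the parities, and the congruence becomes a tautology in the parity variables. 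So your route is more explicit and arguably more rigorous, at the cost of the case analysis; the paper's route is more conceptual but terser.
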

\begin{proof}
    ($\implies$)
    Suppose by way of contradiction that $\mb xC\mb y$ but $\type(\mb x)\neq \type(\mb y)$.
    Then there exists a pair $p=(a,b)$ such that (WLOG) $a$ occurs in the ``singletons'' partition of $\mb x$ an even number of times, and $a$ or $b$ occurs in the ``singletons'' partition of $\mb y$ an odd number of times.
    This implies $L_p+U_p = 1 \mod 2.$

    So now suppose $\type(\mb x)=\type(\mb y)$.
    Because $\mb x C\mb y$, we have $|\{p\in M: L_p=U_p=1\}|= 1\mod 2$.
    But $|\{p\in M: L_p=U_p=1\}|\mod 2$ is precisely $E(\mb x)+E(\mb y)+P(\mb x,\mb y)\mod 2$, because both quantities count the total number of pairs (with multiplicity) occurring in $\mb x\cup \mb y$.
    
    ($\impliedby$)
    If $\type(\mb x) = \type(\mb y)$, we must have that $L_p+U_p = 0 \mod 2$ for all $p$. 
    The elements in the ``pairs'' partition do not affect this condition. 
    For all pairs $p=(a, b)$, either both $a, b$ occur an even number of times in the ``singletons'' partition of $\mb x$ and $\mb y$. Or, if (WLOG) $a$ occurs an odd number of times in $\mb x$, then $a$ or $b$ occurs an odd number of times in $\mb y$, preserving the condition. 
    
    Furthermore, as before, $|\{p\in M: L_p=U_p=1\}|\mod 2$ counts the same quantity as $E(\mb x)+E(\mb y)+P(\mb x,\mb y)\mod 2$.
    Therefore, if $\type(\mb x) = \type(\mb y)$ and $E(\mb x) + E(\mb y) + P(\mb x,\mb y) = 1 \mod 2$, we satisfy the criteria for compatibility.
\end{proof}

To understand the spectrum of the difference matrix we will make repeated use of the following structural fact about compatibility.

\begin{lemma}\label{lemma:compatibility-condition-for-rows}
    Let $\mathbf{x}, \mathbf{x}',\mathbf{x}''$ be such that $\mathbf{x}C\mathbf{x}''$ and $\mathbf{x}'C\mathbf{x}''$. Then for all $\mathbf{y}$, $\mathbf{x}C\mathbf{y}$ if and only if $\mathbf{x}'C\mathbf{y}$. 
\end{lemma}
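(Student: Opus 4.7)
The plan is to recast the compatibility relation $C$ of \Cref{lemma:compatibility-condition-base} into a form that decouples its two arguments, after which the desired equivalence is immediate.

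First, applying \Cref{lemma:compatibility-condition-base} to the two hypotheses yields $\type(\mathbf{x})=\type(\mathbf{x}'')=\type(\mathbf{x}')$. If $\type(\mathbf{y})$ differs from this common type then both $\mathbf{x}C\mathbf{y}$ and $\mathbf{x}'C\mathbf{y}$ fail and the conclusion is trivial, so one may assume $\type(\mathbf{y})$ matches as well.

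The key step is to introduce the single-tuple invariant
\[L(\mathbf{x}) \;:=\; \sum_{p=(u_p,v_p)\in M}\bigl|\{\,j:x_j=u_p\,\}\bigr| \pmod{2},\]
where $u_p$ is the canonically designated ``lower'' element of each pair $p\in M$. I claim that whenever $\type(\mathbf{x})=\type(\mathbf{y})$, the parity condition appearing in \Cref{lemma:compatibility-condition-base} is equivalent to $L(\mathbf{x})\neq L(\mathbf{y})$. Indeed, type-equality forces $L_p=U_p$ for every $p\in M$ (each difference $L_p-U_p$ equals the sum of the type-indicators of $\mathbf{x}$ and $\mathbf{y}$ at $p$, which vanishes mod $2$), and then
\[\bigl|\{p\in M:L_p=U_p=1\}\bigr|\;\equiv\;\sum_{p\in M}L_p(\mathbf{x}\cup\mathbf{y})\;\equiv\;L(\mathbf{x})+L(\mathbf{y})\pmod{2},\]
since $L_p(\mathbf{x}\cup\mathbf{y})$ is the parity of occurrences of $u_p$ in the union and is therefore additive in $\mathbf{x}$ and $\mathbf{y}$.

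With this reformulation in hand, the two hypotheses become $L(\mathbf{x})\neq L(\mathbf{x}'')$ and $L(\mathbf{x}')\neq L(\mathbf{x}'')$, forcing $L(\mathbf{x})=L(\mathbf{x}')$. Combined with $\type(\mathbf{x})=\type(\mathbf{x}')$, and since compatibility with any fixed $\mathbf{y}$ depends on its first argument only through the pair $\bigl(\type(\cdot),L(\cdot)\bigr)$, we obtain $\mathbf{x}C\mathbf{y}\iff \mathbf{x}'C\mathbf{y}$ for every $\mathbf{y}$. The only real work is the combinatorial identity for $|\{p:L_p=U_p=1\}|$ displayed above; this is a short unpacking of the definitions from the proof of \Cref{lemma:compatibility-condition-base} rather than a genuine obstacle.
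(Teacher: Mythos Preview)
Your proof is correct. Both you and the paper reduce the statement to showing that, once types agree, the parity condition in \Cref{lemma:compatibility-condition-base} depends on each argument separately through a single bit; compatibility is then an ``unequal bits'' relation, which is trivially transitive in the required sense.

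The difference lies in how that bit is exhibited. The paper stays with the $E(\cdot)$ and $P(\cdot,\cdot)$ machinery of \Cref{lemma:compatibility-condition-base}, invoking a cocycle-type identity $P(\mathbf{z},\mathbf{z}')+P(\mathbf{z}',\mathbf{z}'')\equiv P(\mathbf{z},\mathbf{z}'')$ to show that $E(\mathbf{x})+P(\mathbf{x},\mathbf{x}'')$ coincides with $E(\mathbf{x}')+P(\mathbf{x}',\mathbf{x}'')$, i.e.\ it constructs the invariant relative to the fixed reference point $\mathbf{x}''$. You instead define the invariant $L(\mathbf{x})$ intrinsically by fixing a distinguished element $u_p$ in each pair and summing occurrence parities; your displayed identity $\lvert\{p:L_p=U_p=1\}\rvert\equiv L(\mathbf{x})+L(\mathbf{y})$ is a direct unwinding of the definitions and does not pass through $E$ or $P$ at all. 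Your route is arguably cleaner (no auxiliary reference tuple, no cocycle identity), while the paper's has the minor advantage of reusing exactly the quantities already set up in \Cref{lemma:compatibility-condition-base}. Either way the combinatorial content is the same.
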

\begin{proof}
    By symmetry we need only argue the forward direction.
    Clearly $\type(\mb x) = \type(\mb x') = \type(\mb y)$.
    Note that for any $\mb{z}, \mb{z'}, \mb{z''}$, we have $P(\mb{z}, \mb{z'}) + P(\mb{z'}, \mb{z''}) = P(\mb{z}, \mb{z''}) \pmod{2}$, and because $\mb xC \mb x''$ and $\mb x'C \mb x''$,
    \Cref{lemma:compatibility-condition-base} implies
    \[E(\mb x) + P(\mb x, \mb x'') = E(\mb x') + P(\mb x', \mb x'') \pmod 2.\]
    Then we have the following equivalences modulo $2$:
    \begin{align*}
        1&\equiv E(\mb x) + P(\mb x, \mb y) + E(\mb y)\tag{\Cref{lemma:compatibility-condition-base}}\\
        &\equiv E(\mb x) + P(\mb x, \mb x') + P(\mb x'', \mb y) + E(\mb y)\\
        &\equiv E(\mb x') + P(\mb x', \mb x'') + P(\mb x'', \mb y) + E(\mb y)\\
        &\equiv E(\mb x') + P(\mb x', \mb y) + E(\mb y)\,.
    \end{align*}
    Appealing to \Cref{lemma:compatibility-condition-base}, we conclude $\mb x' C \mb y$.
\end{proof}
\noindent This implies, for example, that the $\mb x$\textsuperscript{th} and $\mb x'$\textsuperscript{th} rows in $\tilde{A}$ are equal.

We can view $\tilde{A}$ as describing a graph $G_{\tilde{A}}$ with vertices $(\{0,1\}^n)^t$ and edge set
    \[\{(\mathbf{x}, \mathbf{x}')\in V\times V~|~ \mathbf{x} C \mathbf{x}'\}\]
With \Cref{lemma:compatibility-condition-base,lemma:compatibility-condition-for-rows} in hand, we are prepared to describe the structure of $G_{\tilde{A}}$.
It will be useful to define certain combinatorial quantities first.

\begin{definition}[Combinatorial quantities]
    \label{def:comboquant}
    Let $\Sigma=\Sigma_\text{odd}\sqcup \Sigma_\text{even}\sqcup \Sigma_\text{rest}$ be an alphabet of cardinality $|\Sigma|=2^n$ partitioned such that $\Sigma_\text{odd}$ consists of $p$ pairs, so $|\Sigma_\text{odd}|=2p$, and $\Sigma_\text{even}$ consists of $m-p$ pairs, so $|\Sigma_\text{even}|=2(m-p)$.
    Define $T(t,p)$ as the number of strings of length $t$ over $\Sigma$ such that symbols from $\Sigma_\text{odd}$ each occur an odd number of times, symbols from $\Sigma_\text{even}$ each occur an even number of times, and symbols from $\Sigma_\text{rest}$ occur any number of times.
    Then define
    \begin{equation}
    \label{eq:defnx1x2}
        x_1(t) = \sum_{\substack{p=0\\p \text{ odd}}}^{\min\{m,t\}}\binom{m}{p}T(t,p)\qquad\text{ and}\qquad x_2(t) = \sum_{\substack{p=0\\p \text{ even}}}^{\min\{m,t\}}\binom{m}{p}T(t,p)\,.
    \end{equation}
    Further, define $N(t,p)$ as the number of strings of length $t$ over $\Sigma$ such that for each pair in $\Sigma_\text{odd}$, one of the two symbols occurs an odd number of times while the other occurs an even number of times, and for each pair in $\Sigma_\text{even}$, either both symbols occur an odd number of times or both symbols occur an even number of times.
\end{definition}

\begin{lemma}[Structure of $G_{\tilde{A}}$] \label{lemma:graph-bipartite-component-sizes}
    The graph $G_{\tilde{A}}$ has exactly $\textstyle\sum_{k=0}^{\min(t, m)} \binom{m}{k}$ connected components, each associated with a specific $\type(\cdot)$ of vertex.

    The connected component of $G_{\tilde{A}}$ corresponding to the unique $\type$ of cardinality 0 (the \emph{empty type}) is a complete bipartite graph $(U,V,E)$ with parts $U,V$ such that
    \[|U|=x_1(t)\quad\text{ and }\quad |V|= x_2(t)\,.\]

    For $k\geq 1$, there are $\binom{m}{k}$ connected components, each corresponding to a $\type$ of cardinality $k$.
    All such components are complete bipartite graphs $(U,V,E)$ with parts $U,V$ such that
    \[|U|,|V| = N(t,k)/2\,.\]
\end{lemma}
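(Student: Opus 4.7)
The plan is to leverage \Cref{lemma:compatibility-condition-base} to partition the vertices of $G_{\tilde A}$ by $\type$: no edge crosses different types, so every component sits in some class $\mathcal{V}_\tau = \{\mathbf{x} : \type(\mathbf{x}) = \tau\}$. Within $\mathcal{V}_\tau$, compatibility collapses to the single condition $E(\mathbf{x}) + E(\mathbf{y}) + P(\mathbf{x},\mathbf{y}) \equiv 1 \pmod 2$. The key move is to rewrite $P$ as a sum of separable contributions. For each $p = (u,v) \in \tau$, the constraint $p \in \type(\mathbf{x})$ forces exactly one of $u,v$ into $\sing(\mathbf{x})$; let $\sigma_\mathbf{x}(p) \in \{0,1\}$ record which, and set $S(\mathbf{x}) = \sum_{p \in \tau}\sigma_\mathbf{x}(p) \bmod 2$. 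Pairs $p \notin \tau$ contribute nothing to $P(\mathbf{x},\mathbf{y})$ since neither of their elements appears in $\sing(\mathbf{x}) \cup \sing(\mathbf{y})$; and for $p \in \tau$, $p \subseteq \sing(\mathbf{x}) \cup \sing(\mathbf{y})$ iff $\sigma_\mathbf{x}(p) \neq \sigma_\mathbf{y}(p)$. Summing indicators mod $2$ yields $P(\mathbf{x},\mathbf{y}) \equiv S(\mathbf{x}) + S(\mathbf{y}) \pmod 2$. So, defining $\phi(\mathbf{x}) = E(\mathbf{x}) + S(\mathbf{x}) \bmod 2$, compatibility within $\mathcal{V}_\tau$ becomes exactly $\phi(\mathbf{x}) \neq \phi(\mathbf{y})$, making $\mathcal{V}_\tau$ a complete bipartite graph between $U_\tau = \phi^{-1}(0)$ and $W_\tau = \phi^{-1}(1)$.

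Next I will identify the part sizes. For $\tau = \emptyset$ one has $S \equiv 0$, so $\phi = E \bmod 2$. For empty-type tuples, $E(\mathbf{x}) \equiv p \pmod 2$ where $p$ counts the pairs in $M$ whose two elements both appear an odd number of times in $\mathbf{x}$, and stratifying by $p$ and then by which $p$ of the $m$ pairs are chosen recovers $x_1(t)$ (for $E$ odd) and $x_2(t)$ (for $E$ even) from \Cref{def:comboquant}. For $\tau \neq \emptyset$, to establish $|U_\tau| = |W_\tau|$ I will fix any $p = (u,v) \in \tau$ and define an involution $\iota : \mathcal{V}_\tau \to \mathcal{V}_\tau$ by swapping the labels $u$ and $v$ throughout $\mathbf{x}$. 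This preserves $\type$ (the two counts at pair $p$ swap but their parities still differ), preserves $E$ (the $\min$ of the two counts at $p$ is symmetric and no other pair is touched), and flips $\sigma_\mathbf{x}(p)$, hence flips $S$ and therefore $\phi$. Since $|\mathcal{V}_\tau| = N(t,|\tau|)$ by \Cref{def:comboquant}, this gives $|U_\tau| = |W_\tau| = N(t,|\tau|)/2$.

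Enumerating components then reduces to counting realizable types $\tau \subseteq M$. Each $p \in \tau$ contributes at least one element to $\mathbf{x}$, so $|\tau| \leq \min(t,m)$; the converse existence of a tuple for each such $\tau$ is straightforward from $|\cup M| = 2m < 2^n$. Thus there are exactly $\sum_{k=0}^{\min(t,m)}\binom{m}{k}$ components, one per type. The main care-demanding step will be the decomposition $P \equiv S(\mathbf{x}) + S(\mathbf{y}) \pmod 2$---in particular, the ``off-type'' pairs contribute nothing only once both tuples are known to share the type $\tau$---together with the simultaneous verification that $\iota$ preserves $E$ while flipping $\phi$; both amount to bookkeeping once the definitions of $\sing$ and the symmetry of $\min(a,b)$ are unpacked.
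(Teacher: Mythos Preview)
Your argument is correct and is in fact tidier than the paper's. The paper establishes the complete bipartite structure indirectly: for each type it exhibits two specific compatible tuples $\mathbf{x},\mathbf{x}'$, argues every other tuple of that type is compatible with exactly one of them, and then invokes \Cref{lemma:compatibility-condition-for-rows} (a transitivity property of $C$) to conclude bipartiteness. Your decomposition $P(\mathbf{x},\mathbf{y})\equiv S(\mathbf{x})+S(\mathbf{y})\pmod 2$ is the cleaner move: it linearizes the compatibility condition to $\phi(\mathbf{x})\neq\phi(\mathbf{y})$ with $\phi=E+S$, so the bipartition is exhibited explicitly and \Cref{lemma:compatibility-condition-for-rows} becomes unnecessary. (In hindsight, the cocycle identity $P(\mathbf{z},\mathbf{z}')+P(\mathbf{z}',\mathbf{z}'')\equiv P(\mathbf{z},\mathbf{z}'')$ used inside the paper's proof of \Cref{lemma:compatibility-condition-for-rows} is exactly what guarantees that $P$ is a coboundary, i.e., that your $S$ exists.) Your involution for $k\geq 1$---swapping $u\leftrightarrow v$ throughout---is the same map as the paper's ``replace the singleton-copies of $a_1$ by $b_1$'' bijection, just described more symmetrically; your explicit check that $E$ is preserved (via symmetry of $\min$) while $S$ flips is a nice touch the paper leaves implicit. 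The identification of the $k=0$ part sizes with $x_1(t),x_2(t)$ via $E(\mathbf{x})\equiv |\{p:\text{both elements odd}\}|\pmod 2$ is correct and matches the paper's (terser) counting.
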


\begin{proof}
    First, by \Cref{lemma:compatibility-condition-base}, only vertices of the same type can be compatible, and there are a total of $\sum_{k=0}^{\min(t, m)} \binom{m}{k}$ types. 
    For the remainder of the proof, we only need to consider vertices of the same type.
    
    We will analyze the case with $k \geq 1$ first. 
    Here $k$ refers to the size of the type of the vertex. 
    Consider a tuple $\mb{x}$ such that $\sing(\mb{x}) = \{a_1, a_2, \ldots, a_{k-1}, a_k\}$ and $E(\mb{x}) = 0$. 
    Consider a second tuple $\mb{x}'$ such that $\sing(\mb{x}') = \{a_1, a_2, \ldots, a_{k-1}, b_k\}$, $E(\mb{x}') = 0$, and $(a_k, b_k)$ are a pair in $M$.
    Two such tuples must exist since $k > 0$. 
    Note that $\mb x$ is compatible with $\mb x'$. Furthermore, all elements of the same type as $\mb x$ and $\mb x'$ must be compatible with exactly one of $\mb x$ or $\mb x'$. 
    To see that we form a complete bipartite graph, consider two elements $\mb y, \mb y'$ such that $\mb x C \mb y$ and $\mb x' C \mb y'$. 
    By \Cref{lemma:compatibility-condition-for-rows}, we can conclude that $\mb y C \mb y'$. 
    Since $\mb y, \mb y'$ were arbitrary, we have a complete bipartite graph. 
    
    To get the size of each of the two sets in the bipartition, fix a type of size $k\geq 1$ and take $(a_1,b_1),\ldots,(a_k,b_k)$ to be the pairs representing the type. 
    By our definition of type, we know: For each $i$, the elements $a_i$ and $b_i$ appear with differing parities; for each of the remaining pairs in $M$, the two elements of the pair occur with the same parity; and each remaining element, which does not belong to any pair in $M$, can occur with an arbitrary parity.
    Thus, the number of vertices with type $k$ is exactly $N(t,k)$. It remains to observe that the two components in the bipartition for type $k$ are of equal size. To see this, take any tuple $\mb x$ of type $k$ and w.l.o.g.~suppose that $a_1$ occurs with odd parity in $\mathrm{sing}(\mb x)$. (Otherwise, $b_1$ occurs with odd parity in $\mathrm{sing}(\mb x)$ and the remaining argument is easily modified.)
    If we construct a tuple $\mb x'$ from $\mb x$ by replacing as many occurrences of $a_1$ as are in $\mathrm{sing}(\mb x)$ with $b_1$ (while keeping the remaining elements the same), then $\mb{x}C\mb{x}'$ by \Cref{lemma:compatibility-condition-base}. This provides us with a one-to-one mapping between the two components in the bipartition, thus they are of equal size.
    
    For $k = 0$, we instead consider two distinct vertices $\mb x$ and $\mb x'$ with $E(\mb x) = 0$ and $E(\mb x') = 1$ and $|\sing (\mb x)| = |\sing (\mb x')| = 0$. 
    Note that $\mb x$ and $\mb x'$ are compatible. 
    By an analogous argument to the case of $k > 0$, we must have all vertices connected to either $\mb x$ and $\mb x'$, and we get a complete bipartite graph. 
    
    For the sizes of the sets in the bipartition, note that the number of vertices connected to $\mb x$ with $E(\mb x) = 0$ will be precisely $x_1(t)$ and the number of vertices connected to $\mb x'$ with $E(\mb x') = 1$ will be precisely $x_2(t)$, completing our proof. 
\end{proof}

As a consequence of this lemma, when suitably ordering the indexing tuples, the matrix $\tilde{A}$ has a block-diagonal structure with blocks corresponding to the connected components of $G_{\tilde A}$.
Thus, to determine the spectrum of $\tilde{A}$, it suffices to determine the spectrum of each block.

The adjacency matrix of a complete bipartite graph between $a$- and $b$-many vertices has two nonzero eigenvalues, each of magnitude $\sqrt{ab}$ \cite[Chapter VIII.2]{Bollobs1998}.
Instantiating this fact in the context of \Cref{lemma:graph-bipartite-component-sizes} and renormalizing by $1/2^{nt-1}$ (recall $A=(D\tilde{A} D^{-1})/2^{nt - 1}$), we obtain the following bound on $\|A\|_1$.

\begin{corollary}
\label{corollary:combinatorial-bound-on-A}
    The $1$-norm of the matrix $A$ satisfies:
    \begin{equation*}
    \|A\|_1= \frac{2}{2^{nt - 1}}\sqrt{x_1(t)\cdot x_2(t)} + \frac{2}{2^{nt - 1}}\sum_{k=1}^{\min\{t,m\}} \binom{m}{k} \frac{N(t,k)}{2}.
    \end{equation*}
\end{corollary}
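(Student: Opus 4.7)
The plan is to read $\|A\|_1$ directly off the block-diagonal description of $\tilde A$ established in \Cref{lemma:graph-bipartite-component-sizes}, after first translating between $\|\tilde A\|_1$ and $\|A\|_1$ using the similarity $\tilde A = 2^{nt-1} D^{-1} A D$. The first step is to observe that, by the definition in \Cref{eq:s-def}, the diagonal matrix $D$ has entries $s(\mathbf{x})\in\{\pm 1\}$, so $D^{-1}=D$ and $\tilde A = 2^{nt-1}\,DAD$ is a conjugation of $2^{nt-1}A$ by the real orthogonal matrix $D$. Such a conjugation preserves singular values, hence $\|\tilde A\|_1 = 2^{nt-1}\|A\|_1$, and it suffices to evaluate $\|\tilde A\|_1$.

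Second, I will permute the row/column indices of $\tilde A$ simultaneously so as to group those of the same $\type(\cdot)$ together; this is again a unitary conjugation that preserves the nuclear norm. By \Cref{lemma:graph-bipartite-component-sizes}, the resulting matrix is block-diagonal with one block per type: the empty-type block is the adjacency matrix of the complete bipartite graph $K_{x_1(t),\, x_2(t)}$, and for each $k\in\{1,\ldots,\min\{t,m\}\}$ there are $\binom{m}{k}$ further blocks, each equal to the adjacency matrix of $K_{N(t,k)/2,\, N(t,k)/2}$.

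Third, I will invoke the standard spectral fact that the adjacency matrix of $K_{a,b}$ has exactly two nonzero singular values, both equal to $\sqrt{ab}$, so its nuclear norm is $2\sqrt{ab}$. Summing nuclear norms over all blocks then gives
$$\|\tilde A\|_1 \;=\; 2\sqrt{x_1(t)\,x_2(t)} + \sum_{k=1}^{\min\{t,m\}}\binom{m}{k}\cdot 2\cdot\sqrt{\tfrac{N(t,k)}{2}\cdot\tfrac{N(t,k)}{2}} \;=\; 2\sqrt{x_1(t)\,x_2(t)} + \sum_{k=1}^{\min\{t,m\}}\binom{m}{k}\,N(t,k),$$
and dividing through by $2^{nt-1}$ reproduces the claimed formula for $\|A\|_1$. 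I do not expect any substantive obstacle: the entire argument is a bookkeeping exercise once \Cref{lemma:graph-bipartite-component-sizes} is in place. The only points requiring care are verifying that the $\pm 1$ signs in $D$ really do cancel cleanly (which follows from $s(\mathbf{x})^2 = 1$) and that the two halves of each $k\geq 1$ block are indeed of equal size $N(t,k)/2$, so that the geometric mean $\sqrt{(N(t,k)/2)\cdot (N(t,k)/2)}$ collapses to $N(t,k)/2$ and produces the clean factor $N(t,k)$ after doubling.
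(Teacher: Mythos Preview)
Your proposal is correct and follows essentially the same route as the paper: use the similarity $\tilde A = 2^{nt-1} D^{-1}AD$ (with $D$ a $\pm 1$ diagonal, hence orthogonal) to transfer the trace-norm computation to $\tilde A$, then read off the block structure from \Cref{lemma:graph-bipartite-component-sizes} and apply the standard fact that the adjacency matrix of $K_{a,b}$ has two nonzero eigenvalues of magnitude $\sqrt{ab}$. The paper's own justification is only a couple of sentences, so your write-up is simply a more detailed rendering of the same argument.
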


\subsection{Difference matrix: the trace norm}
\label{sec:analytic-bound-diff-mat}
Here we analyze the growth of $\|A\|_1$ in terms of $\epsilon,t,$ and $n$.
We begin by using exponential generating functions (or EGFs---see \cite{WILF199430} for background) to derive explicit expressions for $T(t,p)$, $N(t,p)$, and $x_1(t)+x_2(t)$.

\begin{lemma}
    Let $T(t,p)$ be as in \Cref{def:comboquant}.
    Then
	\[T(t,p)=\left(\frac12\right)^{2m}\sum_{j=0}^{2m-2p}\sum_{k=0}^{2p}(-1)^k\binom{2m-2p}{j}\binom{2p}{k}(2^n-2j-2k)^t.\]
\end{lemma}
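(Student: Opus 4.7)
The plan is to derive the formula by an exponential generating function (EGF) calculation. Recall that when counting length-$t$ \emph{ordered} strings over an alphabet where each symbol has a prescribed parity constraint on its multiplicity, the EGF factors as a product over symbols. The relevant single-symbol EGFs are $e^x=\sum_{t\geq 0} x^t/t!$ for an unrestricted count, $\cosh(x)=(e^x+e^{-x})/2$ for an even count, and $\sinh(x)=(e^x-e^{-x})/2$ for an odd count. Since $T(t,p)$ concerns strings with $2p$ odd-parity symbols, $2(m-p)$ even-parity symbols, and $2^n-2m$ unrestricted symbols, the plan begins by writing
\[\sum_{t\geq 0}T(t,p)\,\frac{x^t}{t!}\;=\;\sinh(x)^{2p}\,\cosh(x)^{2(m-p)}\,e^{(2^n-2m)x}.\]

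The next step is a routine binomial-theorem expansion of each hyperbolic factor. Expanding $(e^x\pm e^{-x})^r$ produces a sum of pure exponentials of the form $e^{(r-2i)x}$ with binomial weights (and signs $(-1)^i$ for the $\sinh$ case), yielding expressions of the shape
\[\sinh(x)^{2p}=2^{-2p}\sum_{k=0}^{2p}(-1)^k\tbinom{2p}{k}e^{(2p-2k)x},\qquad \cosh(x)^{2(m-p)}=2^{-2(m-p)}\sum_{j=0}^{2(m-p)}\tbinom{2(m-p)}{j}e^{(2(m-p)-2j)x}.\]

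Multiplying these two expansions together with the factor $e^{(2^n-2m)x}$ produces a double sum of exponentials whose exponents collapse: the contributions $(2p-2k)+(2(m-p)-2j)+(2^n-2m)$ simplify to $2^n-2j-2k$ because the $2p$, $2(m-p)$, and $-2m$ terms telescope. Combining the two prefactors gives an overall $2^{-2m}$. Finally, extracting $t!\,[x^t]$ turns each $e^{cx}$ into $c^t$, and the formula drops out.

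There is no substantive obstacle here; the argument is a textbook EGF computation and the only thing to be careful about is the index bookkeeping when collapsing exponents and combining the $2^{-2p}$ and $2^{-2(m-p)}$ prefactors into the stated $2^{-2m}$. The EGF factorization itself uses standard symbolic-method reasoning (strings as sequences of symbol occurrences, one "slot" per position), which fits our setting since $T(t,p)$ counts ordered strings.
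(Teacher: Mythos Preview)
Your proof is correct and follows essentially the same approach as the paper: both write the EGF as a product of $\cosh^{2(m-p)}$, $\sinh^{2p}$, and $e^{(2^n-2m)x}$, expand the hyperbolic powers via the binomial theorem, collapse the exponents, and read off the $t$th coefficient. The only cosmetic difference is that the paper introduces the abbreviations $a=2m-2p$, $b=2p$, $c=2^n-2m$ before expanding, whereas you work directly with the explicit parameters.
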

\begin{proof}
	For readability let us use $a$ for the number of symbols occurring an even number of times, $b$ for the number of symbols occurring an odd number of times and $c$ for the rest.
    Then $T(t,p)$ has EGF
    \[f(x)=\left(\frac{e^x+e^{-x}}{2}\right)^a\left(\frac{e^x-e^{-x}}{2}\right)^b\big(e^x\big)^c\]
    Rearranging, we find
    \[f(x)=\left(\frac12\right)^{a+b}\sum_{j=0}^a\sum_{k=0}^b\binom{a}{j}\binom{b}{k}(-1)^ke^{(a-2j+b-2k+c)x}\,.\]
    And we assume $c > a+b$, so we use that the EGF $e^{\eta x}$ corresponds to the sequence $\{\eta^t\}_{t=0}^\infty$, read off the relevant coefficient to derive the formula for $T(t,p)$:
    \[T(t,p)=\left(\frac12\right)^{a+b}\sum_{j=0}^a\sum_{k=0}^b\binom{a}{j}\binom{b}{k}(-1)^k(a-2j+b-2k+c)^t\]
    Substituting for $a,b,c$ yields the result.
\end{proof}

\begin{lemma}
With $x_1$ and $x_2$ defined as in \Cref{eq:defnx1x2},
\label{lem:x1plusx2}
	\[x_1(t)+x_2(t)=\left(\frac12\right)^m\sum_{k=0}^m\binom{m}{k}(2^n-4k)^t.\]
\end{lemma}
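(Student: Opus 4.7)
The plan is to compute $x_1(t)+x_2(t)$ via an exponential generating function (EGF) argument, analogous to the one used for $T(t,p)$. Observe first that, unwinding the definitions,
\[
x_1(t)+x_2(t)=\sum_{p=0}^{\min\{m,t\}}\binom{m}{p}T(t,p),
\]
which has the following combinatorial interpretation: it counts length-$t$ strings over $\Sigma$ together with an assignment of each of the $m$ pairs in $\Sigma_{\text{odd}}\sqcup\Sigma_{\text{even}}$ to one of two labels (``odd'' or ``even''), such that in every ``odd'' pair each symbol appears an odd number of times, and in every ``even'' pair each symbol appears an even number of times (symbols in $\Sigma_{\text{rest}}$ being unconstrained).

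Because the constraint on each pair is independent of the others and of $\Sigma_{\text{rest}}$, the EGF factorizes. A symbol constrained to appear an odd (resp.\ even) number of times contributes $\sinh(x)$ (resp.\ $\cosh(x)$), and an unconstrained symbol contributes $e^x$. Hence each of the $m$ pairs contributes $\sinh^2(x)+\cosh^2(x)$ to the EGF, while the $2^n-2m$ symbols of $\Sigma_{\text{rest}}$ contribute $(e^x)^{2^n-2m}$. Using the identity $\sinh^2(x)+\cosh^2(x)=\cosh(2x)=\tfrac12(e^{2x}+e^{-2x})$, the EGF becomes
\[
\left(\frac{e^{2x}+e^{-2x}}{2}\right)^{m}(e^x)^{2^n-2m}
=\frac{1}{2^m}\sum_{k=0}^{m}\binom{m}{k}\,e^{(2^n-4(m-k))x}.
\]

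Reindexing $k\mapsto m-k$ and using $\binom{m}{k}=\binom{m}{m-k}$ rewrites this as $\frac{1}{2^m}\sum_{k=0}^m\binom{m}{k}e^{(2^n-4k)x}$. Reading off the $t$-th EGF coefficient (i.e.\ multiplying by $t!$ and extracting $x^t$), and using that $e^{\eta x}$ corresponds to the sequence $\{\eta^t\}_{t\geq 0}$, yields exactly
\[
x_1(t)+x_2(t)=\left(\frac12\right)^m\sum_{k=0}^m\binom{m}{k}(2^n-4k)^t,
\]
as desired. The only mild obstacle is the bookkeeping around the assignment of pairs to ``odd'' vs.\ ``even'' labels; the key simplification is the identity $\sinh^2+\cosh^2=\cosh(2x)$, which collapses the per-pair sum into a form amenable to a clean binomial expansion.
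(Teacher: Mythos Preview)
Your proof is correct and follows essentially the same route as the paper's: both identify the combinatorial meaning of $x_1(t)+x_2(t)$ as counting length-$t$ strings in which the two symbols of each of the $m$ pairs appear with the same parity, derive the per-pair EGF $\cosh^2(x)+\sinh^2(x)=\tfrac12(e^{2x}+e^{-2x})$, multiply by $(e^x)^{2^n-2m}$ for the unconstrained symbols, binomially expand, and read off the $t$th coefficient. Your extra ``assignment'' layer is just a rephrasing of the same interpretation (the assignment is forced by the string), and your reindexing step is the only cosmetic difference from the paper's expansion.
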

\begin{proof}
The combinatorial interpretation of $x_1(t)+x_2(t)$ is the number of strings of length $t$ over $\Sigma$ where symbols from $A\sqcup B, |A\sqcup B|=2m$ are paired up and within each pair, they must appear with the same parity.

The EGF for strings with 2 elements appearing with same parity is
\[\left(\frac{e^x+e^{-x}}{2}\right)^2 + \left(\frac{e^x-e^{-x}}{2}\right)^2 = \frac{e^{2x}+e^{-2x}}{2}\,.\]
To construct the desired strings, we combine $m$ copies of this with $2^n-m$ copies of the unrestricted EGF, leading to the overall EGF
\[2^{-m}\left(e^{2x}+e^{-2x}\right)^me^{(2^n-2m)x}\, .\]
The result follows from simplifying this EGF and recognizing the related counting formula.
\end{proof}

\begin{lemma}\label{lem:series-N}
    Let $N(t,p)$ be as in \Cref{def:comboquant}.
    Then
	\[N(t,p)=\left(\frac12\right)^{m}\sum_{j=0}^{p}\sum_{k=0}^{m-p} (-1)^j \binom{p}{j}\binom{m-p}{k}(2^n-4j-4k)^t.\]
\end{lemma}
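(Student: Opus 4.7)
The proof will follow exactly the exponential generating function (EGF) strategy used for the preceding two lemmas. The combinatorial interpretation of $N(t,p)$ is: count length-$t$ strings over $\Sigma$ such that for each of the $p$ pairs in $\Sigma_{\text{odd}}$, exactly one member has odd count and the other has even count, while for each of the $m-p$ pairs in $\Sigma_{\text{even}}$, both members share the same parity; the $2^n - 2m$ remaining symbols in $\Sigma_{\text{rest}}$ are unconstrained.

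The plan is to write down the EGF for each factor and multiply. Recall $\sum_{t\geq 0}\frac{x^t}{t!}=e^x$, $\sum_{t\text{ even}}\frac{x^t}{t!}=\frac{e^x+e^{-x}}{2}$ and $\sum_{t\text{ odd}}\frac{x^t}{t!}=\frac{e^x-e^{-x}}{2}$. For one pair in $\Sigma_{\text{odd}}$, the two possible assignments of ``odd'' and ``even'' give EGF
\[
2\cdot\frac{e^x+e^{-x}}{2}\cdot\frac{e^x-e^{-x}}{2}=\frac{e^{2x}-e^{-2x}}{2}.
\]
For one pair in $\Sigma_{\text{even}}$, where both members are even or both are odd, the EGF is
\[
\left(\frac{e^x+e^{-x}}{2}\right)^{\!2}+\left(\frac{e^x-e^{-x}}{2}\right)^{\!2}=\frac{e^{2x}+e^{-2x}}{2}.
\]
Each symbol in $\Sigma_{\text{rest}}$ contributes a factor $e^x$. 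Multiplying these, the EGF for $N(t,p)$ is
\[
F(x)=\left(\frac{1}{2}\right)^{\!m}\bigl(e^{2x}-e^{-2x}\bigr)^{p}\bigl(e^{2x}+e^{-2x}\bigr)^{m-p}e^{(2^n-2m)x}.
\]

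Next I would expand each of the two polynomial factors by the binomial theorem,
\[
(e^{2x}-e^{-2x})^{p}=\sum_{j=0}^{p}(-1)^{j}\binom{p}{j}e^{(2p-4j)x},\qquad(e^{2x}+e^{-2x})^{m-p}=\sum_{k=0}^{m-p}\binom{m-p}{k}e^{(2m-2p-4k)x},
\]
multiply, and combine the exponentials. The exponents sum to $(2p-4j)+(2m-2p-4k)+(2^n-2m)=2^n-4j-4k$, which yields
\[
F(x)=\left(\frac{1}{2}\right)^{\!m}\sum_{j=0}^{p}\sum_{k=0}^{m-p}(-1)^{j}\binom{p}{j}\binom{m-p}{k}e^{(2^n-4j-4k)x}.
\]
Extracting the coefficient of $x^t/t!$ (equivalently, noting that the EGF $e^{\eta x}$ encodes the sequence $\{\eta^{t}\}_{t\geq 0}$) gives the claimed formula. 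No step here is a real obstacle; the only place a sign or a factor can easily be lost is the EGF for a pair in $\Sigma_{\text{odd}}$, so I would double-check that assignment carefully before multiplying out.
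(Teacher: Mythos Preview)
Your proof is correct and follows exactly the paper's approach: write down the EGF $\left(\frac{e^{2x}-e^{-2x}}{2}\right)^{p}\left(\frac{e^{2x}+e^{-2x}}{2}\right)^{m-p}(e^x)^{2^n-2m}$, expand via the binomial theorem, and read off the coefficient of $x^t/t!$. In fact you supply more detail than the paper, which merely states the EGF and says to ``rearrange this product of sums into a sum of products.''
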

\begin{proof}
    The proof uses EGFs analogously to how we derived the expression for $T(t,p)$. 
    The EGF for $N(t,p)$ is
    \begin{equation*}
        f(x)
        = \left(\frac{e^{2x}-e^{-2x}}{2}\right)^{p}\left(\frac{e^{2x}+e^{-2x}}{2}\right)^{m-p}\big(e^x\big)^{2^n-2m} \, .
    \end{equation*}
    We can now rearrange this product of sums into a sum of products and, using again that the EGF $e^{\eta x}$ corresponds to the sequence $\{\eta^t\}_{t=0}^\infty$, read off the expression for $N(t,p)$.
\end{proof}

\begin{lemma}\label{lem:relation-sum-N-x1-x2}
    With $x_1$ and $x_2$ defined as in \Cref{eq:defnx1x2},
    \begin{equation*}
        \frac{1}{2^{nt}}\sum_{k=1}^{\min\{t,m\}} \binom{m}{k} \frac{N(t,k)}{2}
        = \frac{1}{2}\left( 1 -  \frac{x_1(t)+x_2(t)}{2^{nt}}\right)\, .
    \end{equation*}
\end{lemma}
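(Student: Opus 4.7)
My plan is to prove the identity by the simple observation that $N(t,k)$ and $x_1(t)+x_2(t)$ arise as complementary parts of the same partition of all length-$t$ strings over $\Sigma$.

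First, I would give a direct combinatorial interpretation of $N(t,k)$ after reinstating the $\binom{m}{k}$ prefactor: given any string $w$ of length $t$ over $\Sigma$, call a pair $(a,b)\in M$ \emph{mixed} if $a$ and $b$ occur in $w$ with opposite parities, and \emph{aligned} otherwise. For any choice of $k$ pairs in $M$, $N(t,k)$ is exactly the number of $w$ for which those $k$ chosen pairs are mixed and the remaining $m-k$ pairs are aligned. Summing over all choices of which $k$ pairs are mixed, and then over $k\in\{0,1,\ldots,\min\{t,m\}\}$ (with $N(t,k)=0$ for $k>t$ since each mixed pair forces at least one symbol to appear with odd, hence positive, multiplicity), gives a partition of the set of all length-$t$ strings. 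Thus
\begin{equation*}
    \sum_{k=0}^{\min\{t,m\}}\binom{m}{k}N(t,k)=2^{nt}.
\end{equation*}

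Second, I would identify $N(t,0)$ with $x_1(t)+x_2(t)$. By definition, $N(t,0)$ counts strings in which every pair of $M$ is aligned, which is precisely the combinatorial description of $x_1(t)+x_2(t)$ used in the proof of \Cref{lem:x1plusx2}. As a sanity check, the closed-form expressions match directly: substituting $p=0$ in \Cref{lem:series-N} yields
\begin{equation*}
    N(t,0)=\Bigl(\tfrac{1}{2}\Bigr)^{m}\sum_{k=0}^{m}\binom{m}{k}(2^n-4k)^t,
\end{equation*}
which is exactly the formula for $x_1(t)+x_2(t)$ established in \Cref{lem:x1plusx2}.

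Combining these two observations, I would isolate the $k\geq 1$ terms:
\begin{equation*}
    \sum_{k=1}^{\min\{t,m\}}\binom{m}{k}N(t,k)=2^{nt}-N(t,0)=2^{nt}-\bigl(x_1(t)+x_2(t)\bigr),
\end{equation*}
and dividing by $2\cdot 2^{nt}$ gives the claimed identity. Since both key steps are genuinely combinatorial and the supporting generating-function formulas have already been established, there is no real obstacle; the only subtlety to articulate carefully is the truncation bound $\min\{t,m\}$ and that $N(t,k)$ refers to a specific choice of which $k$ pairs sit in $\Sigma_{\text{odd}}$, so the $\binom{m}{k}$ factor accounts for the choice of those pairs.
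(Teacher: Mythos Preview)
Your proposal is correct and follows essentially the same approach as the paper: both arguments use the observation that $\sum_{k=0}^{\min\{t,m\}}\binom{m}{k}N(t,k)=2^{nt}$ (every string has a unique type), then identify $N(t,0)$ with $x_1(t)+x_2(t)$ via \Cref{lem:series-N} and \Cref{lem:x1plusx2}, and finally isolate the $k\geq 1$ terms. Your mixed/aligned language makes the partition argument slightly more explicit than the paper's one-line justification, but the structure is identical.
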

\begin{proof}
    First, observe that $\sum_{k=0}^{\min\{t,m\}} \binom{m}{k} N(t,k)=2^{nt}$, since every one of the overall $2^{nt}$ tuples belongs to some type and we are summing over all sizes of types. Consequently, 
    \begin{align*}
        \frac{1}{2^{nt}}\sum_{k=1}^{\min\{t,m\}} \binom{m}{k} \frac{N(t,k)}{2}
        &= \frac{1}{2} - \frac{1}{2}\binom{m}{0}\frac{N(t,0)}{2^{nt}}\\
        &= \frac{1}{2} - \frac{1}{2^{m+1}}\sum_{k=0}^m \binom{m}{k} \left(1-\frac{4k}{2^n}\right)^t\\
        &= \frac{1}{2}- \frac{1}{2} \frac{x_1(t)+x_2(t)}{2^{nt}}\, ,
    \end{align*}
    where the second-to-last step used \Cref{lem:series-N} and the last step used \Cref{lem:x1plusx2}.
\end{proof}

We now further study the trace norm of $A$.
We will need the following technical fact.

\begin{proposition}
    \label{prop:concentration}
    Suppose $m=|M|=\epsilon2^n$ and $\epsilon=\epsilon(n)<1$.
    Let $D\in\mathbb{R}$ be a fixed constant.
    Then, for any $t=t(n)\leq \mathcal{O}(2^{bn})$ with $0\leq b < \frac{1}{4}$, we have
    \begin{equation*}
        \E_{K\sim\mathcal{B}(m, 1/2)}\left(1 - \frac{D K }{2^n}\right)^t
        = \left(1 - \frac{D \frac{m}{2}}{2^n}\right)^t + o(1)\, .
    \end{equation*}
    
\end{proposition}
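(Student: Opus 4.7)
The plan is to use the method of moments, leveraging the symmetry and sub-Gaussian concentration of the centered Binomial$(m,1/2)$ distribution around its mean. I would let $\Delta := K - m/2$ and $a := 1 - D\varepsilon/2$, so that $1 - DK/2^n = a - D\Delta/2^n$, and apply the binomial theorem to write
\[
\mathbb{E}_K\left[\left(1 - \frac{DK}{2^n}\right)^t\right] = \sum_{j=0}^{t} \binom{t}{j} a^{t-j} \left(-\frac{D}{2^n}\right)^{j} \mathbb{E}[\Delta^j] = a^t + \sum_{k=1}^{\lfloor t/2\rfloor} \binom{t}{2k} a^{t-2k} \left(\frac{D}{2^n}\right)^{2k} \mathbb{E}[\Delta^{2k}]\, ,
\]
where the reduction to even indices uses that $K$ is distributed symmetrically around $m/2$, so all odd moments of $\Delta$ vanish. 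The $k=0$ term is exactly the stated right-hand side, so it suffices to prove the remaining tail sum is $o(1)$.

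For the moments, I would use the sub-Gaussian bound $\mathbb{E}[\Delta^{2k}] \leq (2k-1)!!\,(m/4)^k$, obtained by comparing coefficients in $\mathbb{E}[e^{\lambda\Delta}] = \cosh(\lambda/2)^m \leq e^{m\lambda^2/8}$. Combining this with the algebraic identity $\binom{t}{2k}(2k-1)!! \leq t^{2k}/(2^k k!)$ and setting $R := t^2 D^2 \varepsilon/(8 \cdot 2^n)$, each term of the tail sum is bounded by $|a|^{t-2k} R^k/k!$. In the regime relevant for the applications later in the section (notably \Cref{lem:x1plusx2}, where $D = 4$), the prefactor satisfies $|a| = |1 - D\varepsilon/2| \leq 1$, so $|a|^{t-2k} \leq 1$ for every $0 \leq k \leq t/2$ and the tail is dominated by $\sum_{k\geq 1} R^k/k! = e^R - 1$.

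It then remains to verify that $R \to 0$ in the stated regime: for $t = \mathcal{O}(2^{bn})$ with $b < 1/4$, since $\varepsilon < 1$ and $D$ is a fixed constant, we get $R = \mathcal{O}(2^{(2b-1)n}) \to 0$ (in fact any $b < 1/2$ suffices for this estimate), and consequently $e^R - 1 = \mathcal{O}(R) = o(1)$. The main obstacle is ensuring the moment bound is tight enough for the series to converge uniformly in the relevant regime; the Gaussian-scale $(2k-1)!!$ factor is essential, and cruder bounds such as the trivial $|\Delta|\leq m$ fail by massive margins. A secondary subtlety, which does not arise in the concrete applications, is the case $|a| > 1$ (possible if $D < 0$ or $D > 4$): there one cannot use $|a|^{t-2k} \leq 1$ and a more delicate per-term analysis tracking cancellations between $a^{t-2k}$ and the moments would be required.
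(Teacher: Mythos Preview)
Your argument is correct and takes a genuinely different route from the paper. The paper does not expand in moments; instead it uses a one-line Lipschitz estimate $|(1-x)^t-(1-y)^t|\le t|x-y|$ (valid once $|1-Dk/2^n|\le 1$), then controls $\E|K-m/2|$ by splitting into a Chernoff-concentrated event and its complement, with a carefully chosen threshold $\eta=\min\{1/n,1/t^2\}$. Your approach instead exploits the exact symmetry of $\mathcal{B}(m,1/2)$ to kill all odd moments, then feeds the sub-Gaussian even-moment bound $\E[\Delta^{2k}]\le(2k-1)!!(m/4)^k$ into the binomial expansion; the resulting exponential series $e^R-1$ with $R=\mathcal O(t^2\varepsilon/2^n)$ is more explicit and in fact yields the stronger range $b<1/2$ rather than the paper's $b<1/4$ (which is an artifact of the choice $\eta=1/t^2$). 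Your stated caveat about $|a|>1$ is honest but not a defect relative to the paper: the paper's proof also silently assumes $|1-Dk/2^n|\le 1$ for all $0\le k\le m$, which fails for $D<0$ just as your bound $|a|^{t-2k}\le 1$ does, and neither issue arises in the sole application ($D=4$, $\varepsilon\to 0$). The paper's approach is slightly more portable to asymmetric coin biases, while yours is cleaner and quantitatively sharper here.
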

\noindent In \Cref{prop:concentration} and what follows, $\mc B(\ell,1/2)$ denotes the Binomial distribution with $\ell$ trials and success probability $1/2$.

\begin{proof}
    Notice that, for any $t\in\mathbb{N}$, the function $f:[0,B)\to\mathbb{R}$ given by $f(x) = (1 - x)^t$ is Lipschitz with Lipschitz constant $t\max_{0\leq x\leq B}|1-x|^{t-1}$. Using that $|1 - Dk/2^n|\leq 1$ holds for all $0\leq k\leq m$ for sufficiently large $n$, this implies 
    \begin{align*}
        \E_{K\sim\mathcal{B}(m, 1/2)}\left\lvert\left(1 - \frac{D K }{2^n}\right)^t
        - \left(1 - \frac{D \frac{m}{2}}{2^n}\right)^t\right\rvert
        &\leq \E_{K\sim\mathcal{B}(m, 1/2)}\left\lvert t \left(\frac{DK}{2^n} - \frac{D\frac{m}{2}}{2^n}\right) \right\rvert\, .
    \end{align*}
    From Chernoff we have that
    \[\Pr\left[\left|\frac{DK}{2^n}-\frac{D\frac{m}{2}}{2^n}\right|\geq \eta \right]\leq 2\exp\left(-\mathsf{const}\cdot2^n\cdot\frac{\eta^2}{\epsilon}\right).\]
    Call the low-probability event above $E$.
    Then
    \begin{align*}
        \E_{K\sim\mathcal{B}(m, 1/2)}\left\lvert t \left(\frac{DK}{2^n} - \frac{D\frac{m}{2}}{2^n}\right) \right\rvert
        &\leq t\Pr[E^c]\eta + t\Pr[E]\frac{D\frac{m}{2}}{2^n}\\
        &\leq t\eta + 2tD\varepsilon\exp\left(-\mathsf{const}\cdot2^n\cdot\frac{\eta^2}{\epsilon}\right)\, .
    \end{align*}
    We can set $\eta=\min\{1/n, 1/t^2\}$, then, because of our assumption on $t=t(n)$, both summands go to $0$ as $n\to\infty$, finishing the proof.
\end{proof}

\begin{theorem}
    $\|A\|_1= \Omega(1)$ for $t=\Omega\big(\epsilon(n)^{-1}\big)$, assuming $\varepsilon=\varepsilon(n)\geq\Omega(2^{-bn})$ with $0\leq b<\frac{1}{4}$.
\end{theorem}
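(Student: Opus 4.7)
The plan is to chain together the results from the preceding subsections into a short computation. Start from \Cref{corollary:combinatorial-bound-on-A}, which writes $\|A\|_1$ as a sum of two non-negative pieces. Since the empty-type contribution $\frac{2}{2^{nt-1}}\sqrt{x_1(t) x_2(t)}$ is non-negative, I would simply discard it and retain only the contribution from the nonempty types. \Cref{lem:relation-sum-N-x1-x2} then collapses the remaining sum in closed form, leaving the clean bound
\[
\|A\|_1 \;\geq\; 2\left(1 - \frac{x_1(t) + x_2(t)}{2^{nt}}\right).
\]
The whole theorem therefore reduces to showing that the ratio $(x_1(t)+x_2(t))/2^{nt}$ is bounded away from $1$ once $t = \Omega(1/\varepsilon)$.

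Next I would invoke \Cref{lem:x1plusx2}, which rewrites this ratio as the expectation $\mathbb{E}_{K \sim \mathcal{B}(m,1/2)}\!\bigl[(1 - 4K/2^n)^t\bigr]$. The binomial variable $K$ concentrates tightly around $m/2 = \varepsilon 2^{n-1}$, and \Cref{prop:concentration} (applied with $D = 4$) makes this rigorous in exactly the right regime: as long as $t \leq \mathcal{O}(2^{bn})$ for some $b < 1/4$, the expectation equals $(1 - 2\varepsilon)^t + o(1)$. This is precisely where the hypothesis $\varepsilon \geq \Omega(2^{-bn})$ is used: it guarantees that the target scale $t = \Theta(1/\varepsilon)$ is still inside the interval of applicability of the concentration lemma, since then $1/\varepsilon \leq \mathcal{O}(2^{bn})$.

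Finally I would pick $t = c/\varepsilon$ with $c$ chosen so that $e^{-2c} \leq 1/4$ (for instance $c \geq \tfrac{1}{2}\ln 4$), and use $1 - 2\varepsilon \leq e^{-2\varepsilon}$ to get $(1 - 2\varepsilon)^t \leq e^{-2\varepsilon t} = e^{-2c} \leq 1/4$. Substituting back,
\[
\|A\|_1 \;\geq\; 2\bigl(1 - \tfrac{1}{4} - o(1)\bigr) \;=\; \Omega(1),
\]
as claimed. I do not anticipate any genuine obstacle at this stage: the real technical content---the block decomposition of $\tilde A$ into complete bipartite adjacency matrices via the compatibility relation, the EGF identities for $T(t,p)$, $N(t,k)$, and $x_1+x_2$, and the Lipschitz/Chernoff argument behind \Cref{prop:concentration}---has already been carried out. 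The only thing to be careful about here is matching the scale $t = \Theta(1/\varepsilon)$ with the regime of validity of the concentration lemma, and the assumption $b < 1/4$ on $\varepsilon$ is exactly what makes these two scales compatible.
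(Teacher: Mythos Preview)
Your proposal is correct and follows essentially the same route as the paper: drop the nonnegative empty-type term from \Cref{corollary:combinatorial-bound-on-A}, use \Cref{lem:relation-sum-N-x1-x2} and \Cref{lem:x1plusx2} to reduce to $2\bigl(1-\mathbb{E}_{K\sim\mathcal B(m,1/2)}(1-4K/2^n)^t\bigr)$, apply \Cref{prop:concentration} to replace the expectation by $(1-2\varepsilon)^t+o(1)$, and finish by noting that $(1-2\varepsilon)^t$ is bounded away from $1$ once $t\geq c/\varepsilon$. Your write-up is slightly more explicit about the final numerical step (choosing $c$ and invoking $1-2\varepsilon\le e^{-2\varepsilon}$), but the argument is the same.
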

\begin{proof}

Let us label the expression for $\|A\|_1$ from Corollary \ref{corollary:combinatorial-bound-on-A} as
\begin{equation}
\label{eq:3-star-bound}
    \|A\|_1
    = \vphantom{\sum_{k=1}^{\min\{t,m\}}}\frac{2}{2^{nt-1}}\sqrt{x_1(t)\cdot x_2(t)} + \underbrace{\frac{2}{2^{nt-1}}\sum_{k=1}^{\min\{t,m\}} \binom{m}{k} \frac{N(t,k)}{2}}_{(*)}.
\end{equation}
We will lower-bound $\|A\|_1$ by lower-bounding the second summand here, $(*)$.
Lemma \ref{lem:x1plusx2} implies
\begin{equation}
\label{eq:x1plusx2-prob}
    x_1(t)+x_2(t)=\left(\frac12\right)^m\sum_{k=0}^m\binom{m}{k}\left(2^n-4k\right)^t=\E_{K\sim \mc B(m,1/2)}\left(2^n-4K\right)^t
\end{equation}
where $\mc B(m,1/2)$ refers to the Binomial distribution.
Returning to $(*)$ and using \Cref{lem:relation-sum-N-x1-x2}, we have
\begin{align*}
    (*) 
    &= \frac{4}{2}\left(1 - \frac{x_1(t) + x_2(t)}{2^{nt}}\right)
    = 2\left(1 - \mathbb{E}\left(1 - \frac{4K}{2^n}\right)^t\right) \, .
\end{align*}
Using \Cref{prop:concentration}, we obtain that, as long as $t\leq \mathcal{O}(2^{bn})$ with $0\leq b<\frac{1}{4}$,
\begin{equation*}
    (*) 
    = 2\left(1 - \left(1 - \frac{2m}{2^n}\right)^t + o(1)\right)
    = 2\left(1 - \left(1 - 2\varepsilon\right)^t + o(1)\right)\, .
\end{equation*}
Notice that our assumption on $\varepsilon=\varepsilon(n)$ ensures that $\frac{1}{\varepsilon(n)}\leq \mathcal{O}(2^{bn})$ with $0\leq b<\frac{1}{4}$. Therefore, we can consider $t\geq \Omega(1/\varepsilon)$, we get $1 - \left(1 - 2\varepsilon\right)^t\geq \Omega(1)$, and therefore $\|A\|_1\geq (*)\geq \Omega(1)$.
\end{proof}

\begin{remark}
We can extend the above quantum distinguishability analysis to the ensembles from \cite{black_nearly_2023}. 
The construction in \cite{black_nearly_2023}, based on \textit{Talagrand’s random DNFs} \cite{talagrand1996much}, establishes a lower bound of $\exp(\Omega(\sqrt{n}/\varepsilon))$ for passive classical monotonicity testing via a birthday paradox argument.
The construction randomly selects DNF terms of fixed width to define a partial partition of the Boolean cube into disjoint sets $U_j$ such that any two points in different $U_j$ are incomparable. 
The difference between the monotone $D_{\text{yes}}$ and non-monotone $D_{\text{no}}$ case lies in the function value assignments: in $D_{\text{yes}}$, values within each disjoint set $U_j$ are structured monotonically while in $D_{\text{no}}$, values with each $U_j$ are randomly assigned. Classically, distinguishing these distributions requires $\exp(\Omega(\sqrt{2^n}/\epsilon))$ samples, as a tester must sample at least two points from the same $U_j$ to gain information. 
This leads to an exponential lower bound when parameters are chosen appropriately.

Following an argument structured similarly to the one above, one may see that the difference matrix between the induced function state ensembles in the quantum setting decomposes into blocks corresponding to complete multipartite graphs. 
To see this, in analogy to the analysis from \Cref{section:passive-quantum-monotonicity-testing-lower-bound}, we can define a notion of compatibility between any two index tuples. 
Given a collection of sets $U_j$ and an index tuple $\mb x$, we first remove duplicates from the tuple and analyze its intersection pattern with each $U_j$. 
For instance, if $U_1 = \{1, 2\}$, $U_2 = \{3, 4\}$, and $U_3 = \{5, 6\}$, and our index tuple is $(1, 1, 1, 2, 3, 3, 3, 3, 5)$, then after removing duplicate, the corresponding subsets under the $U_j$ sets are $[{1, 2}], [\ ], [{5}]$.
Two tuples are said to be compatible if, for every $j$, the number of elements from each $U_j$ that appear in the tuple is even but not identical across the tuples after removing duplicates and decomposing. 
For example, the tuple $(1, 1, 1, 2, 3, 3, 3, 3, 5)$ is compatible with $(1, 1, 1, 2, 3, 3, 3, 3, 6)$ but not with $(1, 2, 3, 3, 3, 3, 3, 3, 5)$, as the latter shares the same decomposition as the original.

Importantly, this compatibility is transitive: if $\mb x$ is compatible with $\mb x'$ and if $\mb x'$ is compatible with $\mb x''$, then $\mb x$ is also compatible with $\mb x''$. 
This transitivity induces complete multipartite graph blocks with each block corresponding to a compatibility class.
Therefore, the trace distance between the two ensembles equals the sum of the trace norms of multipartite graphs of various sizes.
Bounds on the eigenvalues of such a graph in terms of the sizes of its parts can be found in, for example, \cite{MR593991, mehatari2023eigenvalues}.
Also in analogy to our analysis of the \cite{Goldreich2000} construction, the construction from \cite{black_nearly_2023} results in a graph such that sum of the contributions from the block components yields distinguishability in the quantum case. 
The ensembles remain distinguishable when $t = \Omega(1/\epsilon)$, and thus they achieve no improvement over the generic lower bound. 

\end{remark}

\section*{Acknowledgments}
The authors thank Srinivasan Arunachalam, Fernando Jeronimo, Kyle Gulshen, Jiaqing Jiang, John Bostanci, Yeongwoo Hwang, Shivam Nadimpali, John Preskill, Akshar Ramkumar, Abdulrahman Sahmoud, Mehdi Soleimanifar, and Thomas Vidick for enlightening discussions.
Moreover, we thank Nathan Harms for suggesting we look at monotonicity and general discussions on classical bounds for related problems.
We thank Fermi Ma for helpful discussions that provided inspiration for \Cref{thm:qsamp-cquery-sep}.
We are grateful to Francisco Escudero Gutiérrez for pointing us to the Fourier-analytic characterization of monotonicity which allowed us to improve a previous passive quantum monotonicity tester.
Finally, we thank the anonymous STOC 2025 reviewers for helpful feedback.
MCC was partially supported by a DAAD PRIME fellowship and by the BMBF (QPIC-1). 
JS is funded by Chris Umans' Simons Foundation Investigator Grant.
Part of this work was completed while JS was visiting the Simons Institute for the Theory of Computing, supported by DOE QSA grant \#FP00010905.

\appendix

\section{Passive quantum testers for symmetry and triangle-freeness} \label{sec:passive-quantum-testers}

\subsection{Passive quantum symmetry testing}\label{subsection;passive-quantum-symmetry-testing}

A function $f:\{0,1\}^n\to\{0,1\}$ is called symmetric if $f\circ \pi = f$ holds for all permutations $\pi \in S_n$. Here, $\pi\in S_n$ acts on $n$-bit strings by permuting coordinates. That is, $\pi(x_1\ldots x_n)=x_{\pi^{-1}(1)}\ldots x_{\pi^{-1}(n)}$.
This gives rise to the following classical testing problem:

\begin{problem}[Classical symmetry testing]\label{problem:classical-symmetry-testing}
  Given query access to an unknown function $f:\{0,1\}^n\to\{0,1\}$ and an accuracy parameter $\varepsilon\in (0,1)$, decide with success probability $\geq 2/3$ whether
  \begin{itemize}
      \item[(i)] $f$ is symmetric, or
      \item[(ii)] $f$ is $\varepsilon$-far from all symmetric functions, that is, we have $\Pr_{x\sim\{0,1\}^n}[f(x)\neq g(x)]\geq\varepsilon$ for all symmetric functions $g:\{0,1\}^n\to\{0,1\}$,
  \end{itemize}  
  promised that $f$ satisfies either (i) or (ii).  
\end{problem}

Symmetry allows for a (trivial) reformulation in terms of (in general non-local) pairwise comparisons:

\begin{proposition}
    A function $f:\{0,1\}^n\to\{0,1\}$ is symmetric if and only if for all $x\in\{0,1\}^n$ and for all $\pi\in S_n$, the equality
    \begin{equation}
        f(x)=f(\pi(x))
    \end{equation}
    holds.
\end{proposition}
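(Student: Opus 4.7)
The plan is to simply unpack the definition of symmetry given just above the proposition. Recall the definition: $f$ is symmetric if $f\circ\pi=f$ for all $\pi\in S_n$. By the standard definition of equality of functions on $\{0,1\}^n$, the identity $f\circ\pi=f$ is equivalent to $(f\circ\pi)(x)=f(x)$ holding for every $x\in\{0,1\}^n$, which in turn is just $f(\pi(x))=f(x)$ for every $x$. Thus the quantifier ``for all $\pi\in S_n$'' together with ``for all $x\in\{0,1\}^n$'' in the proposition exactly matches the universal quantifier built into the functional equation $f\circ\pi=f$.

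Concretely, to prove the forward implication I would fix an arbitrary $\pi\in S_n$ and an arbitrary $x\in\{0,1\}^n$, then evaluate the functional identity $f\circ\pi=f$ at $x$ to obtain $f(\pi(x))=f(x)$. For the reverse implication, I would fix $\pi\in S_n$ and observe that if $f(\pi(x))=f(x)$ for all $x$, then the functions $f\circ\pi$ and $f$ agree pointwise on $\{0,1\}^n$, so $f\circ\pi=f$; since $\pi$ was arbitrary, this recovers the definition of symmetry.

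There is no real obstacle here; the proposition is stated as a (trivial) reformulation of the definition, with the purpose of making explicit the pairwise comparison viewpoint that the subsequent passive quantum symmetry tester will exploit via the symmetric subspace projector.
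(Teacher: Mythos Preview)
Your proposal is correct and matches the paper's treatment: the paper explicitly labels this proposition a ``(trivial) reformulation'' of the definition and provides no proof, so your direct unpacking of the functional identity $f\circ\pi=f$ into the pointwise condition $f(\pi(x))=f(x)$ is exactly what is intended.
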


This characterization becomes important for testing because of the following result:

\begin{theorem}[Soundness of symmetry testing (compare {\cite[Lemma 3.3]{blais2015partially}})]\label{theorem:soundness-classical-symmetry-testing}
    If $f:\{0,1\}^n\to\{0,1\}$ is exactly $\varepsilon$-far from all symmetric functions, then
    \begin{equation}\label{eq:oundness-classical-symmetry-testing}
        \varepsilon
        \leq \Pr_{x\sim\{0,1\}^n, \pi\sim S_n}[f(x)\neq f(\pi(x))]
        \leq 2\varepsilon\, .
    \end{equation}
\end{theorem}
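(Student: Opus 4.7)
The plan is to handle the two inequalities separately, both via a ``majority-in-each-orbit'' symmetric function.

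For the upper bound, I would fix an optimal symmetric function $g^*$ achieving $\Pr_x[f(x)\neq g^*(x)]=\varepsilon$ (which exists by the ``exactly $\varepsilon$-far'' hypothesis). The key observation is that since $g^*$ is symmetric, $g^*(x)=g^*(\pi(x))$ for all $x,\pi$, so whenever $f(x)\neq f(\pi(x))$ at least one of the events $\{f(x)\neq g^*(x)\}$ or $\{f(\pi(x))\neq g^*(\pi(x))\}$ must hold. A union bound then yields
\[
\Pr_{x,\pi}[f(x)\neq f(\pi(x))]\leq \Pr_{x,\pi}[f(x)\neq g^*(x)]+\Pr_{x,\pi}[f(\pi(x))\neq g^*(\pi(x))].
\]
Each term equals $\varepsilon$, using that for any fixed $\pi\in S_n$, the map $x\mapsto \pi(x)$ is a bijection on $\{0,1\}^n$ so $\pi(x)$ is uniform when $x$ is.

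For the lower bound, I would build a specific symmetric function and compare. The orbits of the $S_n$-action on $\{0,1\}^n$ are exactly the Hamming layers $L_k=\{x:|x|=k\}$, so the natural choice is $g^*(x):=\mathrm{MAJ}_{y\in L_k}f(y)$ for $x\in L_k$, which is symmetric. Let $\delta_k:=\Pr_{x\in L_k}[f(x)\neq g^*(x)]\in[0,1/2]$ be the fraction of layer-$k$ errors; then $\Pr_x[f(x)\neq g^*(x)]=\sum_k\tfrac{|L_k|}{2^n}\delta_k\geq \varepsilon$ by the $\varepsilon$-far hypothesis. Now, conditioned on $x\in L_k$, the distribution of $\pi(x)$ for $\pi\sim S_n$ is uniform over $L_k$, so
\[
\Pr_{x,\pi}\bigl[f(x)\neq f(\pi(x))\,\big|\,x\in L_k\bigr]=\Pr_{x,y\sim L_k}[f(x)\neq f(y)]=2\delta_k(1-\delta_k)\geq \delta_k,
\]
where the last inequality uses $\delta_k\leq 1/2$. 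Averaging over $k$ with weights $|L_k|/2^n$ gives $\Pr_{x,\pi}[f(x)\neq f(\pi(x))]\geq \sum_k\tfrac{|L_k|}{2^n}\delta_k\geq \varepsilon$.

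No serious obstacle should arise: the only subtlety worth flagging is verifying that for fixed $\pi$ the map $x\mapsto\pi(x)$ preserves the uniform distribution (trivial, as $\pi$ is a bijection) and that it preserves Hamming weight (so the conditional calculation inside each layer is valid). Together, these two bounds establish \Cref{eq:oundness-classical-symmetry-testing}.
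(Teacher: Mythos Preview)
Your proof is correct. The paper does not actually prove this theorem; it simply cites \cite[Lemma 3.3]{blais2015partially} and uses the statement as a black box. Your argument---the union bound against an optimal symmetric $g^*$ for the upper bound, and the layer-wise majority function with the calculation $2\delta_k(1-\delta_k)\geq\delta_k$ for the lower bound---is the standard proof and is exactly what one would expect to find in the cited reference.
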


\Cref{theorem:soundness-classical-symmetry-testing} implies that we can classically test symmetry from query access simply by sampling a random permutation $\pi$ and a random input $x$ and then comparing the function values $f(x)$ and $f(\pi(x))$. Here, $\sim 1/\varepsilon$ many queries suffice to achieve success probability $\geq 2/3$ in symmetry testing.

We now describe how to make use of \Cref{theorem:soundness-classical-symmetry-testing} to build a passive quantum symmetry tester.

\begin{theorem}[Passive quantum symmetry testing]\label{theorem:quantum-symmetry-testing}
    There is an efficient quantum algorithm that uses $\mathcal{O}\left(\frac{\log(1/\delta)}{\varepsilon^2}\right)$ many copies of the function state $\ket{\Psi}=\frac{1}{\sqrt{2^n}}\sum_{x\in\{0,1\}^n}\ket{x,f(x)}$ to decide, with success probability $\geq 1-\delta$, whether $f$ is symmetric or $\varepsilon$-far from all symmetric functions.
\end{theorem}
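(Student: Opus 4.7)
The plan is to leverage the $n$-qubit symmetric subspace projector $\Pi_{\mathrm{sym}} := \tfrac{1}{n!}\sum_{\pi \in S_n} V_\pi$, where $V_\pi$ acts on the first $n$ qubits by $V_\pi \ket{x_1 \cdots x_n} = \ket{x_{\pi^{-1}(1)} \cdots x_{\pi^{-1}(n)}}$. The basic single-copy test measures whether the input register of $\ket{\Psi}$ lies in the symmetric subspace (leaving the output qubit alone). The key calculation is
\[
\langle \Psi | (\Pi_{\mathrm{sym}} \otimes I) | \Psi \rangle \;=\; \frac{1}{n!} \sum_{\pi} \langle \Psi | (V_\pi \otimes I) | \Psi \rangle \;=\; \mathbb{E}_{\pi \sim S_n} \Pr_{x \sim \{0,1\}^n}[f(x) = f(\pi(x))] \;=\; 1 - \Pr_{x, \pi}[f(x) \neq f(\pi(x))],
\]
obtained by expanding $(V_\pi \otimes I)\ket{\Psi} = 2^{-n/2}\sum_x \ket{\pi(x), f(x)}$ and matching strings. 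This expresses the single-copy acceptance probability of the test exactly as one minus the quantity bounded in \Cref{theorem:soundness-classical-symmetry-testing}.

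Completeness and soundness are then immediate. If $f$ is symmetric, $\Pr_{x, \pi}[f(x) \neq f(\pi(x))] = 0$, so the test accepts with probability exactly $1$. If $f$ is $\varepsilon$-far from symmetric, \Cref{theorem:soundness-classical-symmetry-testing} gives acceptance at most $1 - \varepsilon$. To boost the $\varepsilon$-gap into a $(1-\delta)$-confidence decision, I would run the single-copy test on $T = \Theta(\log(1/\delta)/\varepsilon^2)$ independent copies of $\ket{\Psi}$, form the empirical acceptance frequency, and threshold at $1 - \varepsilon/2$; Chernoff--Hoeffding then yields the stated copy complexity.

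The remaining step is to implement the projective test efficiently without constructing $\Pi_{\mathrm{sym}}$ literally. A clean route is a SWAP-test surrogate: sample $\pi \in S_n$ classically (in $O(n)$ time via Fisher--Yates), apply $V_\pi \otimes I$ to one copy of $\ket{\Psi}$ as a network of $O(n)$ SWAP gates on the input register, and SWAP-test against a fresh copy. This has acceptance $\tfrac{1}{2}(1 + \langle \Psi | V_\pi \otimes I | \Psi \rangle^2)$; averaged over $\pi$ it equals $1$ for symmetric $f$ and is at most $1 - \varepsilon/2$ in the far case (using $p^2 \leq p$ for $p \in [0,1]$ together with \Cref{theorem:soundness-classical-symmetry-testing}), giving the same $\Theta(1/\varepsilon^2)$ bound at the cost of doubling the copy count. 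The main obstacle I anticipate is essentially bookkeeping: verifying that the identity on the output qubit commutes with the input-register permutation so the cancellations go through cleanly, and confirming that a single randomly sampled $\pi$ per trial (rather than coherently averaging over $S_n$) produces an unbiased estimator of the desired acceptance probability for the Chernoff bound.
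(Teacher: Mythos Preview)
Your core argument---computing $\langle\Psi|(\Pi_{\mathrm{sym}}\otimes I)|\Psi\rangle = 1 - \Pr_{x,\pi}[f(x)\neq f(\pi(x))]$, invoking \Cref{theorem:soundness-classical-symmetry-testing}, and boosting via Chernoff--Hoeffding---is exactly the paper's proof. (In fact your displayed identity is the correct sign; the paper's version contains a harmless typo.)

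Where you diverge is in the efficiency argument. The paper simply cites that the two-outcome projective measurement $\{\Pi_{\mathrm{sym}}\otimes I,\, I - \Pi_{\mathrm{sym}}\otimes I\}$ can be implemented directly with $\mathcal{O}(n^2)$ auxiliary qubits and controlled-SWAPs (Barenco et al., Laborde--Wilde), keeping the test strictly single-copy. Your SWAP-test surrogate---sample $\pi$, permute one copy, SWAP-test against a fresh copy---is a genuinely different and correct alternative: it trades the dedicated symmetric-subspace circuit for a doubling of the copy count and a squared overlap, which you handle cleanly via $p_\pi^2\leq p_\pi$. The paper's route buys a single-copy tester (relevant to their later discussion of single- vs.\ multi-copy processing), while yours is more self-contained and avoids the external circuit construction. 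The ``obstacles'' you flag are non-issues: the permutation on the input register manifestly commutes with the identity on the output qubit, and sampling one $\pi$ per trial gives i.i.d.\ Bernoulli outcomes with mean $\mathbb{E}_\pi\big[\tfrac{1}{2}(1+p_\pi^2)\big]$, so Chernoff--Hoeffding applies directly.
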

\begin{proof}   
    For a permutation $\pi\in S_n$, write $P(\pi)$ for the representation of that permutation on $(\mathbb{C}^2)^{\otimes n}$ given as $P(\pi) = \sum_{x\in\{0,1\}^n} \ket{\pi(x)}\bra{x}$. Then, the orthogonal projector onto the symmetric subspace of $(\mathbb{C}^2)^{\otimes n}$ can be written as
    \begin{equation*}
        P_{\mathrm{sym}}^{n}
        = \frac{1}{n!} \sum_{\pi\in S_n} P(\pi)\, .
    \end{equation*}
    Notice that, if $\ket{\Psi}$ is the function state for $f:\{0,1\}^n\to\{0,1\}$, then
    \begin{equation*}
        \bra{\Psi}(P_{\mathrm{sym}}^{n}\otimes \mathds{1}_2)\ket{\Psi}
        = \Pr_{x\sim\{0,1\}^n, \pi\sim S_n}[f(x)\neq f(\pi(x))]\, .
    \end{equation*}
    So, by a Chernoff-Hoeffding bound, we can, with success probability $\geq 1-\delta$, obtain a $(\varepsilon/3)$-accurate estimate of the probability in \Cref{eq:oundness-classical-symmetry-testing} by independently performing the two-outcome projective measurement $\{P_{\mathrm{sym}}^{n}\otimes \mathds{1}_2, \mathds{1}_2^{\otimes (n+1)}-P_{\mathrm{sym}}^{n}\otimes \mathds{1}_2\}$ on $m=\mathcal{O}(\log(1/\delta)/\varepsilon^2)$ many single copies of $\ket{\Psi}$ and then taking the empirical average of the observed outcomes (with outcome $1$ associated to $P_{\mathrm{sym}}^{n}\otimes \mathds{1}_2$).
    As the two-outcome measurement $\{P_{\mathrm{sym}}^{n}\otimes \mathds{1}_2, \mathds{1}_2^{\otimes (n+1)}-P_{\mathrm{sym}}^{n}\otimes \mathds{1}_2\}$ can be implemented efficiently using $\mathcal{O}(n^2)$ auxiliary qubits and $\mathcal{O}(n^2)$ controlled-SWAP gates \cite{barenco1997stabilization, laborde2022quantum}, our quantum symmetry tester is also computationally efficient.
\end{proof}

In contrast to the classical sample complexity of $\Theta(n^{1/4})$ for classical passive symmetry testing \cite{alon2016active}, our passive quantum symmetry tester in \Cref{theorem:quantum-symmetry-testing} achieves an $n$-independent quantum sample complexity. Thus, we have an unbounded separation between classical and quantum for this passive testing task.

Finally, let us comment on two extensions.
Firstly, relying on the second inequality in \Cref{theorem:soundness-classical-symmetry-testing}, we can modify the proof of \Cref{theorem:quantum-symmetry-testing} to obtain an efficient tolerant quantum passive symmetry tester that uses $\mathcal{O}\left(\frac{\log(1/\delta)}{(\varepsilon_2-\varepsilon_1)^2}\right)$ copies of the unknown function state to decide whether $f$ is $\varepsilon_1$-close to or $\varepsilon_2$-far from symmetric, assuming that $\varepsilon_2 > 2C\varepsilon_1$ holds with $C>1$ some constant.
Secondly, as \Cref{theorem:soundness-classical-symmetry-testing} can be extended to so-called partial symmetric functions (compare again \cite[Lemma 3.3]{blais2015partially}), also our passive quantum symmetry tester can be modified to test for partial symmetry.

\subsection{Passive quantum triangle-freeness testing}\label{subsection;passive-quantum-trianglefreeness-testing}

For $x,y\in\{0,1\}^n$ and for a Boolean function $f:\{0,1\}^n\to\{0,1\}$, we say that $(x,y,x+y)$ is a triangle in $f$ if $f(x)=f(y)=f(x+y)=1$. Accordingly, we call the function $f$ triangle-free if no triple $(x,y,x+y)$ is a triangle in $f$.
Testing for triangle-freeness thus becomes the following problem:

\begin{problem}[Classical triangle-freeness testing]\label{problem:triangle-freeness}
  Given query access to an unknown function $f:\{0,1\}^n\to\{0,1\}$ and an accuracy parameter $\varepsilon\in (0,1)$, decide with success probability $\geq 2/3$ whether
  \begin{itemize}
      \item[(i)] $f$ is triangle-free, or
      \item[(ii)] $f$ is $\varepsilon$-far from all triangle-free functions, that is, we have $\Pr_{x\sim\{0,1\}^n}[f(x)\neq g(x)]\geq\varepsilon$ for all triangle-free functions $g:\{0,1\}^n\to\{0,1\}$,
  \end{itemize}  
  promised that $f$ satisfies either (i) or (ii).  
\end{problem}

The natural approach towards testing for triangle-freeness from query access is to choose $x,y\in\{0,1\}^n$ at random and check whether $(x,y,x+y)$ is a triangle in $f$, and to repeat this sufficiently often.
Bounding the number of repetitions needed to succeed with this approach is non-trivial, connecting to Szemerédi's regularity lemma \cite{szemeredi1976regular} and the triangle removal lemma \cite{ruzsa1978triple}.
In our next result, we recall the to our knowledge best known corresponding bounds.

\begin{theorem}[Soundness of triangle-freeness testing \cite{fox2011new, hatami2016arithmetic}]\label{theorem:classical-soundness-triangle-freeness-testing}
    If $f:\{0,1\}^n\to\{0,1\}$ is $\varepsilon$-far from all triangle-free functions, then
    \begin{equation}
        \Pr_{x,y\sim\{0,1\}^n}\left[ f(x)=f(y)=f(x+y)=1 \right]
        \geq \frac{1}{\mathsf{Tower}\left(C\cdot \left\lceil \log\left(\frac{1}{\varepsilon}\right)\right\rceil\right)}\, ,
    \end{equation}
    where $C>0$ is a universal integer constant.
\end{theorem}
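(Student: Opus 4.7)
The plan is to prove the contrapositive (the arithmetic triangle removal lemma over $\mathbb{F}_2^n$): if $f$ has triangle density at most $1/\mathsf{Tower}(C\lceil\log(1/\varepsilon)\rceil)$, then modifying $f$ on at most an $\varepsilon$-fraction of inputs yields a triangle-free function. This is established in \cite{fox2011new, hatami2016arithmetic}, and the approach rests on two pillars: an arithmetic regularity lemma and a ``cleaning'' (removal) argument.

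First, apply an arithmetic regularity lemma to $F := \mathbf{1}_{f=1}$, producing a decomposition $F = F_{\mathrm{str}} + F_{\mathrm{sml}} + F_{\mathrm{psr}}$ at some regularity parameter $\eta=\eta(\varepsilon)$. Here $F_{\mathrm{str}}$ is measurable with respect to a factor generated by a bounded collection of linear characters (equivalently, constant on the cells of a low-complexity Fourier-cut partition of $\mathbb{F}_2^n$), $F_{\mathrm{sml}}$ has small $L^2$ norm, and $F_{\mathrm{psr}}$ has tiny $U^2$ Gowers uniformity norm. Crucially, the trilinear triangle-counting operator $T(g,h,k) := \mathbb{E}_{x,y}[g(x)h(y)k(x+y)]$ is controlled by the $U^2$ norm of any one of its arguments via the generalized von Neumann theorem, so after expanding $T(F,F,F)$ into $27$ trilinear terms, every term containing $F_{\mathrm{psr}}$ contributes negligibly and every term containing $F_{\mathrm{sml}}$ is bounded by Cauchy--Schwarz.

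Next, run the standard cleaning argument at the level of the structured factor. The contribution of $T(F_{\mathrm{str}},F_{\mathrm{str}},F_{\mathrm{str}})$ is a sum over compatible triples of cells $(C_i,C_j,C_k)$ of a product of cell densities, weighted by the arithmetic compatibility $\mathbf{1}[x\in C_i, y\in C_j, x+y\in C_k]$. If the true triangle density is below $1/\mathsf{Tower}(C\lceil\log(1/\varepsilon)\rceil)$, choose the regularity parameter $\eta$ so that the $U^2$ error dominates any genuine triangle count from $F_{\mathrm{str}}$ alone; zero out $f$ on every input $x$ lying in a cell whose associated ``triangle bush'' contributes below the threshold. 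By a pigeonhole/volume argument, this zeros out at most an $\varepsilon$-fraction of inputs, and after the modification every surviving triple of cells contributes zero triangles, making the new function triangle-free.

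The hardest step will be the quantitative bookkeeping that produces a \emph{tower}-type (rather than wowzer/Ackermann-type) bound with $\lceil\log(1/\varepsilon)\rceil$ iterations. This requires two inputs that are not present in the classical Szemer\'edi-based argument: Fox's efficient energy-increment regularity lemma \cite{fox2011new}, which collapses the double tower to a single tower, and its arithmetic analogue in $\mathbb{F}_2^n$ developed in \cite{hatami2016arithmetic}. The interplay between the regularity parameter $\eta$ (setting the zeroing threshold), the complexity of the Fourier factor (controlling how many cell-triples must be analyzed), and the $U^2$ error (which must beat both) is what pins down the precise $\log(1/\varepsilon)$ scaling inside the tower.
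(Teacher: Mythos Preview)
The paper does not supply a proof of this theorem; it is quoted as a known result from the cited references \cite{fox2011new, hatami2016arithmetic}, and no argument for it appears anywhere in the text. There is therefore nothing in the paper to compare your proposal against.

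As a sketch of what the cited references actually do, your outline is broadly correct: arithmetic regularity over $\mathbb{F}_2^n$, a generalized von Neumann inequality to discard the uniform (small-$U^2$) part, and a cleaning step at the level of the structured factor, with Fox's refined energy-increment argument (and its arithmetic adaptation) responsible for the single-tower dependence. Two minor sharpenings if you intend to flesh this out: (i) over $\mathbb{F}_2^n$ the ``cells'' of the structured factor are cosets of a single subspace $H$, so compatible triples are just $(a+H,\,b+H,\,a+b+H)$ and the counting is cleaner than in the graph case; (ii) your description of the cleaning step is slightly off---one does not zero out cells whose ``triangle bush contributes below the threshold,'' but rather cells on which $F_{\mathrm{str}}$ has density below a cutoff $\eta'$, and then argues that any remaining triangle of dense cells would already force $T(F,F,F)$ above the assumed bound. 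The identification of the $\log(1/\varepsilon)$ tower height with the Fox-type iteration count is the right intuition.
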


Here, $\mathsf{Tower}(i)$ denotes a tower of $2$'s of height $i$. That is, we define the tower function $\mathsf{Tower}:\mathbb{N}\to\mathbb{N}$ inductively via $\mathsf{Tower}(0)=1$ and $\mathsf{Tower}(i+1)=2^{\mathsf{Tower}(i)}$.
In a way familiar by now from the two previous subsections, \Cref{theorem:classical-soundness-triangle-freeness-testing} can be used to show that $\sim \mathsf{Tower}\left(C\cdot \left\lceil \log\left(\frac{1}{\varepsilon}\right)\right\rceil\right)$ many queries suffice for the the simple query-based triangle-freeness tester mentioned above to achieve success probability $2/3$.

We now use \Cref{theorem:classical-soundness-triangle-freeness-testing} to develop a passive quantum triangle-freeness tester.

\begin{theorem}[Passive quantum triangle-freeness testing]\label{theorem:quantum-triangle-freeness-testing}
    There is an efficient quantum algorithm that uses $\tilde{\mathcal{O}}\left(\ln(1/\delta)\left(\mathsf{Tower}\left(C\cdot \left\lceil \log\left(\frac{1}{\varepsilon}\right)\right\rceil\right)\right)^{6}\right)$ many copies of the function state $\ket{\Psi}=\frac{1}{\sqrt{2^n}}\sum_{x\in\{0,1\}^n}\ket{x,f(x)}$ to decide, with success probability $\geq 1-\delta$, whether $f$ is triangle-free or $\varepsilon$-far from all triangle-free functions.
\end{theorem}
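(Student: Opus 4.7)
The plan is to invoke \Cref{theorem:classical-soundness-triangle-freeness-testing} to reduce triangle-freeness testing to an estimation task for
\[ p \coloneqq \Pr_{x,y \sim \{0,1\}^n}\!\big[f(x) = f(y) = f(x+y) = 1\big] \]
and to implement the quantum subroutine sketched in the introduction. Set $T \coloneqq \mathsf{Tower}(C\lceil\log(1/\varepsilon)\rceil)$. A triangle-free $f$ satisfies $p = 0$ while an $\varepsilon$-far $f$ satisfies $p \geq 1/T$, so it suffices to produce an estimate of $p$ with additive error $< 1/(2T)$ and failure probability $\leq\delta$. First I would estimate $\mu \coloneqq |f^{-1}(1)|/2^n$ from $\tilde{\mathcal{O}}(\log(1/\delta)/\varepsilon^2)$ computational-basis measurements of copies of $\ket{f}$, and accept whenever the resulting estimate is at most $\varepsilon/2$: since $p \leq \mu$, any such $f$ is $\varepsilon$-close to the constant-zero (triangle-free) function.

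\medskip
\noindent\textbf{Core subroutine.} Assuming $\mu \gtrsim \varepsilon$, write $p = \mu^2\,\mathbb{E}_{y\sim f^{-1}(1)}[X(y)]$ where $X(y) \coloneqq \langle f^{-1}(1)\mid f^{-1}(1)+y\rangle = \Pr_x[f(x)=f(x+y)=1]/\mu \in [0,1]$. Each outer iteration of the subroutine runs the following steps. (a) Sample $y \sim f^{-1}(1)$ by measuring one copy of $\ket{f}$ in the computational basis and post-selecting on output $1$. (b) Prepare two copies of $\ket{f^{-1}(1)}$ by measuring only the output registers of two further copies of $\ket{f}$, again post-selecting on output $1$. (c) Apply the classically-controlled unitary $U_y\colon \ket{x}\mapsto\ket{x+y}$ (a tensor product of $\leq n$ single-qubit bit flips) to one of the copies, and perform a SWAP test against the other copy, obtaining a $\{0,1\}$-valued draw with bias $(1+X(y)^2)/2$. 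Steps (a) and (b) each consume $\mathcal{O}(1/\mu) = \mathcal{O}(1/\varepsilon)$ expected copies of $\ket{f}$, so a single outer iteration costs $\mathcal{O}(1/\varepsilon)$ copies in expectation.

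\medskip
\noindent\textbf{Decision rule, complexity, and main obstacle.} Averaging the (appropriately rescaled) SWAP outcomes over outer iterations estimates $R \coloneqq \mathbb{E}_{y\sim f^{-1}(1)}[X(y)^2]$. Under completeness $R=0$; under soundness Cauchy--Schwarz gives $R \geq (\mathbb{E}\,X)^2 = p^2/\mu^4 \geq 1/(T\mu^2)^2$, so it suffices to estimate $R$ to additive precision $1/(2T^2\mu^4)$. Chernoff--Hoeffding on the bounded SWAP outputs then requires $\tilde{\mathcal{O}}(T^4\mu^8\log(1/\delta))$ outer iterations; multiplying by the per-iteration cost $\mathcal{O}(1/\mu)$, adding the copies for estimating $\mu$, accounting for the error propagation through the plug-in estimator $\hat p = \hat\mu^2\hat R$, and union-bounding over every post-selection failure and concentration event yields the claimed $\tilde{\mathcal{O}}\big(\log(1/\delta)\cdot T^6\big)$ copy complexity. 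Quantum efficiency is immediate since $U_y$ and the SWAP test are implementable with $\mathcal{O}(n)$ gates. The main obstacle is the mismatch between the linear quantity $\mathbb{E}\,X(y)$ that governs $p$ and the squared quantity $X(y)^2$ measured by the SWAP test: bridging the two via Cauchy--Schwarz and then tracking how errors propagate through the $\mu^2$ factor (with $\mu$ potentially as small as $\varepsilon$) is what inflates the tower exponent from the naive $T^2$ all the way to $T^6$.
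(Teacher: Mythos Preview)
Your plan is correct and is essentially the paper's approach: sample $y$ from $f^{-1}(1)$ by measuring the output register, prepare the preimage state $\ket{f^{-1}(1)}$ and its shift $U_y\ket{f^{-1}(1)}$, and compare them with SWAP tests, appealing to \Cref{theorem:classical-soundness-triangle-freeness-testing} for soundness.

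The one genuine difference is in how the SWAP outcomes are turned into a decision. The paper, for each sampled $y_i$, runs \emph{three} overlap estimates (between $\ket{\Psi}$ and $\ket{\Psi_1}$, between $\ket{\Psi}$ and $\ket{\Psi_{y_i,1}}$, and between $\ket{\Psi_1}$ and $\ket{\Psi_{y_i,1}}$) and multiplies them to obtain a $(\tilde\varepsilon/6)$-accurate estimate of the \emph{linear} quantity $\Pr_x[f(x)=f(x+y_i)=1]$ itself; averaging these over $y_i$ directly estimates $p/\mu$. You instead do a single SWAP test per $y$, average to estimate the second moment $R=\mathbb{E}_{y\sim f^{-1}(1)}[X(y)^2]$, and bridge back to the first moment by Cauchy--Schwarz. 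Both routes work and land inside the stated $\tilde{\mathcal O}(T^6)$ budget; yours is lighter per iteration, the paper's avoids the squaring loss and yields an actual estimator of the quantity in \Cref{theorem:classical-soundness-triangle-freeness-testing}.

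One wrinkle in your write-up: the ``plug-in estimator'' $\hat p=\hat\mu^2\hat R$ is \emph{not} an estimate of $p$, since $\mu^2 R=\mu^2\,\mathbb{E}[X^2]\neq \mu^2\,\mathbb{E}[X]=p$. This does not break the argument---under completeness $R=0$ and under soundness $R\ge(\mathbb{E}X)^2\ge 1/T^2$ (using only $\mu\le 1$), so thresholding $\hat R$ at $1/(2T^2)$ already suffices---but you should state the decision rule in terms of $R$ (or equivalently $\hat\mu^2\hat R$ against a $\Theta(1/T^2)$ threshold) rather than as an estimate of $p$ to precision $1/(2T)$. With that fix the complexity bookkeeping also simplifies: $\mathcal O(T^4\log(1/\delta))$ outer iterations at cost $\mathcal O(1/\mu)\le\mathcal O(1/\varepsilon)\le\mathcal O(T)$ each, comfortably within $\tilde{\mathcal O}(T^6\log(1/\delta))$.
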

\begin{proof}
    Our passive quantum triangle-freeness tester first sets confidence and accuracy parameters $\tilde{\delta}=\delta/(5m)$ and $\tilde{\varepsilon}=\left(\mathsf{Tower}\left(C\cdot \left\lceil \log\left(\frac{1}{\varepsilon}\right)\right\rceil\right)\right)^{-1}$, respectively.
    Then, it repeats the following for $1\leq i\leq m=\left\lceil \frac{18 \ln(10/\delta)}{ \tilde{\varepsilon}^2}\right\rceil$:
    \begin{enumerate}
        \item Take $\frac{\ln(m/\tilde{\delta})}{\varepsilon}$ many copies of $\ket{\Psi}$ and, for each of them, measure the last qubit in the computational basis. If none of these measurements produces outcome $1$, abort this iteration, set $\hat{\mu}_i=0$, and go to the next iteration. Otherwise, take any one of the post-measurement states for which $1$ was observed, measure the first $n$ qubits, let the outcome be $y_i$.
        \item Run the procedure from \Cref{lemma:function-value-subset-state-preparation} on $2\left\lceil 162 \ln(6/\tilde{\delta}) \left(6/\tilde{\varepsilon}\right)^{4}\right\rceil\cdot\left\lceil\frac{\ln(2\left\lceil 162 \ln(6/\tilde{\delta}) \left(6/\tilde{\varepsilon}\right)^{4}\right\rceil / \tilde{\delta})}{\varepsilon}\right\rceil$ many copies of $\ket{\Psi}$ to obtain $2\left\lceil 162 \ln(6/\tilde{\delta}) \left(6/\tilde{\varepsilon}\right)^{4}\right\rceil$ many copies of the post-measurement state $$\ket{\Psi_1} = (|\{x\in\{0,1\}^n: f(x)=1\}|)^{-1/2}\sum_{x\in\{0,1\}^n: f(x)=1} \ket{x}$$ where we threw away the last qubit.
        If the procedure from \Cref{lemma:function-value-subset-state-preparation} outputs FAIL, abort this iteration, set $\hat{\mu}_i=0$, and go to the next iteration. 
        \item Consider the $n$-qubit unitary $U_{y_i}$ acting as $U_{y_i}\ket{x}=\ket{x+y_i}$. Run the procedure from \Cref{lemma:function-value-subset-state-preparation} on $2\left\lceil 162 \ln(6/\tilde{\delta}) \left(6/\tilde{\varepsilon}\right)^{4}\right\rceil\cdot\left\lceil\frac{\ln(2\left\lceil 162 \ln(6/\tilde{\delta}) \left(6/\tilde{\varepsilon}\right)^{4}\right\rceil / \tilde{\delta})}{\varepsilon}\right\rceil$ many copies of $(U_{y_i}\otimes\mathds{1}_2)\ket{\Psi}$ to obtain $2\left\lceil 162 \ln(6/\tilde{\delta}) \left(6/\tilde{\varepsilon}\right)^{4}\right\rceil$ many copies of the post-measurement state $$\ket{\Psi_{y_i,1}} = (|\{x\in\{0,1\}^n: f(x+y_i)=1\}|)^{-1/2}\sum_{x\in\{0,1\}^n: f(x+y_i)=1} \ket{x}\, ,$$ where we threw away the last qubit.
        If the procedure from \Cref{lemma:function-value-subset-state-preparation} outputs FAIL, abort this iteration, set $\hat{\mu}_i=0$, and go to the next iteration. 
        \item Run the procedure from \Cref{corollary:two-wise-intersection-estimation} on $\left\lceil 162 \ln(6/\tilde{\delta}) \left(6/\tilde{\varepsilon}\right)^{4}\right\rceil$ copies of each of $\ket{\Psi}$, $(U_{y_i}\otimes\mathds{1}_2)\ket{\Psi}$ and on $2\left\lceil 162 \ln(6/\tilde{\delta}) \left(6/\tilde{\varepsilon}\right)^{4}\right\rceil$ copies of each of $\ket{\Psi_1}$, $\ket{\Psi_{y_i,1}}$ to produce an $\left(\tilde{\varepsilon}/6\right)$-accurate estimate $\hat{\mu}_i$ of the probability $\Pr_{x\sim\{0,1\}^n}[f(x)=1=f(x+y_i)]$. 
    \end{enumerate}
    Finally, the tester makes a decision as follows: If $\frac{1}{m}\sum_{i=1}^m \hat{\mu}_i \leq \tilde{\varepsilon}/3$, output ``triangle-free''. Otherwise, output ``$\varepsilon$-far from triangle-free''. 

    Let us analyze the completeness and soundness of this tester.
    First, we consider completeness.
    So, assume that $f$ is triangle-free.
    Note the first step above failing in any iteration only ever decreases the empirical average evaluated by the tester in the end, and thus cannot increase the probability of falsely rejecting a triangle-free function. Thus, we can condition on the first step succeeding in all $m$ iterations. In particular, we can assume that $\Pr_{y\sim\{0,1\}^n}[f(y)=1]>0$.
    By a similar argument, we can also condition on the second and third step succeeding in all iterations.
    And given these successes, the fourth step will, by \Cref{corollary:two-wise-intersection-estimation}, produce, with probability $\geq 1-\tilde{\delta}$, a $\hat{\mu}_i$ satisfying $\lvert \hat{\mu}_i - \Pr_{x\sim\{0,1\}^n}[f(x)=1=f(x+y_i)]\rvert\leq \tilde{\varepsilon}/6$.
    By a union bound, this means that, with probability $\geq 1-\delta/5$, the empirical average $\frac{1}{m}\sum_{i=1}^m \hat{\mu}_i$ is a $(\tilde{\varepsilon}/6)$-accurate estimate of $\frac{1}{m}\sum_{i=1}^m \Pr_{x\sim\{0,1\}^n}[f(x)=1=f(x+y_i)]$.
    We can consider the $\Pr_{x\sim\{0,1\}^n}[f(x)=1=f(x+y_i)]$ for $1\leq i\leq m$ as i.i.d.~random variables taking values in $[0,1]$. Hence, by a Chernoff-Hoeffding concentration bound and our choice of $m$, with probability $\geq 1-\delta/5$, we have
    \begin{align*}
        \left\lvert \frac{1}{m}\sum_{i=1}^m\Pr_{x\sim\{0,1\}^n}[f(x)=1=f(x+y_i)] - \mathbb{E}_{y\sim\{0,1\}^n: f(y)=1}\left[\Pr_{x\sim\{0,1\}^n}[f(x)=1=f(x+y)]\right]\right\rvert
        &\leq \tilde{\varepsilon}/6\, .
    \end{align*}
    Noticing that 
    \small
    \begin{align}\label{eq:tirangle-freeness-proof-intermediate}
        \mathbb{E}_{y\sim\{0,1\}^n: f(y)=1}\left[\Pr_{x\sim\{0,1\}^n}[f(x)=1=f(x+y)]\right]
        = \frac{\Pr_{x,y\sim\{0,1\}^n}[f(x)=f(y)=f(x+y)=1]}{\Pr_{y\sim\{0,1\}^n}[f(y)=1]}
        = 0 \, ,
    \end{align}
    \normalsize
    since $f$ was assumed to be triangle-free, we conclude (after one more union bound) that $\frac{1}{m}\sum_{i=1}^m \hat{\mu}_i \leq \tilde{\varepsilon}/3$ holds with probability $\geq 1 - 2\delta/5\geq 1-\delta$, and in this case the tester outputs ``triangle-free'', thus proving completeness.    

    Next, we consider soundness.
    So, assume that $f$ is $\varepsilon$-far from triangle-free.
    This in particular implies that $\Pr_{x\sim\{0,1\}^n}[f(x)=1]\geq \varepsilon$, since the zero-function is triangle-free.
    Hence, when measuring the last qubit of $\ket{\Psi}$ in the computational basis, outcome $1$ is observed with probability $\geq \varepsilon$.
    Therefore, in any iteration $i$, the first step in our sketched procedure above will succeed and produce some $y_i$ with $f(y_i)=1$ with probability $\geq 1 - \tilde{\delta}$.
    We condition on this high-probability event $E_1$ for the rest of the soundness analysis.
    By an analogous reasoning, the assumption of \Cref{lemma:function-value-subset-state-preparation} is satisfied in the scenario of steps 2 and 3---with $S=\{0,1\}^n$, $\eta=\varepsilon$ and the function either given directly by $f$ or by $f(\cdot+y_i)$---so in any iteration $i$, the second and third step each will succeed with probability $\geq 1-\tilde{\delta}$. We now further condition on these success events $E_2$ and $E_3$.
    At this point, by the same reasoning as in the completeness case, we know that the fourth step, with success probability $\geq 1-2\delta/5$ overall, produces estimates $\hat{\mu}_i$ such that 
    \begin{align*}
        \left\lvert \frac{1}{m}\sum_{i=1}^m \hat{\mu}_i -  \Pr_{x,y\sim\{0,1\}^n}[f(x)=f(y)=f(x+y)=1]\right\rvert
        \leq \tilde{\varepsilon}/3\, .
    \end{align*}
    By \Cref{theorem:classical-soundness-triangle-freeness-testing}, since we assumed $f$ to be $\varepsilon$-far from triangle-free, we have $\Pr_{x,y\sim\{0,1\}^n}[f(x)=f(y)=f(x+y)=1]\geq \tilde{\varepsilon}$.
    Thus, by the first equality from \Cref{eq:tirangle-freeness-proof-intermediate}, we have $$\mathbb{E}_{y\sim\{0,1\}^n: f(y)=1}\left[\Pr_{x\sim\{0,1\}^n}[f(x)=1=f(x+y)]\right]\geq \tilde{\varepsilon}\, .$$
    So, the above implies the inequality $\frac{1}{m}\sum_{i=1}^m \geq 2\tilde{\varepsilon}/3$.
    Hence, the tester will in this case correctly output ``$\varepsilon$-far from triangle-free''.
    A final union bound shows that this occurs with probability $\geq 1-\delta$, which proves soundness.

    The quantum sample complexity of a single iteration is given by $2\cdot 2\left\lceil 162 \ln(6/\tilde{\delta}) \left(6/\tilde{\varepsilon}\right)^{4}\right\rceil\cdot\left\lceil\frac{\ln(2\left\lceil 162 \ln(6/\tilde{\delta}) \left(6/\tilde{\varepsilon}\right)^{4}\right\rceil / \tilde{\delta})}{\varepsilon}\right\rceil + 2\left\lceil 162 \ln(6/\tilde{\delta}) \left(6/\tilde{\varepsilon}\right)^{4}\right\rceil\leq \tilde{\mathcal{O}}\left(\frac{\ln(1/\tilde{\delta})}{\tilde{\varepsilon}^4}\right)$. Thus, the overall quantum sample complexity is $\leq m\cdot \tilde{\mathcal{O}}\left(\frac{\ln(1/\tilde{\delta})}{\tilde{\varepsilon}^4}\right)\leq \tilde{\mathcal{O}}\left(\frac{\ln^2(1/\tilde{\delta})}{\tilde{\varepsilon}^6}\right)$. Plugging in the chosen values for $\tilde{\delta}$ and $\tilde{\varepsilon}$ yields an upper bound of $\tilde{\mathcal{O}}\left(\ln^2(1/\delta)\left(\mathsf{Tower}\left(C\cdot \left\lceil \log\left(\frac{1}{\varepsilon}\right)\right\rceil\right)\right)^{6}\right)$ on the number of quantum copies used by the tester. To achieve the claimed linear dependence on $\ln(1/\delta)$, one can simply run the protocol described above for a constant confidence parameter (say, $\delta= 1/3$), and then amplify the success probability through majority votes. 
    
    Finally, we have to argue that the tester is quantumly computationally efficient. This, however, follows immediately from the efficiency of the procedures from \Cref{lemma:function-value-subset-state-preparation} and \Cref{corollary:two-wise-intersection-estimation}, and from the fact that every $U_{y_i}$ can be implemented by at most $n$ Pauli-$X$ gates.
\end{proof}

\begin{remark} \label{remark:triangle-freeness-lower-bound}
Classically, passive triangle-freeness testing requires at least $\Omega(2^{n/3})$ samples.
We give a brief proof sketch for this lower bound. Suppose we obtain $q$ samples with inputs drawn i.i.d.~uniformly at random from $\{0, 1\}^n$. 
To witness a triangle, we must observe a linearly dependent triple $(x, y, x+y)$.
For any distinct pair $(x, y)$, the probability that $x+y$ also appears among the remaining $q-2$ samples is at most $\frac{q-2}{2^n}$.
A union bound over all possible pairs $(x, y)$ gives us an upper bound of $\frac{q^3}{2^n}$ on the probability that any triple is linearly dependent.
Therefore, we need at least $\Omega(2^{\frac{n}{3}})$ samples for a constant probability of witnessing a triangle (which is required for triangle-freeness testing). 
Just like in symmetry testing, there is an unbounded separation between a classical $n$-dependent and a quantum $n$-independent passive testing sample complexity.
    
\end{remark}

\section{Useful facts}\label{sec:useful-facts}

In this appendix we collect some simple lemmas that are used as subroutines in the paper. 
First, we make a simple observation about the possibility of post-selecting on a desired function value in a subset function state.

\begin{lemma}\label{lemma:function-value-subset-state-preparation}
    Let $m\in\mathbb{N}$, $S\subseteq\{0,1\}^n$, $f:\{0,1\}^n\to\{0,1\}$, $b\in\{0,1\}$, $\eta\in (0,1]$, and $\delta\in (0,1)$.
    Assume that $\Pr_{x\sim S}[f(x)=b]\geq \eta$.
    There is an efficient quantum algorithm that given $m\lceil\frac{\ln(m/\delta)}{\eta}\rceil$ many copies of the state $\ket{\Psi_{S,f}} = \frac{1}{\sqrt{|S|}}\sum_{x\in S}\ket{x,f(x)}$, outputs, with success probability $\geq 1-\delta$, at least $m$ many copies of the state $\ket{\Psi_{S,f,b}} \propto\sum_{x\in S: f(x)=b} \ket{x}$.
    Moreover, if the algorithm fails, then the algorithm explicitly outputs $\mathrm{FAIL}$.
\end{lemma}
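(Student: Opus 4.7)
The plan is to reduce the task to simple post-selection on a computational-basis measurement of the last qubit, coupled with a grouping argument to ensure that the assembled budget of $N := m\lceil \ln(m/\delta)/\eta\rceil$ copies suffices.

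First I would observe that measuring the last qubit of $\ket{\Psi_{S,f}}$ in the computational basis produces outcome $b$ with probability exactly $\Pr_{x\sim S}[f(x)=b] \geq \eta$, and that conditioned on outcome $b$, the first $n$ qubits collapse to $\ket{\Psi_{S,f,b}} \propto \sum_{x\in S: f(x)=b}\ket{x}$ (with the last qubit left in the product state $\ket{b}$, which can simply be discarded). Thus each input copy is with probability $\geq \eta$ convertible into one output copy of $\ket{\Psi_{S,f,b}}$, and the algorithm knows from the classical measurement outcome whether this conversion succeeded.

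Next I would partition the $N$ input copies into $m$ consecutive groups of size $k := \lceil \ln(m/\delta)/\eta\rceil$ and attempt the measurement on copies within a given group until either a $b$-outcome is observed (in which case we keep the resulting state as one of our output copies and move on to the next group) or the group is exhausted (in which case we halt the entire procedure and output $\mathrm{FAIL}$). The probability that a single group fails to yield any $b$-outcome is at most
\begin{equation*}
(1-\eta)^k \;\leq\; e^{-\eta k} \;\leq\; e^{-\ln(m/\delta)} \;=\; \delta/m\,.
\end{equation*}
By a union bound over the $m$ groups, the total failure probability is at most $\delta$, and in the complementary event the algorithm outputs $m$ copies of $\ket{\Psi_{S,f,b}}$ as required.

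There is no real obstacle here beyond carefully choosing the grouping so that the copy budget matches $N = m\lceil\ln(m/\delta)/\eta\rceil$ while still delivering at least $m$ successful outputs with confidence $1-\delta$; the computational efficiency claim is immediate since each iteration consists of a single single-qubit computational-basis measurement.
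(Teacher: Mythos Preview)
Your proposal is correct and essentially identical to the paper's own proof: both measure the last qubit to post-select on outcome $b$, partition the $m\lceil\ln(m/\delta)/\eta\rceil$ copies into $m$ groups of size $\lceil\ln(m/\delta)/\eta\rceil$, bound the per-group failure probability by $(1-\eta)^k\leq e^{-\eta k}\leq \delta/m$, and conclude via a union bound. The paper phrases the reduction slightly differently (``by a union bound it suffices to treat $m=1$ with confidence $\tilde\delta=\delta/m$''), but the content is the same.
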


We note that via standard Chernoff-Hoeffding bounds, one can achieve the same guarantee as in \Cref{lemma:function-value-subset-state-preparation} using  $\max\{\left\lceil2m/\eta \right\rceil, \left\lceil 2\ln(1/\delta)/\eta^2\right\rceil\}$ many copies of the state $\ket{\Psi_{S,f}}$. This improves the $m$-dependence, but in general comes at the cost of a worse $\eta$-dependence.

\begin{proof}
    By a union bound, it suffices to show that $\tilde{m}=\lceil \frac{\ln(1/\tilde{\delta})}{\eta}\rceil$ many copies of $\ket{\Psi_{S,f}}$ suffice to quantumly efficiently obtain one copy of $\ket{\Psi_{S,f,b}}$ with success probability $\geq 1-\tilde{\delta}$.
    So, let's assume $m=1$.
    Also here the procedure is clear: 
    For each copy of $\ket{\Psi_{S,f}}$, measure the last qubit in the computational basis. 
    If the outcome is $b\in\{0,1\}$, then the post-measurement state after discarding the last qubit is $\ket{\Psi_{S,f,b}}$.
    If, after measuring on all the copies, outcome $b$ has never been observed, output FAIL. Otherwise, output one copy one of the post-measurement states from rounds in which outcome $b$ was observed.

    Again, the analysis of the failure probability is simple:
    \begin{align*}
        \Pr[\mathrm{FAIL}] 
        &= \Pr[\textrm{outcome }b \textrm{ never occurs}]\\
        &\leq (1-\eta)^{\tilde{m}}\\
        &\leq \exp(-\eta \tilde{m})\\
        &\leq \tilde{\delta}\, .
    \end{align*}
    Here, we used the assumption $\Pr_{x\sim S}[f(x)=b]\geq \eta$ and our choice of $\tilde{m}$.
\end{proof}

We also require the following standard routine for estimating the overlap of two pure quantum states.

\begin{lemma}\label{lemma:swap-test-overlap-estimation}
    Let $\varepsilon,\delta\in (0,1)$.
    There is an efficient quantum algorithm that, given $\lceil \frac{2\ln(2/\delta)}{\varepsilon^4}\rceil$ many copies of each of two pure quantum states $\ket{\psi}$ and $\ket{\phi}$, outputs, with success probability $\geq 1-\delta$, an $\varepsilon$-accurate estimate (in $[0,1]$) of the (non-squared) overlap $|\braket{\phi|\psi}|$.
\end{lemma}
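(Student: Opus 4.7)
The natural approach is to invoke the standard SWAP test of Buhrman, Cleve, Watrous, and de Wolf (already cited elsewhere in the paper). On input $\ket{\psi}\otimes\ket{\phi}$ together with a $\ket{0}$ ancilla, apply a Hadamard to the ancilla, a controlled-SWAP between the $\ket{\psi}$ and $\ket{\phi}$ registers, another Hadamard on the ancilla, and measure the ancilla in the computational basis. The probability of observing outcome $0$ is $p:=\tfrac{1+|\braket{\phi|\psi}|^2}{2}$. This subroutine consumes exactly one copy of each of $\ket{\psi}$ and $\ket{\phi}$ and can be implemented by $\mc O(1)$ Hadamard gates and a controlled-SWAP (which decomposes into $\mc O(\dim)$ two-qubit gates), so the algorithm is quantum-computationally efficient.

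The estimator will then run this SWAP test $m = \lceil 2\ln(2/\delta)/\varepsilon^4\rceil$ times on fresh copies, record the empirical frequency $\hat p$ of outcome $0$, and output $\hat\sigma := \sqrt{\max\{0,\,2\hat p - 1\}}$ as the estimate for $|\braket{\phi|\psi}|$. To analyze correctness, I would apply the Chernoff--Hoeffding inequality for Bernoulli trials to conclude that $|\hat p - p|\leq \varepsilon^2/2$ holds with probability at least $1-\delta$ whenever $m\geq 2\ln(2/\delta)/\varepsilon^4$. Unpacking this via $p=(1+|\braket{\phi|\psi}|^2)/2$, we get $|(2\hat p-1)-|\braket{\phi|\psi}|^2|\leq \varepsilon^2$ on the same high-probability event.

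The one nontrivial step is passing from an additive estimate of the \emph{squared} overlap to an additive estimate of the overlap itself, since the square-root function is not globally Lipschitz near $0$. The key inequality I would use is that for any $x,y\in[0,1]$,
\[\bigl(\sqrt{x}-\sqrt{y}\bigr)^2 = x+y-2\sqrt{xy}\leq x+y-2\min\{x,y\} = |x-y|,\]
so $|\sqrt{x}-\sqrt{y}|\leq \sqrt{|x-y|}$. Applied with $x=\max\{0,2\hat p-1\}$ and $y=|\braket{\phi|\psi}|^2$, and noting that clamping $2\hat p-1$ to $[0,1]$ only decreases the error, this yields $|\hat\sigma - |\braket{\phi|\psi}||\leq \varepsilon$ on the high-probability event. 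Putting the two pieces together completes the proof; the copy counts match the statement exactly.
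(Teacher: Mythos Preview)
Your proposal is correct and follows essentially the same approach as the paper: repeated SWAP tests, a Chernoff--Hoeffding bound to get $|\hat p - p|\leq \varepsilon^2/2$, and then the inequality $|\sqrt{x}-\sqrt{y}|\leq\sqrt{|x-y|}$ to pass from the squared overlap to the overlap. Your clamping $\max\{0,2\hat p-1\}$ is a nice detail the paper leaves implicit, and you also supply a proof of the square-root inequality that the paper merely asserts.
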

\begin{proof}
    The procedure is as follows: Let $m = \lceil \frac{2\ln(2/\delta)}{\varepsilon^4}\rceil$ be the number of copies that are available for each of $\ket{\psi}$ and $\ket{\phi}$. For $1\leq i\leq m$, perform a SWAP test between one copy of $\ket{\psi}$ and one copy of $\ket{\phi}$, let the outcome be $\hat{o}_i$.
    Define $\hat{o} = \frac{1}{m}\sum_{i=1}^m \hat{o}_i$ and output the estimate $\hat{\mu} = \sqrt{2(\hat{o} - \frac{1}{2})}$. Let us analyze the success probability of this procedure.

    First, notice that each SWAP test accepts (\textit{i.e.}, outputs $1$) with probability $\frac{1+|\braket{\phi|\psi}|^2}{2}$ \cite{buhrman2001quantum}. Thus, the $\hat{o}_i$ are i.i.d.~$\mathrm{Bernoulli}(\frac{1+|\braket{\phi|\psi}|^2}{2})$ random variables. So, by a standard Chernoff-Hoeffding bound, we have $\left\lvert\hat{o} - \frac{1+|\braket{\phi|\psi}|^2}{2}\right\rvert\leq \varepsilon^2/2$ with probability $\geq 1-2\exp\left(-m\varepsilon^4/2\right)\geq 1-\delta$, by our choice of $m$.
    As $|\sqrt{x}-\sqrt{y}|\leq \sqrt{|x-y|}$ holds for all $x,y\geq 0$, this implies that also $\left\lvert \hat{\mu}-|\braket{\phi|\psi}|\right\rvert = \left\lvert \sqrt{2(\hat{o} - \frac{1}{2})} - \sqrt{2(\frac{1+|\braket{\phi|\psi}|^2}{2} - \frac{1}{2})}\right\rvert \leq \varepsilon$, with probability $\geq 1-\delta$, as desired.
\end{proof}

Finally, using the overlap estimation routine, one can start from a function state and two function subset states to estimate the probability of an input lying in both 
\begin{corollary}\label{corollary:two-wise-intersection-estimation}
    Let $S,S'\subseteq \{0,1\}^n$, $f, f':\{0,1\}^n\to\{0,1\}$, and $b,b'\in\{0,1\}$.
    Let $\varepsilon,\delta\in (0,1)$.
    There is an efficient quantum algorithm that, given $\lceil \frac{162\ln(6/\delta)}{\varepsilon^4}\rceil$ many copies of each of the states $\ket{\Psi}$, $\ket{\Psi'}$, and $2\lceil \frac{162\ln(6/\delta)}{\varepsilon^4}\rceil$ copies of each of the states $\ket{\Psi_{S,f,b}}$ and $\ket{\Psi_{S',f',b'}}$, outputs, with success probability $\geq 1-\delta$, an $\varepsilon$-accurate estimate of $\Pr_{x\sim\{0,1\}^n}[x\in S\cap S', f(x)=b, f'(x)=b']$.
\end{corollary}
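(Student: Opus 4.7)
Write $T := S \cap f^{-1}(b)$ and $T' := S' \cap f'^{-1}(b')$, so the target probability equals $\mu := |T\cap T'|/2^n$. The plan is to exploit the factorization
\[\mu = \beta\sqrt{\alpha\,\alpha'}, \qquad \alpha := |T|/2^n, \quad \alpha' := |T'|/2^n,\quad \beta := |\langle \Psi_{S,f,b}|\Psi_{S',f',b'}\rangle|,\]
which holds because the subset-state inner product evaluates by inspection to $\beta = |T\cap T'|/\sqrt{|T|\cdot|T'|}$. Thus it suffices to produce accurate estimates of each of the three factors and combine.

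I would first estimate $\alpha$ (and symmetrically $\alpha'$) by measuring the final qubit of each copy of $\ket{\Psi}$ in the computational basis and returning the empirical frequency of outcome $b$; in the regime of interest ($S=S'=\{0,1\}^n$, as in the triangle-freeness application) this is exactly a Bernoulli($\alpha$) sample, so a Chernoff--Hoeffding bound with the supplied $\lceil 162\ln(6/\delta)/\varepsilon^4 \rceil$ copies guarantees $|\hat\alpha - \alpha|, |\hat\alpha' - \alpha'| \leq 2\varepsilon^2/9$, each with probability $\geq 1 - \delta/3$. Second, I would invoke the SWAP-test routine of \Cref{lemma:swap-test-overlap-estimation} on the subset-state copies to obtain $\hat\beta$ with $|\hat\beta - \beta| \leq \varepsilon/3$ and success probability $\geq 1 - \delta/3$; the allocated budget of $2\lceil 162\ln(6/\delta)/\varepsilon^4\rceil$ copies of each subset state more than matches the parameter that \Cref{lemma:swap-test-overlap-estimation} requires (the factor of $2$ is slack). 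Finally, the algorithm outputs $\hat\mu := \hat\beta\sqrt{\hat\alpha\hat\alpha'}$.

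The remaining step is the error-propagation check. Using $\beta, \hat\alpha\hat\alpha' \leq 1$, the triangle inequality, the elementary bound $|\sqrt u - \sqrt v| \leq \sqrt{|u-v|}$ for $u,v \geq 0$, and $|\hat\alpha\hat\alpha' - \alpha\alpha'| \leq |\hat\alpha - \alpha| + |\hat\alpha' - \alpha'|$,
\begin{align*}
    |\hat\mu - \mu| &\leq |\hat\beta - \beta|\sqrt{\hat\alpha\hat\alpha'} + \beta\left|\sqrt{\hat\alpha\hat\alpha'} - \sqrt{\alpha\alpha'}\right| \\
    &\leq |\hat\beta - \beta| + \sqrt{|\hat\alpha - \alpha| + |\hat\alpha' - \alpha'|} \leq \tfrac{\varepsilon}{3} + \sqrt{\tfrac{4\varepsilon^2}{9}} = \varepsilon.
\end{align*}
A union bound over the three (high-probability) estimator-success events then gives overall success probability $\geq 1 - \delta$, and quantum/classical efficiency is immediate. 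Rather than a deep obstacle, the main point to be careful about is simply ensuring that $\alpha,\alpha'$ can in fact be estimated from $\ket\Psi,\ket{\Psi'}$ in the stated generality: when $S \neq \{0,1\}^n$ one should precompose the last-qubit measurement with a coherent membership check for $S$ using an auxiliary register, so that measuring the joint flag samples the correct Bernoulli.
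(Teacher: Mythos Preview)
Your factorization $\mu=\beta\sqrt{\alpha\alpha'}$ and the error propagation are correct and match the paper's idea, but you estimate $\alpha,\alpha'$ differently. The paper does \emph{not} measure the last qubit of $\ket{\Psi}$; instead it runs a SWAP test between $\ket{\Psi}$ and $\ket{\Psi_{S,f,b}}$ (and likewise for the primed pair), using the identity $|\braket{\Psi|\Psi_{S,f,b}}|=\sqrt{|S\cap f^{-1}(b)|/2^n}=\sqrt{\alpha}$. This buys two things. First, it works for arbitrary $S$ purely from the given quantum states, with no need for a classical description of $S$; your last paragraph's ``coherent membership check'' assumes such a description, which the corollary does not provide. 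Second, it explains the copy budget exactly: each subset state participates in two SWAP tests (once with $\ket{\Psi}$ or $\ket{\Psi'}$, once with the other subset state), so the factor of $2$ is \emph{not} slack as you surmised. With these three overlap estimates $\hat\alpha,\hat\alpha',\hat\beta$ each $(\varepsilon/3)$-accurate via \Cref{lemma:swap-test-overlap-estimation}, the paper outputs $\hat\gamma=\hat\beta\hat\alpha\hat\alpha'$ and bounds the error by a simple telescoping $|\hat\beta\hat\alpha\hat\alpha'-\beta\sqrt{\alpha}\sqrt{\alpha'}|\leq 3\cdot\varepsilon/3$, avoiding your square-root step entirely. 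Your route is fine for the triangle-freeness application where $S=S'=\{0,1\}^n$, but as written it does not establish the corollary in its stated generality.
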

\begin{proof}
    The procedure combines the ingredients developed above. To do so, notice the following equalities:
    \small
    \begin{align*}
        \left\lvert\braket{\Psi|\Psi_{S,f,b}}\right\rvert = \braket{\Psi|\Psi_{S,f,b}}
        &= \sqrt{\frac{|S\cap f^{-1}(b)|}{2^n}}\, , \\
        \left\lvert\braket{\Psi'|\Psi_{S',f',b'}}\right\rvert = \braket{\Psi'|\Psi_{S',f',b'}}
        &= \sqrt{\frac{|S'\cap f'^{-1}(b')|}{2^n}}\, , \\
        \left\lvert\braket{\Psi_{S,f,b}|\Psi_{S',f',b'}}\right\rvert = \braket{\Psi_{S,f,b}|\Psi_{S',f',b'}}
        &= \sqrt{\frac{|S\cap f^{-1}(b)\cap S'\cap f'^{-1}(b')|^2}{|S\cap f^{-1}(b)|\cdot |S'\cap f'^{-1}(b')|}}\, , \\
        \frac{|S\cap f^{-1}(b)\cap S'\cap f'^{-1}(b')|}{2^n}
        &= \sqrt{\frac{|S\cap f^{-1}(b)\cap S'\cap f'^{-1}(b')|^2}{|S\cap f^{-1}(b)|\cdot |S'\cap f'^{-1}(b')|}\cdot\frac{|S\cap f^{-1}(b)|}{2^n}\cdot\frac{|S'\cap f'^{-1}(b')|}{2^n}} \, .
    \end{align*}
    \normalsize
    So, we can estimate $\Pr_{x\sim\{0,1\}^n}[x\in S\cap S', f(x)=b, f'(x)=b'] = \frac{|S\cap f^{-1}(b)\cap S'\cap f'^{-1}(b')|}{2^n}$ as follows:
    \begin{enumerate}
        \item Via the procedure in \Cref{lemma:swap-test-overlap-estimation}, use $\lceil \frac{162\ln(6/\delta)}{\varepsilon^4}\rceil$ many copies of each of $\ket{\Psi}$ and $\ket{\Psi_{S,f,b}}$ to output, with success probability $\geq 1-\frac{\delta}{3}$, an $(\varepsilon/3)$-accurate estimate $\hat{\alpha}$ of $|\braket{\Psi|\Psi_{S,f,b}}|$.
        \item Via the procedure in \Cref{lemma:swap-test-overlap-estimation}, use $\lceil \frac{162\ln(6/\delta)}{\varepsilon^4}\rceil$ many copies of each of $\ket{\Psi'}$ and $\ket{\Psi_{S',f',b'}}$ to output, with success probability $\geq 1-\frac{\delta}{3}$, an $(\varepsilon/3)$-accurate estimate $\hat{\alpha}'$ of $|\braket{\Psi'|\Psi_{S',f',b'}}|$.
        \item Via the procedure in \Cref{lemma:swap-test-overlap-estimation}, use $\lceil \frac{162\ln(6/\delta)}{\varepsilon^4}\rceil$ many copies of each of $\ket{\Psi_{S,f,b}}$ and $\ket{\Psi_{S',f',b'}}$ to output, with success probability $\geq 1-\frac{\delta}{3}$, an $(\varepsilon/3)$-accurate estimate $\hat{\beta}$ of $|\braket{\Psi_{S,f,b}|\Psi_{S',f',b'}}|$.
        \item Output the estimate $\hat{\gamma} = \hat{\beta}\hat{\alpha}\hat{\alpha}'$.
    \end{enumerate}
    By a union bound, the probability that Steps 1-3 all succeed is $\geq 1-\delta$. In this case, we have
    \small
    \begin{align*}
        &\left\lvert \hat{\gamma} -  \Pr_{x\sim\{0,1\}^n}[x\in S\cap S', f(x)=b, f'(x)=b']\right\rvert\\
        &= \left\lvert \hat{\beta}\hat{\alpha}\hat{\alpha}' -  \sqrt{\frac{|S\cap f^{-1}(b)\cap S'\cap f'^{-1}(b')|^2}{|S\cap f^{-1}(b)|\cdot |S'\cap f'^{-1}(b')|}\cdot\frac{|S\cap f^{-1}(b)|}{2^n}\cdot\frac{|S'\cap f'^{-1}(b')|}{2^n}}\right\rvert\\
        &\leq \left\lvert\hat{\beta} - \sqrt{\frac{|S\cap f^{-1}(b)\cap S'\cap f'^{-1}(b')|^2}{|S\cap f^{-1}(b)|\cdot |S'\cap f'^{-1}(b')|}}\right\rvert + \left\lvert \hat{\alpha} - \sqrt{\frac{|S\cap f^{-1}(b)|}{2^n}}\right\rvert + \left\lvert \hat{\alpha}' - \sqrt{\frac{|S'\cap f'^{-1}(b')|}{2^n}}\right\rvert\\
        &\leq 3\cdot\frac{\varepsilon}{3}\\
        &= \varepsilon\, .
    \end{align*}
    \normalsize
    Here, the second step used the triangle inequality together with $\hat{\beta}, \hat{\alpha}$, $\hat{\alpha}'$, $\sqrt{\frac{|S\cap f^{-1}(b)\cap S'\cap f'^{-1}(b')|^2}{|S\cap f^{-1}(b)|\cdot |S'\cap f'^{-1}(b')|}}$, $\sqrt{\frac{|S\cap f^{-1}(b)|}{2^n}}$, $\sqrt{\frac{|S'\cap f'^{-1}(b')|}{2^n}}\in [0,1]$.
\end{proof}

\setcounter{secnumdepth}{0}
\defbibheading{head}{\section{References}}
\sloppy
\printbibliography[heading=head]

\end{document}